\patchcmd{\maketitle}{\@copyrightspace}{}{}{}
\newtheorem{theorem}{Theorem}
\newtheorem{conjecture}{Conjecture}
\newtheorem{lemma}{Lemma}
\newtheorem{corollary}{Corollary}
\newtheorem{definition}{Definition}
\newcommand{\myfill}{\hspace*{\fill}}
\newcommand{\X}{\mathbf{X}}
\newcommand{\x}{\mathbf{x}}
\newcommand{\Y}{\mathbf{Y}}
\newcommand{\y}{\mathbf{y}}
\newcommand{\Z}{\mathbf{Z}}
\renewcommand{\P}{\mathbb{P}}
\newcommand{\E}{\mathbb{E}}
\newcommand{\N}{\mathbb{N}}
\newcommand{\R}{\mathbb{R}}
\newcommand{\ind}[1]{\textup{\textbf{1}}\left\{#1\right\}}
\renewcommand{\succeq}{\geq_{\textup{st}}}
\renewcommand{\preceq}{\leq_{\textup{st}}}
\newcommand{\figwidth}{1\columnwidth}
\begin{document}

\CopyrightYear{2016} 
\setcopyright{acmlicensed}
\conferenceinfo{SIGMETRICS '16,}{June 14 - 18, 2016, Antibes Juan-Les-Pins, France}
\isbn{978-1-4503-4266-7/16/06}\acmPrice{\$15.00}
\doi{http://dx.doi.org/10.1145/2896377.2901475}

\clubpenalty=10000 
\widowpenalty = 10000


\title{On the Duration and Intensity of Competitions in \\Nonlinear P\'olya Urn Processes with Fitness}

%
%
%
%
%

\numberofauthors{2} 
%
\author{
%
%
\alignauthor
Bo Jiang\\
      \affaddr{College of Information and Computer Sciences}\\
       \affaddr{University of Massachusetts}\\
       \affaddr{Amherst MA, USA}\\
       \email{bjiang@cs.umass.edu}
\alignauthor
Daniel R.~Figueiredo\\
       \affaddr{COPPE/PESC}\\
       \affaddr{Federal University of Rio de Janeiro (UFRJ)}\\
       \affaddr{Rio de Janeiro, Brazil}\\
       \email{daniel@land.ufrj.br}
\and
\alignauthor Bruno Ribeiro\\
      \affaddr{Department of Computer Science}\\
       \affaddr{Purdue University}\\
       \affaddr{West Lafayette IN, USA}\\
       \email{ribeiro@cs.purdue.edu}
\alignauthor Don Towsley\\
      \affaddr{College of Information and Computer Sciences}\\
       \affaddr{University of Massachusetts}\\
       \affaddr{Amherst MA, USA}\\
       \email{towsley@cs.umass.edu}
}
\date{30 July 1999}

\maketitle

\begin{abstract}
Cumulative advantage (CA) refers to the notion that accumulated resources foster the accumulation of further resources in competitions, a phenomenon that has been empirically observed in various contexts. The oldest and arguably simplest mathematical model that embodies this general principle is the P{\'o}lya urn process, which finds applications in a myriad of problems. The original model captures the dynamics of competitions between two equally fit agents under linear CA effects, 
which can be readily generalized to incorporate different fitnesses and nonlinear CA effects. We study two statistics of competitions under the generalized model, namely duration (i.e., time of the last tie) and intensity (i.e., number of ties). We give rigorous mathematical characterizations of the tail distributions of both duration and intensity under the various regimes for fitness and nonlinearity, which reveal very interesting behaviors. For example, fitness superiority induces much shorter competitions in the sublinear regime while much longer competitions in the superlinear regime. Our findings can shed light on the application of P{\'o}lya urn processes in  more general contexts where fitness and nonlinearity may be present.
\end{abstract}

%
%
%

%
%

%
%
\printccsdesc


\keywords{Competition; Cumulative advantage; Fitness; Nonlinearity; P\'olya urn; Duration; Intensity} 


\section{Introduction}\label{sec:intro}

Cumulative advantage (CA) is a ubiquitous phenomenon observed in various systems where agents compete for resources. CA alludes to the capacity that accumulated resources have to foster accumulation of more resources, a principle that appears in the literature under various names such as cumulative advantage~\cite{SollaPrice76}, preferential attachment~\cite{Barab99}, ``the rich get richer'', Matthew effect~\cite{diprete2006cumulative,Merton68}, path-dependent increasing returns~\cite{Arthur:94}, and processes with feedback~\cite{Drinea:2002,oliveira2009onset}. 

The oldest and arguably simplest model that embodies CA is the P{\'o}lya urn process, which has been widely studied and applied~\cite{Polya23,Mahm08,Peman07}. In particular, one can find applications of P{\'o}lya urn model in problems that arise in most areas of science, including biology, physics, economics, and of course, computer science, with a recent example described in Section~\ref{sec:social_tagging}. In its simplest form, a P{\'o}lya urn has balls with two colors. At each round a ball is chosen uniformly at random from the urn and returned to the urn with another ball of the same color, increasing the number of balls in the urn by one. Note that drawing balls of a given color increases the chance of drawing more balls of the same color, thus embodying the CA phenomenon. 

Beyond CA, an observed and recognized characteristics in competitions is  {\em fitness}, which refers to the inherent ability of an agent to accumulate resources that does not depend on the amount of resources already accumulated. A second and more recent consideration, which has also been observed in some contexts, is that the feedback induced by accumulated resources may not be linear as in the simple P\'olya urn model. In particular, the propensity to accumulate further resources can be nonlinear in the amount of resources already accumulated. These two generalizations can be easily accommodated in the P{\'o}lya urn model by assigning a fixed fitness to each color and by selecting balls not uniformly at random from the urn. Such a model is the object under consideration in this paper (formal definition in Section~\ref{sec:model}).

Two fundamental characteristics of competitions are their duration and intensity \cite{jiang2015competition}. Duration can be measured as the time required for an agent to take the lead forever, while intensity as the number of times agents tie for the leadership.  
These two metrics have recently been studied for linear P\'olya urn processes with fitness in \cite{jiang2015competition}. The question that we ask here is: What is the impact of introducing nonlinearity in the CA feedback of a P\'olya urn process? We address this question by providing a rigorous theoretical understanding of the implications of fitness and nonlinear CA on duration and intensity, along with numerical simulations to illustrate and support the findings. A summary of our main results is given in \prettyref{sec:results}.

The rest of this paper is organized as follows. \prettyref{sec:model} formally introduces the nonlinear P{\'o}lya urn process with fitness, discusses some related work, and briefly presents a recent application in computer science. \prettyref{sec:order} presents some stochastic ordering results for the metrics investigated. Sections~\ref{sec:duration} and \ref{sec:intensity} present the main results on the distributions of duration and intensity, respectively. \prettyref{sec:conclusion} concludes the paper with further discussions.


\section{Nonlinear P\'olya Urn Process}\label{sec:model}

In accordance with the jargon of the P\'olya urn model, we will refer to the agents that engage in a competition as \emph{colors}.
Consider two colors, labelled 1 and 2. Each color is associated with a positive \emph{fitness} value that reflects its intrinsic competitiveness. Let $f_i$ denote the fitness of color $i$, $i=1,2$, and $r = f_1/f_2$ the fitness ratio. Without loss of generality, we assume that $f_1 \geq f_2$ and hence $r \geq 1$. 

The resource that the agents compete for, which is measured in discrete units, will be generically referred to as balls. The competition starts at time $t=0$ with color $i$ having $x_{0i}$ balls, $i=1,2$. We consider a discrete-time process.  At each time step, one ball of one of the colors is added to the
system. Denote by $X_{i}(t)$ the number of balls with color $i$ at time $t$ and $\X(t) = (X_1(t), X_2(t))$.
The trajectory of the competition $\X=\{\X(t)\}_{t\in \N}$ then forms a discrete-time discrete-space stochastic process. The state space is the first quadrant of the integral lattice  $\N^2$.
The initial condition is $\X(0) =  \x_0 \triangleq (x_{01},x_{02})$. 

In a nonlinear P\'olya urn process with fitness, the ball added at time $t+1$ has color $i$ with probability
\[
p_{i}(t) = \frac{f_i X_{i}(t)^\beta}{f_1 X_{1}(t)^\beta + f_2 X_{2}(t)^\beta}.
\]
Here $\beta\geq 0$ reflects the strength of the feedback by cumulative advantage. Note that the larger $\beta$ is, the stronger the feedback. When $\beta=0$, there is no feedback and the process falls back to a random walk (where the transition probabilities do not depend on $\X(t)$). 

More formally, the trajectory $\{\X(t)\}_{t\in \N}$ forms a Markov chain with initial condition $\X(0)=\x_0$ and stationary transition probabilities  $\P[\X(t+1)=\x' \mid \X(t)=\x]$
 given by
\begin{equation}\label{eq:transition-prob}
Q(\x,\x'; \beta,r) = \begin{dcases*}
\frac{rx_1^\beta}{rx_1^\beta+x_2^\beta}, & if $\x' = \x + (1,0)$;\\
\frac{x_2^\beta}{rx_1^\beta+x_2^\beta}, & if $\x' = \x + (0,1)$;\\
0,& otherwise.
\end{dcases*}
\end{equation}
We will call such a process a $(\beta,r,\x_0)$-urn process. 

The \emph{duration} and \emph{intensity} of a competition have been defined through events of ties in \cite{jiang2015competition}. We follow the same definitions here.
Given a 2D process $\X=\{\X(t)\}_{t\in \N}$, not necessarily an urn process introduced above, we say that a \emph{tie} occurs at time $t$ if $X_{1}(t) = X_{2}(t)$. For $n\geq 0$, let $T_n(\X)$ be the time of the $n$-th tie, defined recursively by \[
T_n(\X) = \inf\{t > T_{n-1}(\X): X_1(t)=X_2(t)\},\quad n\geq 1,
\]
where $T_0(\X) = -1$ by convention.
The {\em duration} $T(\X)$ of a competition is defined
to be the time of the last tie, i.e.,
\[
T(\X) = \sup\{T_n(\X):  n\geq 0,\, T_n(\X)< +\infty\}.
\]
Note that $T(\X)$ marks the end of the competition in the sense that there are no more ties  after this point in time, hence leaving one of the colors in the lead forever. 

Let $N_t$ be the number of ties up to time $t$, i.e. 
\[
N_t(\X) = \sum_{j=0}^t \ind{X_{1}(j) = X_{2}(j)}, 
\]
where $\ind{A}$ is the indicator of event $A$. 
The {\em intensity} $N(\X)$ of a competition is the total number of ties throughout the competition, i.e.,
\[
N(\X) = \lim_{t\to\infty} N_t(\X) = \sum_{t=0}^\infty \ind{X_{1}(t) = X_{2}(t)}, 
\]
This measures the intensity of the competition in the sense that it counts the number of potential changes in leadership.

When there is no confusion, we will also write $T$ for $T(\X)$, and similarly for $T_n$, $N_t$ and $N$. Note that  $N = N_T$, $T = T_N$, and that $T<+\infty$ if and only if $N<+\infty$.
With an abuse of notation, we will use $T(\beta,r,\x) = T(\beta,r,x_{1},x_{2})$ to denote $T(\X)$ for any $(\beta,r,\x)$-urn process $\X$, and similarly for $T_n$, $N_t$ and $N$. Throughout the rest of the paper, a boldfaced letter always has two components, e.g.~$\x=(x_1,x_2)$ and $\Y=(Y_1,Y_2)$. The notations such as $g(\x) = g(x_1,x_2)$ will be understood without mention.


\subsection{Related Work}\label{sec:related}

Given the 90 years since the P{\'o}lya urn process was first introduced~\cite{Polya23}, it is not surprising that many different properties of this process have been characterized through rigorous mathematical treatment as well as simulations. Most work focuses on the so-called {\em market share}, i.e.~the fraction of balls in each color, for which convergence results and limit distributions have been established for different regimes of fitness or feedback strength, but rarely for both~\cite{Drinea:2002,oliveira2008balls,oliveira2009onset,zhu2009nonlinear,Walls12}. Other properties that have been studied more recently include the probability of ever taking the lead and the onset of monopoly~\cite{oliveira2009onset,Walls12}. When the feedback is superlinear ($\beta>1$), the winning color receives all but a finite number of balls, a phenomenon knowns as \emph{monopoly}, various aspects of which have been studied~\cite{oliveira2008balls,oliveira2009onset}. The metrics under investigation in this paper, duration and intensity, have been studied in \cite{jiang2015competition} for linear P\'olya urn process with fitness. 

The {\em Poissonization} \cite{Mahm08} and the exponential embeddding \cite{davis1990reinforced} are two major technical tools used in the study of P\'olya urn processes. Other methods are surveyed in \cite{Peman07,zhu2009nonlinear}. We will mainly follow the exponential embedding approach in the present work. We extend existing works by considering the effect of nonlinear CA on duration and intensity. The theoretical findings deepen our understanding of the interaction between fitness and feedback strength in CA competitions, which in turn sheds light on understanding applications that employ such models.


\subsection{Overview of Results}\label{sec:results}

\prettyref{tab:results} summarizes our main results on the tail distributions of duration and intensity, which will be detailed in Sections~\ref{sec:duration} and \ref{sec:intensity}. We have used the standard notations of $O$, $\Theta$, and $\Omega$ in the table. In later sections, we will also use other standard notations such as $o$ and $\sim$ without further mention, where $g(x) \sim h(x)$ means $\lim_{x} g(x)/h(x)=1$ in the limiting process under consideration. 

To the best of our knowledge, all results related to nonlinear CA ($\beta \neq 1$) are new, with the exception of the case $\beta \leq 1/2$ and $r=1$. The linear case ($\beta=1$) is given in \cite{jiang2015competition}, and included here for completeness and comparison. 

The results are revealing and worth exploring. In the equal fitness case $r=1$, we observe a phase transition at $\beta = 1/2$. For $\beta\leq 1/2$, competitions never end~\cite{khanin2001probabilistic}. For $\beta > 1/2$, competitions always end, but can be very long and  intense, as both duration and intensity have power-law tails. 

The picture is dramatically different in the case of different fitnesses. In this case, if $\beta \leq 1$ then the fittest agent is bound to win the competition (i.e.,  take the lead forever)\footnote{For $\beta=1$, see \cite{Mahm08} for a proof. For $\beta < 1$, see the remark at the end of \prettyref{subsubsec:duration-diff-sublinear-proof}.}. If $\beta > 1$, then there is a nonzero probability that the less fit wins (it actually becomes the monopoly)~\cite{davis1990reinforced}. In the sublinear regime, the fittest color wins relatively quickly with the distribution of duration upper bounded by a Weibull tail. Thus fitness superiority brings a clear advantage in this regime, in sharp contrast to the superlinear regime. Note that not only may the competition duration increase when moving from the linear to superlinear regime depending on $r$ and $\x_0$, but the fittest may even lose the competition! Thus the fittest may have to struggle much more under superlinear CA. 

Also observe that moving from equal to non-equal fitness induces longer competitions under superlinear CA. However, there is an advantage in becoming fitter since the chance of winning is greater than under equal fitness (where the chance is 50\% if $x_{01} = x_{02}$), but at the expense of engaging in potentially longer competitions. In a nutshell, superlinear CA may exacerbate the strugle of the fittest! 

Finally, in the case of different fitnesses, competition intensity is always small, exhibiting an exponential tail. This phenomenon of long (duration) but mild (intensity) competitions has been observed in~\cite{jiang2015competition}. We observe here that this phenomenon persists in the presence of nonlinear CA. 

\begin{table}[t]
\renewcommand{\arraystretch}{1.3} 
\centering
\begin{tabular}{ccccc}
\toprule
   &  \multicolumn{2}{c}{$\P[T(\beta,r,\x_0)\geq t]$}  &   \multicolumn{2}{c}{$\P[N(\beta,r,\x_0)\geq n]$}   \\
   \cmidrule{2-5}
   & $r=1$ & $r>1$ & $r=1$ & $r>1$\\
\midrule
$0\leq \beta \leq \frac{1}{2}$ & 1 & $e^{-\Omega\left(t^{1-\beta}\right)}$ & 1 & $O(a^n)$ \\
\midrule
$\frac{1}{2}<\beta<1$ & $\Theta(t^{\frac{1}{2}-\beta})$ & $e^{-\Omega\left(t^{1-\beta}\right)}$   & $\Omega(n^{-\beta})$ & $O(a^n)$ \\
\midrule
$\beta=1$ & $\Theta(t^{-\frac{1}{2}})$ & $\Omega(t^{(1-r)x_{01}})$  & $\Theta(n^{-1})$ & $O(a^n)$ \\
\midrule
 $\beta>1$ & $\Theta(t^{\frac{1}{2}-\beta})$ & $\Theta(t^{1-\beta})$  & $O(n^{-\beta})$  & $O(a^n)$ \\
\bottomrule
\end{tabular}
\caption{Tail distributions of duration and intensity. Here $a = 2/(r+1)$.}
\label{tab:results}
\end{table}


\subsection{Recent Application to Social Tagging}
\label{sec:social_tagging}

In this section we briefly describe an example of the applications of P{\'o}lya urns in computer science. Such applications could potentially leverage a more general model that incorporates nonlinear CA and fitness. By providing a theoretical understanding of duration and intensity we prepare the ground for the application of more general models. 

Social or collaborative tagging refers to the increasingly common process where users {\em tag} resources within online services~\cite{gupta2010survey,wagner2014semantic}. For example, users can bookmark a URL on Delicious\footnote{\small delicious.com, www.flickr.com, www.twitter.com}, annotate pictures on Flickr$^1$, and use hashtag to mark tweets on Twitter$^1$. An important consideration in this context is the dynamics behind tag generation and tag accumulation by the various resources such as URLs, pictures and tweets. In particular, a cumulative advantage effect (i.e., preferential attachment) has been empirically observed in social tagging in the sense that, as resources accumulate more tags, they tend to accumulate even more tags. In order to capture this phenomenon, models that embody cumulative advantage such as P{\'o}lya urn and Yule-Simon process have been used to represent how objects accumulate tags~\cite{cattuto2007semiotic,golder2006usage}. Models that also capture the inherent difference between tags, which can be interpreted as tag fitness \cite{halpin2007complex}, and models that leverage tag ranking to assess tag dynamics \cite{wagner2014semantic} have also been proposed. 

To illustrate such modeling within our framework, consider two URLs competing for bookmarks by users on Delicious, as presented and evaluated in~\cite{golder2006usage}. For $i=1,2$, let $f_{i}$ denote the intrinsic fitness of URL$_i$, and $X_{i}(t)$ the number of bookmarks that it has received by time $t$. When the CA feedback has strength $\beta > 0$, how will the two URLs accumulate bookmarks? Will the fittest URL emerge as the unchallenged winner? How many bookmarks will they have accumulated together when this occurs? 

An important consideration is the effectiveness of social tagging in describing and assessing online resources~\cite{wagner2014semantic}. For example, can poor quality URLs be overridden by late coming higher quality URLs in the bookmark competition? The answer to such questions depends fundamentally on the nature of the competition, as defined by the fitnesses $f_{1,2}$ and the feedback strength $\beta$. Our work provides a solid theoretical ground for understanding such behaviors. For example, we now know that under superlinear CA much longer competitions can occur (in comparison to linear CA), as well as the fittest losing the competition. Such findings may put into question the effectiveness of social tagging.


\section{Stochastic Ordering Results}\label{sec:order}

In this section, we will show that some of the metrics introduced in \prettyref{sec:model} can be ordered stochastically according to the feedback strength $\beta$.
 We recall the following definition of \emph{stochastic dominance}.

\begin{definition}[Stochastic dominance]
A random \\ variable $Z_1$ \emph{stochastically dominates} a random variable $Z_2$, if $\P[Z_1\geq z] \geq \P[Z_2\geq z]$ for all $z$. This is denoted by $Z_1\succeq Z_2$ or $Z_2\preceq Z_1$.
\end{definition}

\subsection{Equal Fitness} \label{subsec:dominance-equal}

The following theorem shows that in the equal fitness case, stronger feedback, i.e.~larger $\beta$, leads to stochastically shorter and less intense competitions.

\begin{theorem}\label{thm:dominance-equal}
Let $\beta\geq \beta'\geq 0$. The following hold,
\begin{itemize}
\item[$(i)$] $N_t(\beta,1,\x_0)\preceq N_t(\beta',1,\x_0)$ for all $t$;
\item[$(ii)$] $N(\beta,1,\x_0)\preceq N(\beta',1,\x_0)$;
\item[$(iii)$] $T_n(\beta,1,\x_0)\succeq T_n(\beta',1,\x_0)$ for all $n$;
\item[$(iv)$] $T(\beta,1,\x_0)\preceq T(\beta',1,\x_0)$.
\end{itemize}
\end{theorem}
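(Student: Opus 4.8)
The plan is to build a coupling between a $(\beta,1,\x_0)$-urn process and a $(\beta',1,\x_0)$-urn process that preserves the "distance to a tie" in a monotone way, and then read off all four statements from that single coupling. The key observation is that in the equal-fitness case ($r=1$), the sign of $X_1(t)-X_2(t)$ and the transition probabilities depend on $(X_1(t),X_2(t))$ only through the gap $D(t) \triangleq X_1(t)-X_2(t)$ together with, say, $X_2(t)$; more importantly, the probability of moving toward a tie (i.e.\ the leading color losing a step) is $x_2^\beta/(x_1^\beta+x_2^\beta)$ when color $1$ leads, which is a \emph{decreasing} function of $\beta$ whenever $x_1 > x_2$ (and symmetrically). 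So a larger $\beta$ makes it \emph{less} likely that the gap shrinks. Ties occur exactly when $D(t)=0$; intuitively, a process that is more reluctant to close the gap ties less often, ties later, and finishes (stops tying) sooner.

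First I would set up the coupling. Run both processes on the same probability space via a sequence of i.i.d.\ uniforms $U_1,U_2,\dots$; at each step, given the current states $\x$ (for $\beta$) and $\x'$ (for $\beta'$), let each process add a ball to color $1$ iff its uniform falls below the respective color-$1$ probability. The goal is to maintain the invariant, for all $t$, that the two gaps have the same sign and $|D_\beta(t)| \ge |D_{\beta'}(t)|$ as long as the $\beta'$-process has not yet had a step where this could fail — but a cleaner route is to couple so that $|D_{\beta'}(t)| \le |D_\beta(t)|$ and $D_{\beta'}(t)\equiv 0$ whenever $D_\beta(t)=0$ is violated; actually the honest statement to maintain is $|D_{\beta}(t)| \ge |D_{\beta'}(t)|$ with equality of signs while both are nonzero, re-synchronizing appropriately at ties. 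The induction step splits into cases according to whether the gaps are equal or $|D_\beta|>|D_{\beta'}|$, using the fact noted above that the "gap-closing" probability is monotone in $\beta$ at every unequal state; when $|D_\beta|=|D_{\beta'}|>0$ the two color-$1$ probabilities coincide only if additionally $X_2$ agrees, so one must track a two-dimensional state, and here I would instead argue via a sequence of intermediate couplings or a direct comparison of the one-step "return-to-tie" kernels.

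Given a coupling with $D_\beta(t)=0 \Rightarrow D_{\beta'}(t)=0$ for all $t$ on the coupled space (equivalently, the tie-set of the $\beta$-process is contained in that of the $\beta'$-process), all four claims follow at once: $N_t(\beta,1,\x_0) = \sum_{j\le t}\ind{D_\beta(j)=0} \le \sum_{j\le t}\ind{D_{\beta'}(j)=0} = N_t(\beta',1,\x_0)$ pathwise, giving $(i)$; letting $t\to\infty$ gives $(ii)$; the $n$-th tie time of the $\beta$-process is at least that of the $\beta'$-process because the former's ties form a subsequence (in time) of the latter's, giving $(iii)$; and $(iv)$ follows since $T = \sup\{t: D(t)=0\}$ and a smaller tie-set has a smaller (or equal) supremum, with the $+\infty$ case handled by the $N<\infty \iff T<\infty$ equivalence noted in \prettyref{sec:model}. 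Each stochastic-dominance conclusion then comes from the standard fact that a pathwise inequality on a common probability space implies stochastic dominance.

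The main obstacle is the coupling construction itself, specifically handling states where $|D_\beta(t)| = |D_{\beta'}(t)| > 0$ but $X_{2,\beta}(t) \ne X_{2,\beta'}(t)$: here the two color-$1$ probabilities need not be equal, so a naive "same uniform" coupling can break the invariant. I expect to resolve this either by strengthening the invariant to also control the relative sizes of the two urns (e.g.\ showing the $\beta$-process's losing color stays at least as small, so the gap-closing probability remains smaller), or by decomposing the comparison $\beta \succeq \beta'$ into a limit of infinitesimal steps and checking monotonicity of the relevant functionals in $\beta$ directly via the transition kernel in \eqref{eq:transition-prob}. A secondary technical point is verifying that the coupling is well defined even when one or both processes have $T = \infty$ (relevant for $\beta' \le 1/2$), which is immediate since the construction is step-by-step and never needs the processes to terminate.
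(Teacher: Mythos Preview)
Your approach is essentially the paper's: couple via common uniforms, maintain a pathwise inequality on the gap $|X_1-X_2|$, and read off $(i)$--$(iv)$ from the resulting containment of tie sets. You have also correctly located the only real obstacle and, in your first proposed resolution, the correct fix.

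Concretely: the sign of the gap is a red herring. Pass to the ordered process $\Y(t)=(\min\{X_1(t),X_2(t)\},\max\{X_1(t),X_2(t)\})$; ties for $\X$ are exactly ties for $\Y$, and $\Y$ is itself Markov with transition probabilities depending only on $(Y_1,Y_2)$. Now couple the two ordered processes $\Z,\Z'$ (for $\beta,\beta'$) using the same uniforms, and maintain the single invariant $Z_1(t)\le Z_1'(t)$. Since both processes satisfy $Z_1(t)+Z_2(t)=\|\x_0\|_1+t$, this automatically gives $Z_2(t)\ge Z_2'(t)$ and hence $Z_2(t)-Z_1(t)\ge Z_2'(t)-Z_1'(t)\ge 0$, which is exactly the gap inequality you want. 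The induction for $Z_1(t)\le Z_1'(t)$ goes through cleanly: when $Z_1'(t)<Z_2'(t)$ one has $Z_1\le Z_1'$ and $Z_2\ge Z_2'$, so
\[
\frac{Z_1^\beta}{Z_1^\beta+Z_2^\beta}\le \frac{(Z_1')^{\beta'}}{(Z_1')^{\beta'}+(Z_2')^{\beta'}},
\]
and the common-uniform coupling preserves the invariant; the boundary case $Z_1'(t)=Z_2'(t)$ is handled separately (the $\beta'$-process stays put in $Z_1'$). So your ``losing color stays at least as small'' invariant is precisely the right strengthening, and the min/max reformulation is what makes it two lines rather than a case analysis on signs. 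Your second proposed route (infinitesimal $\beta$-steps) is unnecessary.
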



\begin{proof}
Let $\X$  be a $(\beta,1,\x_0)$-urn process and let $\X'$ be a $(\beta',1,\x_0)$-urn process. Define a new process $\Y$ by $Y_1(t) = \min\{X_1(t), X_2(t)\}$ and $Y_2(t) = \max\{X_1(t),X_2(t)\}$. Similarly, define $\Y'$ by $Y'_1(t) = \min\{X'_1(t), X'_2(t)\}$ and $Y'_2(t) = \max\{X'_1(t),X'_2(t)\}$. 

Let $\{\eta_j\}_{j\in \N}$ be a sequence of independent random variables uniformly distributed on $[0,1]$. Define $\{ \Z(t)\}_{t\in \N}$ recursively by  $Z_1(0) = \min\{x_{01},x_{02}\}$, $Z_2(0) = \max\{x_{01}, x_{02}\}$, and
\begin{align*}
 Z_1(t+1) &= Z_1(t) +  \ind{Z_1(t)< Z_2(t),\eta_t\leq \frac{Z_1(t)^\beta}{Z_1(t)^\beta +  Z_2(t)^\beta} },\\
 Z_2(t+1) &=  Z_1(t) + Z_2(t) + 1 - Z_1(t+1),
\end{align*}
Define  $\{ \Z'(t)\}_{t\in\N}$ by the same equations but with $\beta$ replaced by $\beta'$.  Note that $\Y\overset{d}{=} \Z$ and $\Y'\overset{d}{=} \Z'$, where $\overset{d}{=}$ means ``equal in distribution''. It is also clear that $Z_1(t)\leq Z_2(t)$ for all $t$.

We now show that $Z_1(t) \leq Z'_1(t)$ by induction on $t$.
The base case $t=0$ holds trivially. Assume it holds for $t$ and consider $t+1$. Note that $Z_1(t) + Z_2(t)  = Z'_1(t)+Z'_2(t)= x_{01} + x_{02} + t$ and hence $Z_2'(t) \leq Z_2(t)$. There are three cases.
\begin{itemize}
\item
$Z_2'(t) = Z_1'(t)=Z_1(t)$. In this case, $Z_2(t) = Z_1(t)$, and hence $Z_1(t+1) = Z_1(t) \leq Z'_1(t) = Z'_1(t+1)$

\item $Z_2'(t) = Z_1'(t)\geq Z_1(t) + 1$. In this case, $Z_1(t+1) \leq Z_1(t) +1 \leq Z'_1(t) = Z'_1(t+1)$

\item $Z_2'(t) > Z_1'(t)$. In this case, $Z_2(t)\geq Z_2'(t) > Z_1'(t) \geq Z_1(t)$. Thus
 \[
 \frac{Z_1(t)^\beta Z'_2(t)^{\beta'}}{Z_2(t)^\beta Z'_1(t)^{\beta'}} = \left(\frac{Z_1(t)}{Z_2(t)}\right)^{\beta-\beta'} \left(\frac{Z_1(t) }{Z'_1(t) }\cdot \frac{ Z'_2(t)}{ Z_2(t)}\right)^{\beta'}\leq 1,
 \]
and hence
\[
\frac{Z_1(t)^\beta}{Z_1(t)^\beta +  Z_2(t)^\beta} \leq \frac{Z'_1(t)^{\beta'}}{Z'_1(t)^{\beta'} +  Z'_2(t)^{\beta'}}.
\]
It follows that
\begin{align*}
 &\quad\ Z_1(t+1) = Z_1(t) + \ind{\eta_t\leq \frac{Z_1(t)^\beta}{Z_1(t)^\beta +  Z_2(t)^\beta} }\\
&\leq   X'_1(t) + \ind{\eta_t\leq \frac{Z'_1(t)^{\beta'}}{Z'_1(t)^{\beta'} +  Z'_2(t)^{\beta'}}} =  Z'_1(t+1).
\end{align*}

\end{itemize}
In all cases, we have $Z_1(t+1)  \leq   Z'_1(t+1)$, which completes the induction. As a consequence,
\[
Z_2(t) -  Z_1(t) \geq Z'_2(t) -  Z'_1(t) \geq 0, \quad \forall t\geq 0.
\]
Thus $\Z$ ties at $t$ only if $\Z'$ also ties at $t$, which implies $N_t(\Z) \leq N_t(\Z')$, $N(\Z) \leq N(\Z')$, $T_n(\Z) \geq T_n(\Z')$, and $T(\Z) \leq T(\Z')$. 
 
Note that $N_t(\X) = N_t(\Y) \overset{d}{=} N_t(\Z)$ and $N_t(\X') = N_t(\Y') \\  \overset{d}{=} N_t(\Z')$, from which $(i)$ follows. The same argument also proves $(ii)$, $(iii)$ and $(iv)$. Alternatively, $(ii)$ follows from $(i)$ by letting $t\to\infty$, while $(iii)$ follows from $(i)$ by the identity $\{T_n\geq t\} = \{N_t\leq n\}$. 
\myfill \end{proof}

\subsection{Different Fitnesses}\label{subsec:dominance-diff}

In the case of different fitnesses, there are no such nice ordering results as in \prettyref{subsec:dominance-equal}, as we will see in \prettyref{fig:T-diff-sublinear} of \prettyref{subsec:duration-diff}. However, we  have some partial results, which will be useful later in characterizing the tail distributions of duration and intensity. Note that the results apply to the equal fitness case as well.

The following theorem shows that the time of first tie can be ordered stochastically. The proof uses a coupling argument similar to the one in the proof of \prettyref{thm:dominance-equal} and is found in  \prettyref{app:dominance-diff}.

\begin{theorem}\label{thm:dominance-diff}\label{THM:DOMINANCE-DIFF}
Let $\beta\geq \beta'\geq 0$. $T_1(\beta, r,\x_0)\succeq T_1(\beta',r',\x_0')$, 
if either of the following conditions holds,
\begin{itemize}
\item[$(i)$]
$r\geq r'$ and $x_{01} \geq x_{01}'\geq x_{02}'\geq x_{02}$;
\item[$(ii)$]
$r= r'$ and $x_{01} \leq x_{01}'\leq x_{02}'\leq x_{02}$.
\end{itemize}
In particular, $T_1(\beta,r,\x_0)\succeq T_1(\beta',r,\x_0)$.
\end{theorem}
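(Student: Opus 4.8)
The plan is to reuse the coupling device from the proof of \prettyref{thm:dominance-equal}: realize a $(\beta,r,\x_0)$-urn process $\X$ and a $(\beta',r',\x_0')$-urn process $\X'$ on a common probability space, both driven by a single i.i.d.\ sequence $\{\eta_t\}_{t\in\N}$ of $\mathrm{Uniform}[0,1]$ variables, and then reduce the stochastic-dominance claim to a \emph{pathwise} inequality $T_1(\X)\geq T_1(\X')$. Concretely I would set $X_1(t+1)=X_1(t)+\ind{\eta_t\leq p(t)}$ with $p(t)=rX_1(t)^\beta/\bigl(rX_1(t)^\beta+X_2(t)^\beta\bigr)$ and $X_2(t+1)=X_1(t)+X_2(t)+1-X_1(t+1)$, and define $\X'$ by the same recursion with $(\beta,r,\x_0)$ replaced by $(\beta',r',\x_0')$ but using the \emph{same} $\eta_t$. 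Since each $\eta_t$ is independent of the past, $\X$ and $\X'$ have the prescribed marginal laws, so a pathwise bound $T_1(\X)\geq T_1(\X')$ yields $\P[T_1(\beta,r,\x_0)\geq z]\geq\P[T_1(\beta',r',\x_0')\geq z]$ for every $z$, which is the assertion.

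The heart of the argument is a pair of monotonicity invariants carried through the recursion. In case $(i)$ I would prove, by induction on $t$, that $X_1(t)\geq X_1'(t)$ and $X_2(t)\leq X_2'(t)$ for all $t\leq T_1(\X')$; in case $(ii)$, the mirrored pair $X_1(t)\leq X_1'(t)$ and $X_2(t)\geq X_2'(t)$. (Two invariants are needed, rather than one as in \prettyref{thm:dominance-equal}, because the two processes generally start with different total numbers of balls, precluding the single-coordinate reduction used there.) The base cases are exactly the hypotheses on $\x_0,\x_0'$. Either invariant pair implies the corresponding ordering of signed gaps---$X_2(t)-X_1(t)\leq X_2'(t)-X_1'(t)$ in case $(i)$, the reverse in case $(ii)$---and since for $t<T_1(\X')$ the leading color of $\X'$ is strictly ahead (it starts strictly ahead---if it starts tied then $T_1(\X')=0$ and there is nothing to prove---and moves in unit steps), the gap ordering forces the leading color of $\X$ to be strictly ahead as well; hence $\X$ has not tied by time $T_1(\X')$, i.e.\ $T_1(\X)\geq T_1(\X')$.

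For the induction step, when both invariant inequalities are strict at time $t$ the unit increments cannot violate either, so nothing need be checked. The work is the \emph{boundary case}, when one coordinate coincides---$X_1(t)=X_1'(t)$, or $X_2(t)=X_2'(t)$---where the monotone coupling preserves the invariant precisely when the step probabilities are ordered the right way: $p(t)\geq p'(t)$ in case $(i)$, $p(t)\leq p'(t)$ in case $(ii)$. Unwinding the definition, this becomes an inequality of the form $r(a/b)^\beta\geq r'(c/d)^{\beta'}$ (or its reverse) among the four ball counts $a,b,c,d$; in each boundary state the two invariants, together with the fact (available because $t\leq T_1(\X')$) that $\X'$ has not yet tied and so its coordinates stand in the same weak order as at time $0$, pin down enough of the sign relations among $a,b,c,d$ that the inequality follows from $\beta\geq\beta'$ and monotonicity of $s\mapsto s^{\beta-\beta'}$ on $[0,1]$ (resp.\ on $[1,\infty)$).

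I expect this boundary comparison to be the main obstacle, and in particular the reason case $(i)$ tolerates $r\geq r'$ while case $(ii)$ requires $r=r'$: when the fitter color $1$ is in the lead (case $(i)$) a larger fitness ratio only widens the lead and so only reinforces the desired ordering, whereas when color $1$ trails (case $(ii)$) a larger $r$ would let it close the gap faster and break the coupling, so the $r$-factors must cancel exactly. Finally, the ``in particular'' statement is the special case $r'=r$, $\x_0'=\x_0$: apply $(i)$ if $x_{01}\geq x_{02}$ and $(ii)$ if $x_{01}\leq x_{02}$ (and when $x_{01}=x_{02}$ both reduce to the trivial identity $T_1=0$).
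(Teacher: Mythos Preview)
Your proposal is correct and follows essentially the same coupling argument as the paper: both realize the two urn processes on the same uniform sequence $\{\eta_t\}$, carry the pair of invariants $X_1(t)\geq X_1'(t)$, $X_2(t)\leq X_2'(t)$ (reversed in case $(ii)$) by induction, verify the step-probability comparison $p(t)\geq p'(t)$ via the factorization $\frac{r}{r'}\bigl(\frac{X_1}{X_2}\bigr)^{\beta-\beta'}\bigl(\frac{X_1}{X_1'}\cdot\frac{X_2'}{X_2}\bigr)^{\beta'}\geq 1$, and read off the gap ordering. The one cosmetic difference is that you run the induction up to $T_1(\X')$ while the paper runs it up to $T_1(\X)$; both choices work, and yours makes the conclusion $T_1(\X)\geq T_1(\X')$ slightly more immediate.
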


When competition starts out with a tie, $T_1(\beta,r,x_0,x_0) = T_1(\beta',r,x_0,x_0) = 0$ trivially. What is more interesting in this case is the time of the first return to a tie, which can also be ordered as shown by the next corollary.

\begin{corollary} \label{corollary:first-return}
$T_2(\beta,r,x_0,x_0)\succeq T_2(\beta',r,x_0,x_0)$, if  $\beta\geq \beta'$.
In particular, the probability of ever tying again satisfies
\[
\P[T_2(\beta,r,x_0,x_0)<\infty] \leq \P[T_2(0,r,0,0)<\infty] = \frac{2}{r+1}.
\]
\end{corollary}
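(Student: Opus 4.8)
The plan is to reduce the first-return time $T_2$ to first-hitting times $T_1$ of fresh urn processes by conditioning on the first ball added, and then quote \prettyref{thm:dominance-diff}. Since the process starts at the tie $(x_0,x_0)$ we have $T_1=0$, and at time $1$ the state is $(x_0+1,x_0)$ with probability $\tfrac{r}{r+1}$ and $(x_0,x_0+1)$ with probability $\tfrac{1}{r+1}$, neither of which is a tie. By the (time-homogeneous) Markov property, the trajectory after time $1$, conditioned on the state at time $1$, is itself an urn process started from that state, so for every $t$
\[
\P[T_2(\beta,r,x_0,x_0)\geq t] = \tfrac{r}{r+1}\,\P[T_1(\beta,r,x_0+1,x_0)\geq t-1] + \tfrac{1}{r+1}\,\P[T_1(\beta,r,x_0,x_0+1)\geq t-1].
\]

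Next I would apply the ``in particular'' clause of \prettyref{thm:dominance-diff}, namely $T_1(\beta,r,\x)\succeq T_1(\beta',r,\x)$, to each of the two starting configurations on the right-hand side. Substituting the two resulting tail inequalities into the identity term by term, with the same nonnegative weights $\tfrac{r}{r+1}$ and $\tfrac{1}{r+1}$, gives $\P[T_2(\beta,r,x_0,x_0)\geq t]\geq\P[T_2(\beta',r,x_0,x_0)\geq t]$ for all $t$, i.e.\ $T_2(\beta,r,x_0,x_0)\succeq T_2(\beta',r,x_0,x_0)$, which is the first assertion.

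For the numerical bound, specialize to $\beta'=0$. Letting $t\to\infty$ in the dominance just obtained, and using continuity of probability along the decreasing events $\{T_2\geq t\}$, gives $\P[T_2(\beta,r,x_0,x_0)<\infty]\leq\P[T_2(0,r,x_0,x_0)<\infty]$; since for $\beta=0$ the transition probabilities do not depend on the state, the right-hand side equals $\P[T_2(0,r,0,0)<\infty]$. To evaluate the latter, note that for $\beta=0$ the difference $D(t)\triangleq X_1(t)-X_2(t)$ performs a nearest-neighbour random walk on $\mathbb{Z}$ started at $0$ with up-probability $\tfrac{r}{r+1}$ and down-probability $\tfrac{1}{r+1}$, and $T_2$ is its first return to $0$. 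Conditioning on $D(1)=\pm1$ and invoking the classical one-sided hitting probabilities for a drifted walk — from $-1$ the walk reaches $0$ almost surely (the drift is nonnegative since $r\geq1$), and from $+1$ it reaches $0$ with probability $1/r$ — one obtains $\P[T_2(0,r,0,0)<\infty]=\tfrac{r}{r+1}\cdot\tfrac1r+\tfrac{1}{r+1}=\tfrac{2}{r+1}$.

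There is no deep obstacle here; the one point that genuinely needs attention is that \prettyref{thm:dominance-diff} must be invoked for \emph{both} post-first-step states — one with color $1$ ahead and one with color $2$ ahead — which is exactly why that theorem is stated with the two cases $(i)$ and $(ii)$, and why no ``by symmetry'' shortcut is available once $r>1$. The remaining ingredients, namely the gambler's-ruin hitting probabilities and the continuity-of-measure passage from tail dominance to $\P[\,\cdot<\infty\,]$, are entirely standard.
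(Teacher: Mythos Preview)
Your proposal is correct and follows essentially the same route as the paper: condition on the first ball, express $T_2$ in terms of $T_1$ from the two post-step states, apply \prettyref{thm:dominance-diff} termwise, then specialize to $\beta'=0$ and use the state-independence of the $\beta=0$ process together with the classical return probability for a biased simple walk. Your identity even carries the time shift $t-1$ that the paper suppresses, and your explicit gambler's-ruin computation replaces the paper's citation of Feller; neither difference changes the argument.
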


\begin{proof}
Let $p_r = r/(r+1)$ and $q_r = 1/(r+1)$.
By considering the first transition and applying \prettyref{thm:dominance-diff}, we obtain
\begin{align*}
&\quad\ \P[T_2(\beta,r,x_0,x_0)\geq t] \\
&= p_r \P[T_1(\beta,r,x_0+1,x_0)\geq t] + q_r \P[T_1(\beta,r,x_0,x_0+1)\geq t]\\
&\geq p_r \P[T_1(\beta',r,x_0+1,x_0)\geq t]  + q_r \P[T_1(\beta',r,x_0,x_0+1)\geq t]\\
&=\P[T_2(\beta',r,x_0,x_0)\geq t].
\end{align*}
which means $T_2(\beta,r,x_0,x_0)\succeq T_2(\beta',r,x_0,x_0)$.
In particular,
\begin{align*}
\P[T_2(\beta,r,x_0,x_0)<\infty] &\leq \P[T_2(0,r,x_0,x_0)<\infty]\\
&= \P[T_2(0,r,0,0)<\infty] = \frac{2}{r+1},
\end{align*}
where we have used the translation invariance of random walks  and the well-known formula for the probability of no return to the origin (see e.g. Section XI.3.c of \cite{fellerOne}). \myfill
\end{proof}

The next corollary shows that feedback, regardless of its strength $\beta$, does not increase competition intensity. In particular, competition always ends if $r>1$.

\begin{corollary}\label{corollary:intensity-diff}
$N(\beta,r,\x_0)\preceq N(0,r,\x_0)$ for any $\beta\geq 0$.
\end{corollary}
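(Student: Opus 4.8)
The plan is to show that $N(\beta,r,\x_0)\preceq N(0,r,\x_0)$ by comparing the number of returns to a tie. The key observation is that the intensity $N$ of a competition is determined by the sequence of successive inter-tie times. Specifically, writing $S_k = T_k - T_{k-1}$ for the gap between consecutive ties (conditionally on the competition having reached $k-1$ ties and on the state at that time), the event $\{N \geq n\}$ decomposes, via the strong Markov property at tie times, into the event that each of the first $n-1$ inter-tie times is finite. So it suffices to control, for each fixed parity of the current lead and each possible ``state at a tie'', the probability that the process ever returns to a tie, and then chain these together.

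The main step is this: I would prove that for any starting state $\x$, the probability that a $(\beta,r,\x)$-urn ever reaches a tie (i.e.\ $\P[T_1(\beta,r,\x)<\infty]$, or $\P[T_2<\infty]$ if $\x$ is itself a tie) is bounded above by the corresponding probability for the random walk $\beta = 0$. This is essentially \prettyref{thm:dominance-diff}: since $T_1(\beta,r,\x_0)\succeq T_1(0,r,\x_0)$, taking $t\to\infty$ gives $\P[T_1(\beta,r,\x_0)=\infty]\geq \P[T_1(0,r,\x_0)=\infty]$, hence $\P[T_1(\beta,r,\x_0)<\infty]\leq\P[T_1(0,r,\x_0)<\infty]$ (and \prettyref{corollary:first-return} handles the case of starting at a tie). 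The delicate point is that after the first return to a tie, the urn is in a state $(x_0+k,\, x_0+k)$ for a larger $k$, and one must argue the comparison still goes through at this new, shifted state. Here one uses that the random walk is translation invariant, so its return probability $2/(r+1)$ (for $r>1$, strictly less than $1$) does not depend on $k$, while the urn's return probability from the shifted tie is still dominated by $2/(r+1)$ by \prettyref{corollary:first-return} applied with the new $x_0$.

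Putting it together: condition on $N\geq 1$ (the competition reaches a first tie at some state $(y,y)$). For each $k\geq 1$, given $N\geq k$, the process is at a tie; by the strong Markov property the conditional law of the future is that of a fresh urn started from a tie, so $\P[N\geq k+1 \mid N\geq k] = \P[T_2(\beta,r,y_k,y_k)<\infty]\leq 2/(r+1)$ by \prettyref{corollary:first-return}, uniformly in $y_k$. The same bound holds for the random walk with equality, $\P[N\geq k+1\mid N\geq k]=2/(r+1)$. Since the urn's transition from a tie to whether it ever ties again is also stochastically dominated at every step (not merely in the marginal conditional probability, but jointly, via the coupling of \prettyref{thm:dominance-diff} which orders $T_2$ itself), one can couple the two processes tie-epoch by tie-epoch so that whenever the urn returns to a tie the random walk does too. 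This yields $\{N(\beta,r,\x_0)\geq n\}\subseteq\{N(0,r,\x_0)\geq n\}$ under the coupling for every $n$, giving the stochastic ordering; in particular, when $r>1$ the geometric bound forces $\P[N(\beta,r,\x_0)=\infty]=0$, so the competition ends almost surely.

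The main obstacle I anticipate is the bookkeeping in the coupling across tie-epochs: one needs the coupling of \prettyref{thm:dominance-diff} to be applied not once but repeatedly at a growing sequence of tie-states, and to verify that the ``$x_{01}\leq x_{01}'\leq x_{02}'\leq x_{02}$'' type hypotheses there are met at each epoch (here they are, because both processes sit at the \emph{same} tie-state $(y_k,y_k)$ when they last tied together, and after that the urn's lead grows at least as fast as the walk's in absolute value — exactly the content of the induction in \prettyref{thm:dominance-equal}'s proof, adapted via \prettyref{thm:dominance-diff} to $r>1$). Once that is in place the result follows by iterating.
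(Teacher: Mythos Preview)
Your core argument is correct and matches the paper's proof: apply the strong Markov property at successive tie times, use \prettyref{corollary:first-return} to bound each conditional return probability $\P[T_{k+1}<\infty\mid T_k<\infty]\leq 2/(r+1)$, use \prettyref{thm:dominance-diff} for the first-tie probability, and chain these into the product $\P[T_1(0,r,\x_0)<\infty]\cdot(2/(r+1))^{n-1}$, which equals $\P[N(0,r,\x_0)\geq n]$ by translation invariance of the walk. That is exactly what the paper does, and it is already a complete proof of the stochastic ordering.

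The extra coupling layer you propose is unnecessary, and as stated it has a gap: after both processes return to a tie, they will in general sit at \emph{different} tie states $(y,y)$ and $(y',y')$, not the same one, so your parenthetical ``both processes sit at the same tie-state $(y_k,y_k)$'' is false, and the hypotheses of \prettyref{thm:dominance-diff} for the next epoch are not automatically met with matching initial conditions. You do not need to repair this---the probability comparison you already wrote down suffices, since stochastic dominance is exactly the inequality $\P[N(\beta,r,\x_0)\geq n]\leq \P[N(0,r,\x_0)\geq n]$ for all $n$, and no pathwise coupling across tie epochs is required.
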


\begin{proof}
Let $\X$ be a $(\beta,r,\x_0)$-urn process. Let $F_n(z) = \P[X_1(T_n(\X))=z\mid T_n(\X)<\infty]$. Note that $T_n(\X)$ is a stopping time of $\X$ for $n\geq 1$. The strong Markov property and \prettyref{corollary:first-return} yield
\begin{align*}
&\quad \P[T_{n+1}(\X) < \infty\mid T_{n}(\X)<\infty]\\
&=\sum_{z} F_n(z)\P[T_2(\beta,r,z,z)<\infty] \\
&\leq \sum_{z} F_n(z) \P[T_2(0,r,0,0)<\infty] =\P[T_2(0,r,0,0)<\infty].
\end{align*}
Therefore,
\begin{align*}
&\quad\ \P[N(\beta,r,\x_0)\geq n] = \P[T_n(\X) < \infty]\\
&= \P[T_1(\X)<\infty] \prod_{j=1}^{n-1} \P[T_{j+1}(\X)<\infty\mid T_{j}(\X)<\infty]\\
&\leq \P[T_1(0,r,\x_0)<\infty] \left(\P[T_2(0,r,0,0)<\infty]\right)^{n-1} \\
&= \P[N(0,r,\x_0)\geq n],
\end{align*}
which means $N(\beta,r,\x_0)\preceq N(0,r,\x_0)$.
\myfill \end{proof}


\section{Tail Distribution of Duration}\label{sec:duration}

In this section, we characterize the tail distribution of duration $T$. The analysis relies on Rubin's exponential embedding that appeared in the appendix of \cite{davis1990reinforced}. We first review the exponential embedding in \prettyref{subsec:embedding}. We then present the tail distribution of $T$ for the case $r=1$ in \prettyref{subsec:duration-equal} and that for the case $r>1$ in \prettyref{subsec:duration-diff}.

\subsection{The Exponential Embedding}\label{subsec:embedding}

Rubin's exponential embedding is a specific representation of an urn process. 
Let $\{\xi_{kj}:k\in \{1,2\}, j\in \N\}$ be a set of independent exponential random variables with $\E\xi_{kj} = f_k^{-1} j^{-\beta}$, where $f_k$ is the fitness of color $k$. Let
\[
S_{k}(x,y) = \sum_{j=x}^{y-1} \xi_{kj},
\]
where by convention the sum is zero if $y\leq x$.  Given $\x_0$, order $\{S_{k}(x_{0k},x_k): x_k>x_{0k}, k\in \{1,2\}\}$ in increasing order and let $\tau_1<\tau_2<\dots$ be the resulting sequence. Let
\begin{equation}\label{eq:embed}
X_{k}(t) = \sup \{x\in \N: S_{k}(x_{0k},x) \leq \tau_t \}.
\end{equation}
Note that $S_{k}(x_{0k},x)$ can be considered as the time when color $k$ gets its $x$-th ball, and $X_{k}(t)$ is the number of balls with color $k$ when the total number of new balls arriving after time zero is $t$. The following theorem asserts that the process $\X$ constructed above is a $(\beta,r,\x_0)$-urn process. 

\begin{theorem}[Rubin]\label{thm:rubin}
The process $\{\X(t)\}_{t\in \N}$ defined by \prettyref{eq:embed} is a $(\beta,r,\x_0)$-urn process, where $r=f_1/f_2$.
\end{theorem}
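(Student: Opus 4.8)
The plan is to verify Rubin's embedding by checking that the process $\{\X(t)\}$ defined in \prettyref{eq:embed} is a Markov chain with the transition probabilities given by \prettyref{eq:transition-prob}. First I would observe that, by the memorylessness of the exponential distribution, the embedding can be recast in a sequential ``race'' form: think of each color $k$ as running a clock, where the wait for color $k$ to acquire its next ball (going from $x$ to $x+1$ balls) is an independent exponential of rate $f_k x^\beta$. The ball added at a given step goes to whichever color's clock rings first. So the core computation is: given that color $1$ currently has $x_1$ balls and color $2$ has $x_2$ balls, with the residual waiting times being independent exponentials of rates $f_1 x_1^\beta$ and $f_2 x_2^\beta$ respectively (here the crucial point is that the residual time is again exponential with the same rate, by memorylessness, regardless of how much time has already elapsed on each clock), the probability that color $1$'s clock rings first is
\[
\frac{f_1 x_1^\beta}{f_1 x_1^\beta + f_2 x_2^\beta} = \frac{r x_1^\beta}{r x_1^\beta + x_2^\beta},
\]
using the standard fact that for independent exponentials $E_a\sim\mathrm{Exp}(a)$, $E_b\sim\mathrm{Exp}(b)$ one has $\P[E_a<E_b] = a/(a+b)$. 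This matches $Q(\x,\x';\beta,r)$ exactly.

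The key steps, in order, would be: (1) Rewrite $S_k(x_{0k},x) = \sum_{j=x_{0k}}^{x-1}\xi_{kj}$ and note that $\xi_{kj}$ has rate $f_k j^\beta$, so $S_k(x_{0k},x)$ is the time color $k$ reaches $x$ balls and the increments $\xi_{kj}$ are the inter-acquisition times; (2) argue that $X_k(t)$ as defined counts balls of color $k$ after $t$ arrivals, so that $\X(t+1)-\X(t)$ is a unit vector in the direction of whichever color acquires the $(t+1)$-st new ball; (3) condition on the history up to the $t$-th arrival, i.e.\ on $\X(t) = \x$ and on the elapsed times on each color's clock; (4) invoke memorylessness to conclude the residual waiting times are fresh independent exponentials with rates $f_1 x_1^\beta$ and $f_2 x_2^\beta$; (5) apply the $\P[E_a<E_b]=a/(a+b)$ formula to get the one-step transition probability, and note it depends only on $\x$, giving both the Markov property and the stationary transition kernel \prettyref{eq:transition-prob}.

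The main obstacle is making step (3)--(4) rigorous: one must carefully justify that conditioning on the past of the embedded process leaves the ``remaining'' portions of the sums $S_k$ distributed as independent exponentials with the stated current rates. The subtlety is that the event $\{\X(t)=\x\}$ is described in terms of the $\tau$-ordering of partial sums, which couples the two colors' sequences; one needs to check that, conditioned on which color won each of the first $t$ rounds and on the values of the spent $\xi_{kj}$, the next unspent variables $\xi_{1,x_1}$ and $\xi_{2,x_2}$ remain independent exponentials with their original rates $f_1 x_1^\beta$ and $f_2 x_2^\beta$ — which follows because the $\xi_{kj}$ are mutually independent and the conditioning event only involves $\{\xi_{kj}\}$ for indices already ``used,'' together with inequalities comparing those to the fresh variables; the inequality constraints only truncate from below, and by memorylessness an exponential truncated from below is again exponential with the same rate shifted. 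Handling this bookkeeping cleanly — perhaps by an induction on $t$ that maintains the invariant ``given $\X(0),\dots,\X(t)$, the residual clocks are independent $\mathrm{Exp}(f_k X_k(t)^\beta)$'' — is where the real care is needed; everything else is the elementary exponential-race calculation.
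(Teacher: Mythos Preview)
The paper does not prove this theorem at all: it is stated as Rubin's result and attributed to the appendix of \cite{davis1990reinforced}, with no argument given in the present paper. So there is nothing to compare your proposal against here.

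That said, your proposal is the standard proof of the exponential embedding and is correct. The inductive invariant you mention in the last paragraph --- ``given $\X(0),\dots,\X(t)$, the residual clocks are independent $\mathrm{Exp}(f_k X_k(t)^\beta)$'' --- is exactly the right way to make steps (3)--(4) rigorous, and once that invariant is in place the exponential-race computation $\P[E_a<E_b]=a/(a+b)$ finishes the job. One small sharpening: rather than conditioning on the full history of elapsed times, it is cleaner to condition on $\tau_t$ and on the values $(X_1(t),X_2(t))=\x$; then the unspent variable for color $k$ is $\xi_{k,x_k}$, and the event being conditioned on is $\{S_k(x_{0k},x_k)\leq \tau_t < S_k(x_{0k},x_k)+\xi_{k,x_k}\}$ for each $k$, which involves $\xi_{k,x_k}$ only through a lower bound. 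Memorylessness then gives the residual of $\xi_{k,x_k}$ past $\tau_t - S_k(x_{0k},x_k)$ as a fresh $\mathrm{Exp}(f_k x_k^\beta)$, independent across $k$, and the induction goes through.
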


We will use this representation throughout the rest of \prettyref{sec:duration}. Without further mention, $\{\xi_{kj}\}$ will always denote the set of independent random variables in this representation and $S_k$ the associated partial sums. We will also use the following notation, 
\begin{equation}\label{eq:Delta}
\Delta(\x, \y) = \Delta(x_{1},x_{2},y_1,y_2)= S_1(x_{1},y_1) - S_2(x_{2},y_2).
\end{equation}

When $f_k=1$, the characteristic function of $S_k(x,y)$ is given by 
\begin{equation}\label{eq:Psi}
\Psi(s;\beta,x,y) = \prod_{j=x}^{y-1} \left(1-\frac{is}{j^{\beta}}\right)^{-1}.
\end{equation}

The quantity $K(\beta,1,\x_0)$ defined in the following lemma will be used in the statements of the main results of the next two sections. Its proof is found in \prettyref{app:cf}.

\begin{lemma}\label{lemma:cf}\label{LEMMA:CF}
If either $(i)$ $\beta>1$, or $(ii)$ $\beta>1/2$ and $r=1$, then
\begin{equation}\label{eq:Psi-tilde}
\tilde \Psi(s;\beta,r,\x_0)\triangleq \lim_{x\to\infty}\Psi(s;\beta,x_{01},x) \Psi^*(rs;\beta,x_{02},x) 
\end{equation}
exists, and
\begin{equation}\label{eq:K}
K(\beta,r,\x_0) \triangleq \frac{1}{2\pi} \int_{-\infty}^\infty  \tilde \Psi(s;\beta,r,\x_0) ds
\end{equation}
is a strictly positive real number.
\end{lemma}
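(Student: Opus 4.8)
The plan is to prove existence of the limit $\tilde\Psi(s;\beta,r,\x_0)$ in \prettyref{eq:Psi-tilde} by showing that the infinite product defining it converges absolutely and uniformly on compact sets of $s$, and then to prove that the integral in \prettyref{eq:K} converges and is strictly positive. First I would write $\Psi(s;\beta,x_{01},x)\Psi^*(rs;\beta,x_{02},x)$ as a finite product over $j$ and split it into the factors $\bigl(1-\tfrac{is}{j^\beta}\bigr)^{-1}$ coming from color~$1$ (with $j$ ranging from $x_{01}$ to $x-1$) and the factors $\bigl(1+\tfrac{irs}{j^\beta}\bigr)^{-1}$ coming from color~$2$. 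Taking logarithms, the question becomes whether $\sum_{j\ge \max(x_{01},x_{02})} \log\bigl(1-\tfrac{is}{j^\beta}\bigr) + \log\bigl(1+\tfrac{irs}{j^\beta}\bigr)$ converges. Using $\log(1-w) = -w + O(|w|^2)$ for small $w$, the linear terms are $\tfrac{is}{j^\beta} - \tfrac{irs}{j^\beta} = \tfrac{i s(1-r)}{j^\beta}$ and the quadratic remainders are $O(s^2/j^{2\beta})$. When $\beta>1$ both the $\sum j^{-\beta}$ and $\sum j^{-2\beta}$ series converge, so the product converges for every fixed $s$ without any cancellation needed; when $\beta\in(1/2,1]$ we are in case $(ii)$, so $r=1$ and the linear terms vanish identically, leaving only the remainder sum $\sum O(s^2/j^{2\beta})$, which converges precisely because $2\beta>1$. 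In both cases convergence is uniform for $s$ in a bounded set, so $\tilde\Psi$ is well-defined and continuous.

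Next I would establish the integrability needed for \prettyref{eq:K}. The modulus of each factor is $\bigl|1-\tfrac{is}{j^\beta}\bigr|^{-1} = (1+s^2/j^{2\beta})^{-1/2}$, and similarly for the color-$2$ factors with $rs$ in place of $s$, so
\[
|\tilde\Psi(s;\beta,r,\x_0)| = \prod_{j=x_{01}}^\infty\Bigl(1+\frac{s^2}{j^{2\beta}}\Bigr)^{-1/2}\prod_{j=x_{02}}^\infty\Bigl(1+\frac{r^2 s^2}{j^{2\beta}}\Bigr)^{-1/2}.
\]
I would bound this from above by retaining only finitely many factors, say the first $m$ of them, to get $|\tilde\Psi(s)| \le C\,(1+c s^2)^{-m/2}$ for suitable constants; choosing $m$ large enough (any $m\ge 2$ suffices) makes this integrable over $\R$, so the integral defining $K(\beta,r,\x_0)$ converges absolutely. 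The mild subtlety is that the exponent $\beta$ appears in the $j$-th factor, so one must be slightly careful that the product of the first few factors alone already decays fast enough in $s$; since each individual factor $(1+s^2 j^{-2\beta})^{-1/2}$ decays like $|s|^{-1}$, taking two or more factors gives decay $|s|^{-2}$ or faster, which is enough.

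Finally, for strict positivity of $K$, I would use the standard Fourier-analytic observation: $\Psi(s;\beta,x_{01},x)$ is the characteristic function of the nonnegative random variable $S_1(x_{01},x)$, which is absolutely continuous (a sum of independent exponentials with distinct rates), and $\Psi^*(rs;\beta,x_{02},x)$ is the characteristic function of $-rS_2(x_{02},x)$ (equivalently $-S_2$ under the fitness scaling); hence $\Psi(s;\beta,x_{01},x)\Psi^*(rs;\beta,x_{02},x)$ is the characteristic function of $\Delta(\x_0,(x,x)) = S_1(x_{01},x) - rS_2(x_{02},x)$ up to the scaling built into $r$. Passing to the limit $x\to\infty$, $\tilde\Psi$ is the characteristic function of the a.s.\ limit $\Delta_\infty \triangleq S_1(x_{01},\infty) - r S_2(x_{02},\infty)$ (finite a.s.\ under our hypotheses on $\beta$, by the same series bounds as above), and since $\tilde\Psi\in L^1(\R)$, Fourier inversion shows that $\Delta_\infty$ has a bounded continuous density $\rho$ with $\rho(0) = \frac{1}{2\pi}\int_{-\infty}^\infty \tilde\Psi(s)\,ds = K(\beta,r,\x_0)$. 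It then remains to argue $\rho(0)>0$, i.e.\ that the density of $\Delta_\infty$ does not vanish at the origin; this follows because $\rho$ is continuous and $0$ lies in the interior of the support of $\Delta_\infty$ (the support of $S_1(x_{01},\infty)$ is $[0,\infty)$ and likewise for $S_2$, so their difference has all of $\R$, or at least a neighborhood of $0$, in its support). This last point — pinning down that $\rho(0)$ is strictly positive rather than merely nonnegative — is the part requiring the most care, and I expect it to be the main obstacle; the convergence of the product and the integrability are comparatively routine once the case split on $\beta$ is made.
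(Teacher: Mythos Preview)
Your overall strategy coincides with the paper's: identify $\tilde\Psi$ as the characteristic function of a limiting difference of the embedded sums, show $\tilde\Psi\in L^1(\R)$ by keeping finitely many factors, and read off $K$ as the value of the corresponding density at the origin via Fourier inversion. Two points in your execution need repair.

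First, in case $(ii)$ with $\beta\le 1$, the individual sums $S_k(x_{0k},\infty)=\sum_{j\ge x_{0k}}\xi_{kj}$ diverge almost surely (their means are $j^{-\beta}$ with $\sum j^{-\beta}=\infty$), so writing $\Delta_\infty=S_1(x_{01},\infty)-S_2(x_{02},\infty)$ as a difference of two well-defined random variables is not legitimate, and your phrase ``finite a.s.\ by the same series bounds as above'' conflates convergence of the log of the characteristic function with a.s.\ convergence of the random series. The paper handles this by separating out the finite asymmetric piece: assuming $x_{01}\le x_{02}$, one has $\tilde\Psi(s;\beta,1,\x_0)=\Psi(s;\beta,x_{01},x_{02})\prod_{j\ge x_{02}}(1+s^2/j^{2\beta})^{-1}$, where the infinite product is the characteristic function of the symmetric series $\sum_{j\ge x_{02}}(\xi_{1j}-\xi_{2j})$, which \emph{does} converge a.s.\ for $\beta>1/2$ by Kolmogorov's criterion.

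Second, and this is the genuine gap you anticipated, the argument ``$\rho$ is continuous and $0$ lies in the interior of the support of $\Delta_\infty$, hence $\rho(0)>0$'' is invalid: a continuous density may vanish at interior points of its support (e.g.\ a density proportional to $x^2 e^{-x^2}$). The paper closes this gap not by working with $\rho$ directly but by expressing $K$ as a convolution integral, $K=\int_0^\infty h_1(z)h_2(z)\,dz$ in case $(i)$ and $K=\int_0^\infty h_3(-z)h_4(z)\,dz$ in case $(ii)$, where $h_k$ is the density of $S_k(x_{0k},\infty)$, $h_4$ that of $S_1(x_{01},x_{02})$, and $h_3$ that of the symmetric tail series above. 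One factor (say $h_1$ or $h_4$) is continuous and not identically zero, hence strictly positive on some interval $(z_0-\epsilon,z_0+\epsilon)\subset(0,\infty)$; the other distribution has every point of its support as a point of increase (Theorem~3.7.5 of \cite{lukacs1970characteristics}), so it assigns positive mass to that interval, and the integral is strictly positive. You correctly flagged this step as the obstacle; the fix is the convolution representation plus the point-of-increase property, not the support argument you sketched.
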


\subsection{Equal Fitness}\label{subsec:duration-equal}

We consider the equal fitness case in this section. Since the transition probability in \prettyref{eq:transition-prob} depends only on $r$, we assume  without loss of generality that $f_1 = f_2 = 1$ throughout this section. The main result is presented in \prettyref{subsec:duration-equal-result}. \prettyref{subsec:invariance} reviews the invariance principle, a key ingredient of the proof, which is given in \prettyref{subsec:duration-equal-proof}.

\subsubsection{Main Result}\label{subsec:duration-equal-result}

The following result was proved in \cite{khanin2001probabilistic} (see also Theorem 1 in \cite{oliveira2008balls}), from which it follows that $\P[T(\beta,1,\infty) \geq t] = 1$ for all finite $t$ and $\beta \in [0, 1/2]$.

\begin{theorem}[\cite{khanin2001probabilistic}]\label{thm:threshold}
With probability one, $T(\beta,1,\x_0)$ is finite if and only if $\beta>1/2$.
\end{theorem}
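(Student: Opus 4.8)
\textbf{Proof proposal for Theorem~\ref{thm:threshold}.}

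The plan is to use Rubin's exponential embedding from \prettyref{subsec:embedding} to translate the question of finiteness of $T(\beta,1,\x_0)$ into a question about the almost-sure behavior of the partial-sum processes $S_1$ and $S_2$. Recall that with $f_1=f_2=1$ we have $X_1(t)=X_2(t)$ for infinitely many $t$ if and only if the two staircase processes $x\mapsto S_1(x_{01},x)$ and $y\mapsto S_2(x_{02},y)$ ``interleave'' infinitely often, i.e.\ the sign of $\Delta(\X(t),\X(t))$ (as in \prettyref{eq:Delta}) changes infinitely often. Since the $\xi_{kj}$ are independent exponentials with $\E\xi_{kj}=j^{-\beta}$, the total time for color $k$ to acquire infinitely many balls is $S_k(x_{0k},\infty)=\sum_{j\geq x_{0k}} \xi_{kj}$, whose expectation is $\sum_j j^{-\beta}$. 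This is where $\beta=1/2$ is \emph{not} the obvious threshold from a first glance — the sum diverges for all $\beta\le 1$ — so the argument has to be more delicate than just comparing the two total times.

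The key dichotomy to establish is: ties occur finitely often iff $S_1(x_{01},\infty)\neq S_2(x_{02},\infty)$ in the appropriate sense, and more precisely iff the difference $S_1(x_{01},n)-S_2(x_{02},n)$ does not return to a neighborhood of $0$ infinitely often. I would proceed as follows. First, set up the centering: let $m_n=\sum_{j=1}^{n-1} j^{-\beta}$, and write $S_k(x_{0k},n)=m_n + W_k(n) + O(1)$ where $W_k(n)=\sum_{j}(\xi_{kj}-j^{-\beta})$ is a centered sum of independent terms. Then $\Delta(\X(t),\X(t))$ at a putative tie $X_1=X_2=n$ is essentially $W_1(n)-W_2(n)$, a centered random walk with increment variances $j^{-2\beta}$. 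The total variance up to $n$ is $\sum_{j\le n} j^{-2\beta}$, which \emph{converges} iff $\beta>1/2$. So for $\beta>1/2$, $W_1(n)-W_2(n)$ converges almost surely to a finite limit $L$; if $L\neq 0$ (which holds a.s.\ since $L$ has a continuous distribution, being a limit of sums of exponentials — or one invokes that the limit is non-degenerate), the difference stays bounded away from $0$ eventually, forcing $T<\infty$. For $\beta\le 1/2$ the variance diverges; one then argues via an invariance principle / functional CLT (the ``invariance principle'' flagged in \prettyref{subsec:invariance}, to be reviewed before the proof) that the rescaled difference converges to Brownian motion, which is recurrent, so crossings of $0$ occur infinitely often, giving $T=\infty$ a.s.

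The main obstacle is the $\beta\le 1/2$ direction: showing that ties genuinely recur rather than the difference merely oscillating around $0$ without hitting the lattice-alignment condition. The subtlety is that a ``tie'' requires $X_1(t)=X_2(t)$ as integers at the \emph{same} clock time $t$, which is a joint condition on two staircase functions, not simply a sign change of a continuous quantity. I would handle this by the standard argument that whenever $\Delta(n,n,\cdot,\cdot)$ changes sign between $n$ and $n+1$ — i.e.\ the relative order of the arrival times $S_1(x_{01},n+1)$ vs $S_2(x_{02},n+1)$ flips — the embedded process must pass through the diagonal state $(n,n)$ or $(n,n)$'s immediate neighbor in a way that still records a tie; made precise, infinitely many sign changes of the centered difference force infinitely many ties. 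Combined with recurrence of the Brownian limit (Donsker), this yields $T=\infty$. For completeness I note this is exactly the content attributed to \cite{khanin2001probabilistic} (cf.\ also Theorem~1 of \cite{oliveira2008balls}), so in the paper this theorem is quoted rather than reproved; the sketch above is how one would reconstruct it, and the corollary $\P[T(\beta,1,\infty)\ge t]=1$ for $\beta\le 1/2$ is immediate since the $\x_0=\infty$ process dominates (more ties, later last tie) any finite-$\x_0$ process in the relevant stochastic sense.
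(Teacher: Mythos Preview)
You correctly identify that the paper does not prove this theorem: it is stated as a quotation from \cite{khanin2001probabilistic} (with a pointer to Theorem~1 of \cite{oliveira2008balls}), and the paper immediately moves on to use it. So there is no ``paper's own proof'' to compare against.

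Your reconstruction sketch is along the right lines and matches the approach in the cited references. For $\beta>1/2$, convergence of $\sum_j(\xi_{1j}-\xi_{2j})$ (variances $\sim j^{-2\beta}$ summable) to a limit with a continuous law, together with $\xi_{kn}\to 0$ a.s., does force the intervals $[S_1(x_{01},n),S_1(x_{01},n+1))$ and $[S_2(x_{02},n),S_2(x_{02},n+1))$ to be eventually disjoint, hence finitely many ties. For $\beta\le 1/2$, your ``sign-change $\Rightarrow$ tie'' observation is exactly right: if $\Delta(\x_0,n,n)<0<\Delta(\x_0,n+1,n+1)$ then necessarily $S_1(x_{01},n)<S_2(x_{02},n)<S_2(x_{02},n+1)<S_1(x_{01},n+1)$, so the process visits $(n,n)$. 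The genuine work in this direction is showing infinitely many sign changes of the discrete sequence $n\mapsto\Delta(\x_0,n,n)$; an invariance principle alone gives weak convergence of a rescaled path, which is not quite enough by itself, and the cited papers supplement it with a law-of-the-iterated-logarithm or Hewitt--Savage $0$--$1$ law argument. The boundary case $\beta=1/2$ (logarithmically diverging variance) is the most delicate and is where \cite{khanin2001probabilistic} does real work.

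One small confusion: your final sentence about an ``$\x_0=\infty$ process'' is a misreading. The paper's line ``$\P[T(\beta,1,\infty)\ge t]=1$'' is just shorthand (arguably a typo) for $\P[T(\beta,1,\x_0)\ge t]=1$ for all finite $t$, i.e.\ $T=\infty$ a.s.; there is no process started from infinity, and no domination argument is needed.
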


Our focus of this section is thus the regime $\beta > 1/2$. The  following theorem shows that $T(\beta,1,\x_0)$ has a power-law tail with exponent $\beta-1/2$ in this case, irrespective of the initial condition $\x_0$.

\begin{theorem}\label{thm:duration-equal}
For $\beta>1/2$, 
\begin{equation}\label{eq:duration-equal}
\P[T(\beta,1,\x_0)\geq t] \sim \frac{2^{\beta-1/2}}{\sqrt{(2\beta-1)\pi}} K(\beta,1,\x_0) t^{\frac{1}{2}-\beta}.
\end{equation}
\end{theorem}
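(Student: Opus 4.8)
The plan is to use Rubin's exponential embedding (\prettyref{thm:rubin}) to recast $\{T\ge t\}$ in terms of the partial sums $S_1,S_2$, to split it by a last-passage argument into two independent factors, and to evaluate one of them by Fourier inversion --- this is where $K(\beta,1,\x_0)$ from \prettyref{lemma:cf} enters --- and the other by the invariance principle of \prettyref{subsec:invariance}. A tie occurs at time $j$ exactly when $\X(j)=(m,m)$ with $j=2m-x_{01}-x_{02}$, so $T=2N^{\star}-x_{01}-x_{02}$, where $N^{\star}:=\sup\{m:\X\text{ visits }(m,m)\}$, and $\{T\ge t\}=\{N^{\star}\ge n\}$ with $n=\lceil(t+x_{01}+x_{02})/2\rceil\sim t/2$. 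Since $\beta>1/2$, $N^{\star}<\infty$ almost surely by \prettyref{thm:threshold}, so $\P[N^{\star}\ge n]=\sum_{m\ge n}\P[N^{\star}=m]$. The first hitting time of $(m,m)$ is a stopping time and the coordinates never decrease, so by the strong Markov property the post-hitting process is a fresh $(\beta,1,m,m)$-urn independent of the past; hence
\[
\P[N^{\star}=m]=\P[\X\text{ visits }(m,m)]\cdot g(m),\qquad g(m):=\P[T_2(\beta,1,m,m)=\infty].
\]

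For the first factor, in the embedding $\X$ visits $(m,m)$ iff the intervals $[S_1(x_{01},m),S_1(x_{01},m+1))$ and $[S_2(x_{02},m),S_2(x_{02},m+1))$ overlap, that is, iff $-\xi_{1m}<\Delta(x_{01},x_{02},m,m)<\xi_{2m}$; since $\xi_{1m},\xi_{2m}$ are exponential with rate $m^{\beta}$ and independent of $\Delta(x_{01},x_{02},m,m)$, integrating them out gives $\P[\X\text{ visits }(m,m)]=\E[\exp(-m^{\beta}|\Delta(x_{01},x_{02},m,m)|)]$. The characteristic function of $\Delta(x_{01},x_{02},m,m)$ is $\Psi(s;\beta,x_{01},m)\Psi^{*}(s;\beta,x_{02},m)$, which by \prettyref{lemma:cf} converges to $\tilde\Psi(s;\beta,1,\x_0)$ and is dominated, uniformly in $m$, by an integrable envelope extracted from the proof of that lemma; applying the identity $\E[e^{-a|W|}]=\frac{a}{\pi}\int_{-\infty}^{\infty}\phi_W(s)(a^{2}+s^{2})^{-1}ds$ (with $\phi_W$ the characteristic function of $W$) at $a=m^{\beta}$, dominated convergence yields $\P[\X\text{ visits }(m,m)]\sim 2K(\beta,1,\x_0)\,m^{-\beta}$. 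For the second factor, one step sends the urn from $(m,m)$ to $(m+1,m)$, so by the $r=1$ symmetry $g(m)$ equals the probability that $X_1-X_2$ of a $(\beta,1,m+1,m)$-urn never returns to $0$. The invariance principle of \prettyref{subsec:invariance} shows that $X_1-X_2$, rescaled by the square root of the total count and viewed in logarithmic time, converges to an Ornstein--Uhlenbeck process with drift coefficient $\beta-\frac12>0$, which is \emph{repulsive}; ``never returning to $0$'' then becomes the event that this limiting diffusion, started from a point of order $m^{-1/2}$, never hits $0$ --- a Brownian non-return probability asymptotically linear in its starting point --- which gives $g(m)\sim c_{\beta}\,m^{-1/2}$ with $c_{\beta}=\frac14\sqrt{(2\beta-1)/\pi}$.

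Combining the two factors and summing the regularly varying tail,
\[
\P[N^{\star}\ge n]=\sum_{m\ge n}\P[\X\text{ visits }(m,m)]\,g(m)\sim 2c_{\beta}K(\beta,1,\x_0)\sum_{m\ge n}m^{-\beta-\frac12}\sim\frac{2c_{\beta}}{\beta-\frac12}\,K(\beta,1,\x_0)\,n^{\frac12-\beta},
\]
and since $n\sim t/2$ we have $n^{1/2-\beta}\sim 2^{\beta-1/2}t^{1/2-\beta}$, so substituting $c_{\beta}=\frac14\sqrt{(2\beta-1)/\pi}$ reproduces \prettyref{eq:duration-equal}; both the convergence of the series and the final exponent use $\beta>1/2$. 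As a consistency check, for $\beta=1$ and $\x_0=(1,1)$ one computes $\tilde\Psi(s;1,1,(1,1))=\pi s/\sinh(\pi s)$, hence $K=1/4$, and the classical P\'olya urn satisfies $\P[\X\text{ visits }(m,m)]=1/(2m-1)\sim 2Km^{-1}$.

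The hard part will be the sharp $m^{-1/2}$ asymptotics of $g(m)$: the diffusion limit is singular at the origin --- a Brownian motion started at $0$ returns to $0$ instantly, whereas the discrete difference does not --- so a naive use of the invariance principle recovers the order $m^{-1/2}$ but not the constant, and obtaining $c_{\beta}$ requires analyzing the first excursion of $X_1-X_2$ away from a tie by a direct random-walk estimate and splicing it to the diffusion. A secondary, more routine, issue is the uniformity needed in the other two steps, namely local-uniform convergence near $0$ of the densities of $\Delta(x_{01},x_{02},m,m)$ to that of their limit with an error that remains summable against $g(m)$, which is again supplied by the characteristic-function bounds behind \prettyref{lemma:cf}.
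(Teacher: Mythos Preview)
Your decomposition is genuinely different from the paper's. You sum over the \emph{level of the last tie} and factor $\P[N^{\star}=m]=\P[\X\text{ visits }(m,m)]\cdot g(m)$; this is exactly what the paper does in the $r>1$ superlinear case (its \prettyref{thm:duration-diff-superlinear}), where it works cleanly because \prettyref{lemma:escape} gives $g(m)\to(r-1)/(r+1)>0$. For $r=1$ that limit is $0$, so everything hinges on your claimed sharp rate $g(m)\sim c_\beta m^{-1/2}$, and this is where the gap lies. The paper nowhere establishes that asymptotic (its \prettyref{lemma:escape} only gives the limit $0$), and the invariance principle in \prettyref{subsec:invariance} is Sakhanenko's coupling for the \emph{embedding sums} $S_1-S_2$, not an Ornstein--Uhlenbeck limit for $X_1-X_2$; moreover, its error control (\prettyref{lemma:normal-approx}) explicitly requires $q_m=\Omega(\sqrt m)$, i.e.\ $|\rho(\x)|=\Omega(1)$, precisely excluding the gap-$1$ start you need. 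You yourself flag that a naive diffusion limit only yields the order, not the constant, and the ``splice an excursion estimate to the diffusion'' plan is not fleshed out enough to count as a proof --- getting the exact $c_\beta$ is where all the work would be.

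The paper sidesteps this singularity by a different decomposition: it conditions on the \emph{state at the fixed time} $t$,
\[
\P[T\ge t]=\sum_{\x\in A_t}\P[\X(t)=\x]\,\P[T_1(\beta,1,\x)<\infty],
\]
and partitions $A_t$ according to $|\rho(\x)|$. \prettyref{lemma:difference} gives $\P[\X(t)=\x]\sim 2^{\beta+1}K\,t^{-\beta}$ uniformly near the diagonal (this is essentially your first factor), while \prettyref{lemma:first-visit} gives $\P[T_1(\beta,1,\x)<\infty]\approx 2\bar\Phi(\sqrt{2\beta-1}\,|\rho(\x)|)$ on the non-singular range $|\rho(\x)|=\Omega(1)$. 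Summing over $\x$ turns into $\int_0^\infty \bar\Phi(cu)\,du=1/(c\sqrt{2\pi})$, and the constant falls out. In effect the paper trades your hard local estimate at a single state $(m+1,m)$ for an easy integral over states with gaps of order $\sqrt t$, where the Brownian approximation is sharp. Your route could probably be completed, but the missing $c_\beta$ estimate is a substantial lemma in its own right, not a byproduct of the tools in the paper.
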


\begin{figure}[t]
\centering
\includegraphics[width=\figwidth]{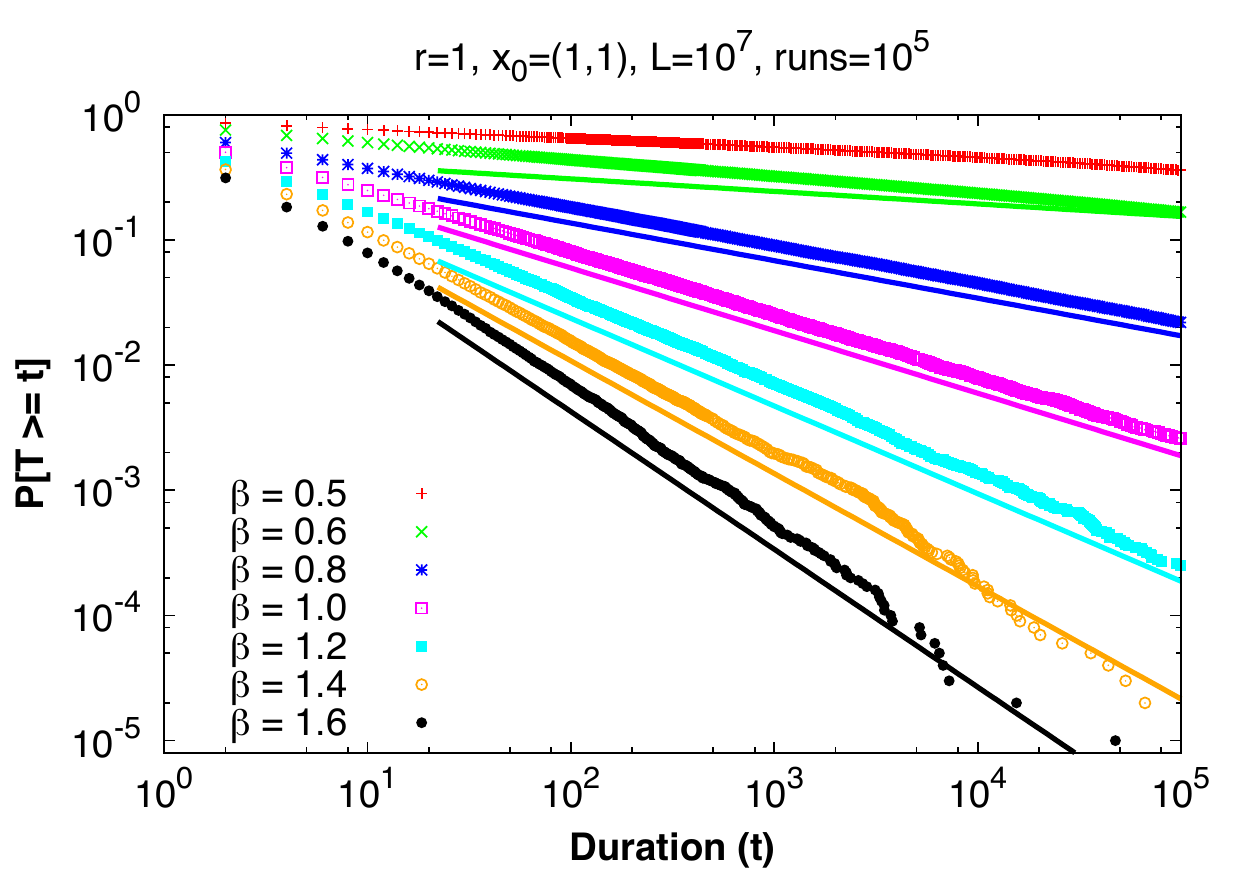}
\caption{
Tail distribution for duration of $r=1$ and various values of $\beta$. Dots (marks) are simulation results. The solid lines have slopes $1/2-\beta$.}
\label{fig:T-equal}
\end{figure}

The result is illustrated in \prettyref{fig:T-equal}, which shows the empirical tail distributions of duration from simulations. Each curve is obtained from $10^5$ independent runs of $L=10^7$ time steps each. The same simulation setup is used for all later plots and will not be repeated. Strictly speaking, what are plotted here are the tail distributions of the last tie before the simulation cutoff time $L$, which are good approximations to the true tail distributions $\P[T(\beta,1,\x_0)\geq t]$ for $t\ll L$.  Similar comments apply to later plots. We observe the stochastic ordering asserted by \prettyref{thm:dominance-equal}. \prettyref{fig:T-equal} also superimposes straight lines with slopes $1/2-\beta$, which are parallel to the asymptotes of \prettyref{eq:duration-equal}. Since we do not have a closed form formula for $K(\beta,1,\x_0)$, we have arbitrarily chosen the intercepts of these lines to ease comparison of their slopes with those of the simulated curves.  Note the good agreement between the corresponding slopes. Note also that for $\beta \leq 1/2$, the simulated tail distribution approaches the distribution $\P[T(\beta,1,\x_0)\geq t] = 1$, and dominates all curves for $\beta > 1/2$. In fact, this stochastic dominance result can be established by the same coupling argument used in the proof of \prettyref{thm:dominance-equal}.

\subsubsection{The Invariance Principle}\label{subsec:invariance}

In this section, we review a key ingredient of the proof of \prettyref{thm:duration-equal}, i.e.~the invariance principle, which asserts that an appropriately scaled random walk converges to a Wiener process in distribution. This has been exploited in the study of nonlinear P\'olya urn processes in \cite{oliveira2008balls}. We will follow a similar approach, but for our purpose, we will need not only the convergence result but also the rate of convergence, which is provided by the following result of Sakhanenko.

Let $\theta_1,\theta_2,\dots$ be a sequence of independent random variables with $\E\theta_j = 0$ and $\E\theta_j^2 < \infty$ for all $j$. Define a random process $\Xi_\theta$ with piecewise linear continuous sample paths by 
\[
\Xi_\theta(t) = \sum_{j=1}^\ell \theta_j + \frac{t-\sigma_\ell^2}{\E \theta_{\ell+1}^2} \theta_{\ell+1}, \quad \text{for } t\in [\sigma_\ell^2, \sigma_{\ell+1}^2], \ell = 0,1,\dots,
\]
where $\sigma_\ell^2 = \sum_{j=1}^\ell \E \theta_j^2$. Note that $\Xi_\theta(\sigma_\ell^2) = \sum_{j=1}^\ell \theta_j$.


  The following theorem, which is a special case of Theorem 1 of \cite{sakhanenko2006estimates}, bounds the error incurred by approximating $\Xi_\theta$ by a Wiener process.
\begin{theorem}[Sakhanenko]\label{thm:Sakhanenko}
Let $\Xi_\theta$ be defined as above. For $\alpha\geq 2$,  there exists a constant $\kappa$ and a Wiener process $W=W_{\alpha}$ such that for any $y>0$,
\begin{equation}\label{eq:Sakhanenko}
\P\left[\sup_{0\leq t < L_\theta^2} |\Xi_\theta(t)-W(t)|\geq 2\kappa \alpha  y\right] \leq \frac{L_{\theta}^{\alpha}}{y^\alpha},
\end{equation}
where  $L_{\theta}^{\alpha} = \sum_{j=1}^\infty \E |\theta_j|^\alpha$.
\end{theorem}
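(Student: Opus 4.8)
The plan is to treat this as a Koml\'os--Major--Tusn\'ady--type strong approximation statement and build the coupling in two stages: first place the partial sums $S_\ell=\sum_{j\le\ell}\theta_j$ onto a single Brownian path by an embedding, then show that the random ``clock'' produced by the embedding stays close to the deterministic clock $\sigma_\ell^2=\sum_{j\le\ell}\E\theta_j^2$. Concretely, I would run a Skorokhod-type embedding: construct a Wiener process $W$ together with stopping times $0=\tau_0\le\tau_1\le\cdots$ with $W(\tau_\ell)=S_\ell$, increments $T_j:=\tau_j-\tau_{j-1}$ satisfying $\E[T_j\mid\mathcal F_{j-1}]=\E\theta_j^2=:v_j$, and the standard Skorokhod moment estimate $\E|T_j|^{p}\le c_p\,\E|\theta_j|^{2p}$ for $p\ge1$. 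Then $\Xi_\theta$ coincides with $W$ at the points $(\sigma_\ell^2,W(\tau_\ell))$ up to the time-change error $\tau_\ell-\sigma_\ell^2$, so the problem reduces to (i) bounding $D:=\sup_\ell|\tau_\ell-\sigma_\ell^2|$, (ii) bounding the oscillation of $W$ over time windows of width $D$, and (iii) the one-step linear-interpolation error, which is at most $\max_j|\theta_j|\le (L_\theta^\alpha)^{1/\alpha}$ and is of lower order.

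For (i), the process $M_\ell:=\tau_\ell-\sigma_\ell^2=\sum_{j\le\ell}(T_j-v_j)$ is a martingale, so Doob's maximal inequality gives $\P[D\ge u]\le c\,u^{-\alpha/2}\,\E|M_\infty|^{\alpha/2}$, and $\E|M_\infty|^{\alpha/2}$ is controlled by the von Bahr--Esseen inequality when $2\le\alpha\le4$ (or by the Rosenthal/Burkholder inequality when $\alpha\ge4$) in terms of $\sum_j\E|T_j-v_j|^{\alpha/2}\le c\sum_j\E|\theta_j|^{\alpha}=c\,L_\theta^\alpha$; hence $\P[D\ge y^2]\le c\,L_\theta^\alpha/y^\alpha$. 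For (ii), on the event $\{D<y^2\}$ the quantity $\sup_\ell|W(\tau_\ell)-W(\sigma_\ell^2)|$ is bounded by the modulus of continuity of $W$ on $[0,L_\theta^2+y^2]$ at scale $y^2$, which a union bound over $O(L_\theta^2/y^2)$ subintervals plus Gaussian tail bounds estimate in the desired form (the inequality to be proved is only informative once $y$ is at least of order $L_\theta$, where this step causes no trouble, and the residual logarithmic factor is absorbed into $\kappa$). Combining (i)--(iii) yields $\P[\sup_t|\Xi_\theta(t)-W(t)|\ge C\alpha y]\le L_\theta^\alpha/y^\alpha$ for a suitable absolute $C$, which is the claim with $2\kappa=C$.

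The main obstacle is making the errors accumulate additively in the $L^\alpha$ sense rather than degrading in the number of summands; the bare Skorokhod route above tends to leak spurious $\log$ factors (from the Brownian modulus of continuity) and to lose track of constants, which is exactly why I would ultimately follow Sakhanenko's refinement. There, instead of embedding all increments simultaneously, one couples $\theta_j$ to a Gaussian increment one step at a time via the conditional quantile (monotone rearrangement) transform, and proves by induction on $n$ a moment bound on a single scalar ``accumulated error'' whose $\alpha$-th moment telescopes to a constant multiple of $\sum_{j\le n}\E|\theta_j|^\alpha$; letting $n\to\infty$ and applying Markov's inequality to that moment bound produces the clean tail $L_\theta^\alpha/y^\alpha$ with the explicit factor $2\kappa\alpha$. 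The delicate points in that induction are deciding how much of each new discrepancy to ``charge'' at step $j$ and using the smoothness of the Gaussian density to bound the quantile-transform error uniformly in the conditioning by the local contribution $\E|\theta_j|^\alpha$.
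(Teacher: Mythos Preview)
The paper does not actually prove this theorem: it is stated as a special case of Theorem~1 of Sakhanenko (2006) and quoted without proof, then applied as a black box in the proof of Lemma~\ref{lemma:normal-approx}. So there is no ``paper's own proof'' to compare against.

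That said, your sketch is a reasonable outline of how such strong approximation results are obtained, and you correctly identify the key issue: the na\"ive Skorokhod embedding route in your steps (i)--(iii) would typically produce extra logarithmic factors from the Brownian modulus-of-continuity argument, and would not yield the clean tail bound $L_\theta^\alpha/y^\alpha$ with an absolute constant independent of the sequence length. Your second paragraph, invoking Sakhanenko's step-by-step conditional quantile coupling with an inductive $\alpha$-th moment bound on a scalar error process, is indeed the mechanism behind the cited result. If you wanted to actually supply a proof here rather than a citation, that is the route to follow; the Skorokhod-embedding version would not deliver the statement as written.
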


We now apply \prettyref{thm:Sakhanenko} to prove the following lemma, which is a key step in the proof of \prettyref{thm:duration-equal}.

\begin{lemma} \label{lemma:normal-approx}
Assume $\beta > 1/2$, $c>0$ and $\epsilon \in (0,c)$. If $x_m \sim m$ and $q_m=\Omega(\sqrt{m})$, then for all large enough $m$,
\begin{equation}\label{eq:normal-approx-upper}
\P \left[\sup_{y\geq x_m}  \Delta(x_m,x_m,y,y)  > \frac{c q_m}{m^{\beta}} \right]  \leq 2\bar\Phi\left(\frac{c^- q_m}{\sqrt{2m}} \right) + O(m^{-\beta}),
\end{equation}
and
\begin{equation}\label{eq:normal-approx-lower}
\P \left[\sup_{y\geq x_m}  \Delta(x_m,x_m,y, y)  > \frac{c q_m}{m^{\beta}} \right]  \geq 2\bar\Phi\left(\frac{c^+ q_m}{\sqrt{2m}} \right) - O(m^{-\beta}),
\end{equation}
where $\Delta$ is defined in \prettyref{eq:Delta},
\[
c^\pm = (c\pm\epsilon) \sqrt{2\beta-1},
\]
and $\bar \Phi$ is the CCDF of the standard normal distribution,
\[
\bar\Phi(z) = \frac{1}{\sqrt{2\pi}} \int_z^\infty e^{-u^2/2}du.
\]
\end{lemma}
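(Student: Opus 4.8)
The plan is to analyze the quantity $\Delta(x_m,x_m,y,y) = S_1(x_m,y) - S_2(x_m,y) = \sum_{j=x_m}^{y-1}(\xi_{1j}-\xi_{2j})$, which is a partial sum of independent mean-zero random variables $\theta_j := \xi_{1j}-\xi_{2j}$ with $\E\theta_j^2 = 2 j^{-2\beta}$ (since $\E\xi_{kj}=j^{-\beta}$ and $\mathrm{Var}(\xi_{kj})=j^{-2\beta}$ when $f_k=1$). So $\sup_{y\ge x_m}\Delta(x_m,x_m,y,y) = \sup_{\ell\ge 0}\sum_{j=x_m}^{x_m+\ell-1}\theta_j$, the running maximum of a random walk started at index $x_m$. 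The idea is to embed this walk into a Wiener process via \prettyref{thm:Sakhanenko}, so that its running maximum is close to $\sup_{t\in[0,v_m]}W(t)$ where $v_m = \sum_{j\ge x_m}\E\theta_j^2 = 2\sum_{j\ge x_m} j^{-2\beta}$; since $\beta>1/2$ this sum converges, and because $x_m\sim m$ we have $v_m \sim \frac{2}{2\beta-1} m^{1-2\beta}$. By the reflection principle $\P[\sup_{t\in[0,v_m]}W(t)>a] = 2\bar\Phi(a/\sqrt{v_m})$, and plugging $a = cq_m/m^\beta$ gives $a/\sqrt{v_m} \sim \frac{cq_m/m^\beta}{\sqrt{2/(2\beta-1)}\, m^{1/2-\beta}} = c\sqrt{2\beta-1}\,\frac{q_m}{\sqrt{2m}}$, which is exactly the target expression up to the $\epsilon$-slack absorbed into $c^\pm$.

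First I would make the reduction precise: fix $m$ large, set $\theta_j = \xi_{1,x_m+j-1}-\xi_{2,x_m+j-1}$ for $j\ge 1$ (reindexing so the walk starts at time $0$), and note $L_\theta^\alpha = \sum_j \E|\theta_j|^\alpha < \infty$ for, say, $\alpha=3$ or any fixed $\alpha>2$, with $L_\theta^\alpha = \Theta(m^{1-\alpha\beta})$ by the same comparison with $\sum j^{-\alpha\beta}$ (using $\alpha\beta>1$). Then apply \prettyref{thm:Sakhanenko} with a threshold $y$ chosen as a small multiple of $\sqrt{m}/m^\beta$ — more precisely $y = \delta q_m / m^\beta$ works since $q_m = \Omega(\sqrt m)$ forces $q_m/m^\beta$ to dominate $\sqrt m/m^\beta$ — so that the error probability $L_\theta^\alpha/y^\alpha = O(m^{1-\alpha\beta}/(q_m/m^\beta)^\alpha) = O(m^{1-\alpha\beta}\cdot m^{\alpha\beta}/q_m^\alpha) = O(m/q_m^\alpha) = O(m^{1-\alpha/2})$, which is $O(m^{-\beta})$ provided $\alpha$ is taken large enough (any $\alpha \ge 2(\beta+1)$ suffices; this is why Sakhanenko's estimate for general $\alpha\ge 2$ is needed rather than just a CLT).

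The core estimate is then a sandwich: on the good event (complement of the Sakhanenko bad event), $\sup_\ell \Xi_\theta(\sigma_\ell^2) $ and $\sup_{t\le v_m} W(t)$ differ by at most $2\kappa\alpha y = O(q_m/m^\beta)$, but crucially this $O(\cdot)$ error has a small constant $\delta$ that can be folded into the $\epsilon$-slack between $c$ and $c^\pm$. For the upper bound \eqref{eq:normal-approx-upper}: $\{\sup_y \Delta > cq_m/m^\beta\}$ implies (on the good event) $\{\sup_{t\le v_m}W(t) > (c-\delta')q_m/m^\beta\}$ up to the bad-event probability $O(m^{-\beta})$, and reflection gives $2\bar\Phi\big((c-\delta')q_m/(m^\beta\sqrt{v_m})\big) \le 2\bar\Phi(c^- q_m/\sqrt{2m})$ once $\delta'$ and the $v_m \sim \frac{2}{2\beta-1}m^{1-2\beta}$ asymptotics are chosen to be swallowed by the $\epsilon$ in $c^- = (c-\epsilon)\sqrt{2\beta-1}$; for all large $m$ this holds. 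The lower bound \eqref{eq:normal-approx-lower} is symmetric, starting from $\{\sup_{t\le v_m}W(t) > (c+\delta')q_m/m^\beta\}$ and subtracting the bad-event probability. One must also handle the discretization between $\sup_\ell \Xi_\theta(\sigma_\ell^2) = \sup_\ell \sum_{j\le\ell}\theta_j$ (the genuine walk maximum) and $\sup_t \Xi_\theta(t)$ over the continuous interval, but since $\Xi_\theta$ is piecewise linear its sup over $[0,v_m]$ is attained at a breakpoint $\sigma_\ell^2$, so these agree exactly — modulo noting $v_m = \sigma_\infty^2$ and that the tail $\sum_{j>\ell}\theta_j$ for the largest relevant $\ell$ is negligible, which again follows from convergence of $v_m$.

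I expect the main obstacle to be bookkeeping the chain of $\epsilon$-type slacks carefully: there are three separate small perturbations — the Sakhanenko coupling error $2\kappa\alpha y$, the replacement of $v_m$ by its asymptotic equivalent $\frac{2}{2\beta-1}m^{1-2\beta}$ (an error that is $o(1)$ relative, hence $o(q_m^2/m^{2\beta})$ inside the $\bar\Phi$ argument only if $q_m/m^\beta$ isn't too large — but $\bar\Phi$ is monotone so a relative perturbation of $v_m$ translates cleanly), and the $x_m \sim m$ versus $x_m = m$ discrepancy — and all three must be shown to fit inside the fixed gap $\epsilon$ between $c$ and $c^\pm$ simultaneously, uniformly for all large $m$. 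The monotonicity of $\bar\Phi$ and of $z\mapsto \sup_{t\le z}W(t)$ in distribution makes each individual step routine, and choosing $\alpha$ large (as a function of $\beta$ only) makes the error term $O(m^{-\beta})$ rather than merely $o(1)$; the only genuine care needed is that the constant $\kappa$ from Sakhanenko does not depend on $m$, which holds because the $\theta_j$ for different $m$ are just tail segments of one fixed sequence (or can be taken as such after reindexing), so a single $\kappa$ serves uniformly.
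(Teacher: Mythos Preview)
Your proposal is correct and follows essentially the same route as the paper: reindex to $\theta_j=\xi_{1,x_m+j-1}-\xi_{2,x_m+j-1}$, identify $\sup_y\Delta$ with $\sup_t\Xi_\theta(t)$ by piecewise linearity, apply Sakhanenko with threshold $\frac{\epsilon}{2}q_m m^{-\beta}$ and $\alpha=2(\beta+1)$ to get the $O(m^{-\beta})$ error, then sandwich using $\P[\sup_{t<L_\theta^2}W(t)>a]=2\bar\Phi(a/\sqrt{L_\theta^2})$ together with $L_\theta^2\sim\frac{2}{2\beta-1}m^{1-2\beta}$ and the monotonicity of $\bar\Phi$. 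The paper makes the slack bookkeeping explicit by introducing intermediate thresholds $c\pm\epsilon/2$ whose normalized versions $c_m^\pm$ are shown to lie strictly inside $(c^-,c^+)$ for large $m$, which is exactly the three-perturbation absorption you anticipated.
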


\begin{proof}
Let $\theta_j = \xi_{1(j+x_m-1)} -\xi_{2(j+x_m-1)}$ for $j\geq 1$. Define $\Xi_\theta$ and $L_\theta^\alpha$ as in \prettyref{thm:Sakhanenko}. Note that \[
\sup_{y\geq x_m} \Delta(x_m,x_m,y,y) = \sup_{\ell\geq 0} \sum_{j=1}^{\ell} \theta_j =\sup_{0\leq t< L_\theta^2} \Xi_\theta(t).
\]
Thus
\[
E  \triangleq  \left\{\sup_{y\geq x_m} \Delta(x_m, x_m,y,y)  > \frac{c q_m}{m^{\beta}} \right\}=\left\{\sup_{0\leq t< L_\theta^2} \Xi_\theta(t) > \frac{c q_m}{m^{\beta}} \right\}.
\]
Let $W$ be the Wiener process in \prettyref{thm:Sakhanenko}, and
\begin{align*}
E_0 & = \left\{\sup_{0\leq t< L_\theta^2}  |\Xi_\theta(t)-W(t)| > \frac{\epsilon q_m}{2 m^\beta} \right\},\\
E_\pm & = \left\{\sup_{0\leq t< L_\theta^2}  W(t) > \left(c\pm\frac{1}{2}\epsilon\right) \frac{q_m}{m^\beta}\right\}.
\end{align*}
Since $E_+\subset E\cup E_0$ and $E\subset E_-\cup E_0$, we have
\begin{equation}\label{eq:sandwich}
\P[E_+] - \P[E_0]\leq \P[E] \leq \P[E_-] + \P[E_0].
\end{equation}

We first show that $\P[E_0]=O(m^{-\beta})$. \prettyref{thm:Sakhanenko} yields
\[
\P[E_0] \leq  L_\theta^\alpha \left(\frac{4\alpha\kappa}{\epsilon}\right)^\alpha m^{\alpha\beta} q_m^{-\alpha}  \leq  L_\theta^\alpha \left(\frac{4\alpha\kappa}{\lambda\epsilon}\right)^\alpha m^{\alpha\beta-\alpha/2},
\]
where we have used $q_m \geq \lambda \sqrt{m}$ for some $\lambda>0$ in the last step.
Note that 
\[
|\xi_{1j}-\xi_{2j}|^\alpha \leq \max \{ \xi_{1j}^\alpha, \xi_{2j}^\alpha \} \leq \xi_{1j}^\alpha + \xi_{2j}^\alpha,
\]
and $\E \xi_{1j}^\alpha = \E \xi_{2j}^\alpha = \Gamma(\alpha+1) j^{-\alpha\beta}$, where $\Gamma(\cdot)$ is the gamma function. Thus for $\alpha > \beta^{-1}$,
\begin{align*}
L_\theta^\alpha &= \sum_{j=1}^\infty \E|\theta_j|^\alpha \leq 2 \sum_{j=x_m}^\infty \E \xi_{1j}^\alpha =  2 \Gamma(\alpha+1) \sum_{j=x_m}^\infty j^{-\alpha\beta} \\
&\sim 2 \Gamma(\alpha+1) \int_m^\infty z^{-\alpha\beta}dz=  \frac{2\Gamma(\alpha+1)}{\alpha\beta-1} m^{1-\alpha\beta}.
\end{align*}
Set $\alpha=2+2\beta$, which satisfies $\alpha>\beta^{-1}$ for $\beta>1/2$. It follows that for all large enough $m$, 
\begin{equation}\label{eq:small-error}
\P[E_0] = O(m^{1-\alpha/2}) = O(m^{-\beta}). 
\end{equation}
Now we compute $\P[E_\pm]$.
The well-known formula for the distribution of the maximum of a Wiener process (see (6.5.3) of \cite{resnick1992adventures}) yields
\begin{align}\label{eq:max-Wiener}
\P[E_\pm]  = 2\bar \Phi\left(\frac{(c\pm\epsilon/2) q_m m^{-\beta}}{\sqrt{L_\theta^2}}\right) = 2\bar\Phi\left(\frac{c_m^\pm q_m}{\sqrt{2m}} \right),
\end{align}
where
\[
c_m^{\pm} = \frac{(c\pm\epsilon/2)\sqrt{2}m^{1/2-\beta}}{\sqrt{L_\theta^2}}. 
\]
Note that $\E\theta_j^2 = 2\text{Var}[\xi_{1(j+x_m-1)}]=2(j+x_m-1)^{-\beta}$. Thus
\begin{align*}
L_\theta^2  & = \sum_{j=1}^\infty \E\theta_j^2 =  2\sum_{j=x_m}^\infty j^{-2\beta} \sim 2 \int_{m}^\infty z^{-2\beta}dz =  \frac{2}{2\beta-1} m^{1-2\beta},
\end{align*}
from which it follows that 
\[
c_m^+\to \left(c+\frac{1}{2}\epsilon\right) \sqrt{2\beta-1}< c^+,
\]
and hence $c_m^+ < c^+$ for large $m$. 
Similarly, $c_m^- > c^-$. Therefore, \eqref{eq:normal-approx-upper} and \eqref{eq:normal-approx-lower} follow from \eqref{eq:sandwich}, \eqref{eq:small-error}, \eqref{eq:max-Wiener}, and the monotonicity of $\bar \Phi$.
\myfill \end{proof}

Together with some large deviation results, \prettyref{lemma:normal-approx} immediately yields the following bounds on the probability of ever having a tie, which is what will be used directly in the proof of \prettyref{thm:duration-equal}. The proof of \prettyref{lemma:first-visit} is found in  \prettyref{app:first-visit}.

\begin{lemma}\label{lemma:first-visit}\label{LEMMA:FIRST-VISIT}
Suppose $|\rho(\x)| = \Omega(1)$, where 
\begin{equation}\label{eq:rho}
\rho(\x) = \frac{x_1-x_2}{\sqrt{\|\x\|_1}}=\frac{x_1-x_2}{\sqrt{x_1+x_2}}.
\end{equation}
For $\beta>1/2$ and $\epsilon>0$, the following inequalities hold,
\begin{align*}
\P[T_1(\beta,1,\x)<\infty] &\leq 2\bar\Phi\left(c_1 |\rho(\x)|\right) + O(\|\x\|_1^{-\beta}),\\
\P[T_1(\beta,1,\x)<\infty] &\geq 2\bar\Phi\left(c_2 |\rho(\x)|\right) - O(\|\x\|_1^{-\beta}),
\end{align*}
where $c_1 = (1-\epsilon) \sqrt{2\beta-1}$ and $c_2 = \sqrt{2\beta-1}$.
\end{lemma}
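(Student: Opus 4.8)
The plan is to reduce the event $\{T_1(\beta,1,\x)<\infty\}$ to an event about the process $\Delta$ that Lemma~\ref{lemma:normal-approx} controls, and then to absorb the discrepancy between ``starting from $\x$'' and ``starting from the symmetrized point $(x_m,x_m)$'' into a large-deviation error of order $O(\|\x\|_1^{-\beta})$. Without loss of generality assume $x_1 > x_2$, so $\rho(\x) > 0$ and color $1$ is ahead; a tie ever occurs iff color $2$ at some point catches up. In the exponential embedding this says precisely that there exist $y_1 > x_1$ and $y_2 = y_1$ (the common value after the tie) — more carefully, a tie at some future time corresponds to the partial sums $S_1(x_1, y)$ and $S_2(x_2, y)$ crossing, i.e.\ to $\sup_{y} \bigl(S_2(x_2,y) - S_1(x_1,y)\bigr) \ge 0$, equivalently $\sup_{y \ge x_1} \Delta(x_1, x_2, y, y) \le 0$ failing. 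So the first step is to write $\P[T_1(\beta,1,\x)<\infty] = \P\bigl[\sup_{y\ge x_1} \bigl(-\Delta(x_1,x_2,y,y)\bigr) \ge 0\bigr]$ using Rubin's embedding and the observation that the process can only tie after both colors have reached the larger coordinate $x_1$.

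The second step is to peel off the ``head'' of color $2$'s clock: $-\Delta(x_1,x_2,y,y) = S_2(x_2,y) - S_1(x_1,y) = S_2(x_2,x_1) + \bigl(S_2(x_1,y) - S_1(x_1,y)\bigr) = S_2(x_2,x_1) - \Delta(x_1,x_1,y,y)$. Here $S_2(x_2,x_1) = \sum_{j=x_2}^{x_1-1}\xi_{2j}$ is a sum of $x_1 - x_2$ independent exponentials, independent of the tail process $\{\Delta(x_1,x_1,y,y)\}_{y\ge x_1}$. Its mean is $\sum_{j=x_2}^{x_1-1} j^{-\beta} \asymp (x_1-x_2)\,\|\x\|_1^{-\beta} =: \mu$, which is the natural ``gap in time units'' color $2$ must make up. The event becomes $\{\sup_{y\ge x_1}\Delta(x_1,x_1,y,y) \ge S_2(x_2,x_1)\}$. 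Now I would condition on $S_2(x_2,x_1)$ being within a small multiplicative window $[(1\mp\epsilon')\mu]$ of its mean, which holds with probability $1 - O(\|\x\|_1^{-\beta})$ by a standard Chernoff/Bernstein large-deviation bound for sums of exponentials (here $x_1 - x_2 = \rho(\x)\sqrt{\|\x\|_1} = \Omega(\sqrt{\|\x\|_1})$ ensures the deviation is genuinely rare — this is exactly where $|\rho(\x)| = \Omega(1)$ is used). On the complement we pay $O(\|\x\|_1^{-\beta})$; on the good event we get a two-sided sandwich of $\P[E]$ between $\P[\sup_y \Delta(x_1,x_1,y,y) \ge (1\pm\epsilon')\mu]$.

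The third step is to match these bounds to Lemma~\ref{lemma:normal-approx}. Set $m = \|\x\|_1 = x_1 + x_2$ (so $x_m := x_1 \sim m$ up to the constant factor — actually one wants $x_m \sim m$; since $x_1 \le m$, a minor reparametrization or a monotonicity observation handles the constant, or one simply takes $m = x_1$ and notes $x_1 \sim \|\x\|_1$ when $\rho$ is bounded, and treats the general case by the same scaling) and $c q_m / m^\beta = (1\pm\epsilon')\mu$, i.e.\ $q_m = (1\pm\epsilon')(x_1 - x_2) \cdot (\text{bounded factor})$, which satisfies $q_m = \Omega(\sqrt m)$ precisely because $x_1 - x_2 = \Omega(\sqrt m)$. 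Plugging into \eqref{eq:normal-approx-upper}–\eqref{eq:normal-approx-lower} gives $2\bar\Phi\bigl(c^\mp q_m/\sqrt{2m}\bigr) \pm O(m^{-\beta})$, and $c^\mp q_m/\sqrt{2m}$ simplifies, after collecting the constants $\sqrt{2\beta-1}$ and the $\asymp$ from $\mu$, to $(1 + o(1))\cdot\sqrt{2\beta-1}\,(x_1-x_2)/\sqrt{2\|\x\|_1}$ — wait, one must be careful that the $\sqrt 2$ from $\mu = (x_1-x_2)\|\x\|_1^{-\beta}$ versus the scaling in the lemma line up so the final argument is $\sqrt{2\beta-1}\,\rho(\x)$ up to the $(1\mp\epsilon)$ slack; choosing $\epsilon'$ small relative to $\epsilon$ and using continuity/monotonicity of $\bar\Phi$ absorbs all the $o(1)$ slop into the stated constants $c_1 = (1-\epsilon)\sqrt{2\beta-1}$ and $c_2 = \sqrt{2\beta-1}$. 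The main obstacle is the bookkeeping in this last step: reconciling the normalization $m = \|\x\|_1$ with the hypothesis $x_m \sim m$ of Lemma~\ref{lemma:normal-approx}, and tracking the various constant factors ($\sqrt 2$'s, the $\asymp$ in $\mu = \sum_{j=x_2}^{x_1-1} j^{-\beta}$, the $(1\pm\epsilon')$ windows) so that they collapse cleanly into the single $(1-\epsilon)$ in $c_1$ while the lower bound loses nothing. Everything else — the embedding reduction, the independence splitting, the exponential-sum Chernoff bound — is routine.
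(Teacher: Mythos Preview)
Your overall strategy --- split off $S_2(x_2,x_1)$, control it by a Chernoff bound, and feed the remaining $\sup_y\Delta(x_1,x_1,y,y)$ into Lemma~\ref{lemma:normal-approx} --- is exactly what the paper does. The gap is in your embedding characterization of $\{T_1<\infty\}$, and it is not mere bookkeeping.

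After the peeling step you arrive at $\{T_1<\infty\}=\{\sup_{y\ge x_1}\Delta(x_1,x_1,y,y)\ge S_2(x_2,x_1)\}$, which unwinds to $\{\exists\,y\ge x_1:\ S_1(x_1,y)\ge S_2(x_2,y)\}$, i.e.\ color~2 reaches some level $y$ no later than color~1. That is only one inclusion. If the process ties exactly once, at level $y$, and the very next ball goes to color~1 (positive probability), then color~1 reaches every level strictly before color~2 does, so your event fails even though $T_1<\infty$. The correct condition is $S_2(x_2,y)<S_1(x_1,y{+}1)$ for some $y\ge x_1$, which after peeling gives
\[
\{T_1<\infty\}=\Bigl\{\sup_{y\ge x_1}\bigl[\xi_{1y}+\Delta(x_1,x_1,y,y)\bigr]>S_2(x_2,x_1)\Bigr\}.
\]
The extra $\xi_{1y}$ inside the supremum is precisely what you dropped. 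Since your event is a strict subset of $\{T_1<\infty\}$, an upper bound on its probability says nothing about $\P[T_1<\infty]$; your lower-bound half, on the other hand, goes through as written.

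The repair is one more large-deviation estimate: a union bound over exponentials gives $\P\bigl[\sup_{y\ge x_1}\xi_{1y}>\tfrac{\epsilon}{3}\,q\,m^{-\beta}\bigr]=O(m\,e^{-cq})=O(m^{-\beta})$ once $q=\Omega(\sqrt m)$. The paper isolates this as a separate lemma and writes $E\subset E_1\cup E_2\cup E_3$, where your two ingredients are $E_1$ (Lemma~\ref{lemma:normal-approx}) and $E_3$ (Chernoff for $S_2(x_2,x_1)$), and the missing $E_2$ absorbs the stray $\xi_{1y}$. One minor normalization point: take $m=\|\x\|_1/2$ rather than $m=\|\x\|_1$, so that $x_1\sim m$ holds directly and the hypothesis $x_m\sim m$ of Lemma~\ref{lemma:normal-approx} is met without the reparametrization you worry about.
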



\subsubsection{Proof of \prettyref{thm:duration-equal}}\label{subsec:duration-equal-proof}

Let $A_t$ be the set of states reachable at time $t$ by a $(\beta,1,\x_0)$-urn process, i.e. 
\[
A_t = \{\x\in \N^2: \|\x\|_1 = \|\x_0\|_1 + t, \ x_k \geq x_{0k} \text{ for } k=1,2\}.
\]
We will need the following lemma in the proof of \prettyref{thm:duration-equal}. 

\begin{lemma}\label{lemma:difference}\label{LEMMA:DIFFERENCE}
Let $\X$ be a $(\beta,1,\x_0)$-urn process and $A_t(\delta) = \{\x\in A_t: |\rho(\x)|\leq \delta\}$, where $\rho(\x)$ is defined in \prettyref{eq:rho}. For $\beta>1/2$ and $\gamma < \beta\wedge 1-1/2$, where $a\wedge b = \min\{a,b\}$, we have, as $t\to\infty$,
\[
t^{\beta}\P[\X(t) = \x]  \to 2^{\beta+1} K(\beta,1,\x_0),
\]
uniformly for $\x\in A_t(t^\gamma)$, where $K(\beta,1,\x_0)$ is given by \prettyref{eq:K}.
\end{lemma}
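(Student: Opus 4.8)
The plan is to use Rubin's exponential embedding (\prettyref{thm:rubin}) to convert the point mass $\P[\X(t)=\x]$ into an integral of the density of $\Delta(\x_0,\x)$ near the origin, and then to identify that density, in the limit, with $K(\beta,1,\x_0)$ via \prettyref{lemma:cf}. Fix $\x\in A_t(t^\gamma)$ with $t$ large; since $|\rho(\x)|\le t^\gamma$ forces $|x_1-x_2|=O(t^{\gamma+1/2})$ and $x_1+x_2=\|\x_0\|_1+t$, we have $x_1,x_2\sim t/2$ uniformly (because $\gamma<1/2$). Write $u=S_1(x_{01},x_1)$ and $v=S_2(x_{02},x_2)$, so $\Delta(\x_0,\x)=u-v$. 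Since $u$ (resp.~$v$) is the time color~$1$ (resp.~$2$) acquires its $x_1$-th (resp.~$x_2$-th) ball, the event $\{\X(t)=\x\}$ coincides with $\{\max\{u,v\}<\min\{u+\xi_{1x_1},\,v+\xi_{2x_2}\}\}$: the larger of $u,v$ is the $t$-th arrival, and the next ball of either color must arrive after it. The increments $\xi_{1x_1}$ and $\xi_{2x_2}$ are exponential with rates $x_1^\beta$ and $x_2^\beta$ and independent of $(u,v)$, so conditioning on $(u,v)$ and integrating them out gives
\[
\P[\X(t)=\x]=\int_0^\infty e^{-x_2^\beta w}g_\x(w)\,dw+\int_{-\infty}^0 e^{x_1^\beta w}g_\x(w)\,dw ,
\]
where $g_\x$ is the density of $\Delta(\x_0,\x)$ (well-defined since $x_1>x_{01}$ and $x_2>x_{02}$ for large $t$). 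By independence of $S_1(x_{01},x_1)$ and $S_2(x_{02},x_2)$ and \prettyref{eq:Psi}, the characteristic function of $\Delta(\x_0,\x)$ is $\hat g_\x(s)=\Psi(s;\beta,x_{01},x_1)\,\Psi^*(s;\beta,x_{02},x_2)$.

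The next step is a uniform integrable majorant for $\hat g_\x$. Since $|\Psi(s;\beta,x,y)|=\prod_{j=x}^{y-1}(1+s^2/j^{2\beta})^{-1/2}$ is at most $1$ and nonincreasing in $y$, once $\min\{x_1,x_2\}$ is large one may keep a fixed number of factors (say twenty) to obtain $|\hat g_\x(s)|\le G(s):=(1+s^2/c)^{-10}$ for a suitable constant $c$, for all large $t$ and uniformly on $A_t(t^\gamma)$; the same bound holds for $\tilde\Psi(s;\beta,1,\x_0)$ on letting $x\to\infty$. This yields (a) $\sup_\x\|g_\x\|_\infty\le\frac{1}{2\pi}\int G<\infty$; (b) a uniform modulus of continuity at $0$, namely $\sup_\x|g_\x(w)-g_\x(0)|\le\frac{1}{2\pi}\int\min\{|sw|,2\}\,G(s)\,ds\to 0$ as $w\to 0$; and (c), via a uniform dominated-convergence argument (split the inversion integral at $|s|=R$, bound the tail by $\frac{1}{\pi}\int_{|s|>R}G$ and use locally uniform convergence on $|s|\le R$), the convergence $g_\x(0)=\frac{1}{2\pi}\int\hat g_\x(s)\,ds\to K(\beta,1,\x_0)$, uniformly on $A_t(t^\gamma)$. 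For the locally uniform convergence $\hat g_\x(s)\to\tilde\Psi(s;\beta,1,\x_0)$ I would take WLOG $x_1\ge x_2$ and factor $\hat g_\x(s)=\bigl[\Psi(s;\beta,x_{01},x_2)\Psi^*(s;\beta,x_{02},x_2)\bigr]\,\Psi(s;\beta,x_2,x_1)$: the bracket is the diagonal quantity of \prettyref{lemma:cf} with $r=1$, which converges to $\tilde\Psi(s;\beta,1,\x_0)$ locally uniformly, while the gap factor $\Psi(s;\beta,x_2,x_1)$ has $x_1-x_2=O(t^{\gamma+1/2})$ terms with indices of order $t$, so $\log\Psi(s;\beta,x_2,x_1)=O\bigl((x_1-x_2)\,|s|\,t^{-\beta}\bigr)=O(t^{\gamma+1/2-\beta})\to 0$. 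This is precisely where the hypothesis $\gamma<\beta\wedge 1-1/2$ is used: the part $\gamma<1/2$ gives $x_k\sim t/2$, and the part $\gamma<\beta-1/2$ kills the gap factor.

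Finally I would assemble the estimate. Substituting $w=s/x_2^\beta$ and $w=-s/x_1^\beta$ in the two integrals above,
\[
t^\beta\,\P[\X(t)=\x]=\Bigl(\tfrac{t}{x_2}\Bigr)^{\!\beta}\!\int_0^\infty e^{-s}g_\x(s/x_2^\beta)\,ds+\Bigl(\tfrac{t}{x_1}\Bigr)^{\!\beta}\!\int_0^\infty e^{-s}g_\x(-s/x_1^\beta)\,ds .
\]
Here $(t/x_k)^\beta\to 2^\beta$ uniformly since $x_k\sim t/2$; the uniform modulus of continuity at $0$ together with $x_k^\beta\to\infty$ gives $\int_0^\infty e^{-s}g_\x(\pm s/x_k^\beta)\,ds=g_\x(0)+o(1)$ uniformly; and $g_\x(0)\to K(\beta,1,\x_0)$ uniformly by the previous step. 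Hence $t^\beta\P[\X(t)=\x]\to 2^\beta K+2^\beta K=2^{\beta+1}K(\beta,1,\x_0)$, uniformly on $A_t(t^\gamma)$, as claimed. I expect the main obstacle to be the uniformity in the second step: upgrading \prettyref{lemma:cf}, stated only for the diagonal $x_1=x_2$, to all $\x\in A_t(t^\gamma)$, and making the Fourier-inversion limit uniform in $\x$; controlling the gap factor $\Psi(s;\beta,x_2,x_1)$ under $\gamma<\beta\wedge 1-1/2$ is the crux.
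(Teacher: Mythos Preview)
Your proof is correct and follows the same overall strategy as the paper: both use the exponential embedding to rewrite $\{\X(t)=\x\}$ as $\{-\xi_{1x_1}<\Delta(\x_0,\x)<\xi_{2x_2}\}$, invoke Fourier inversion with a uniform integrable majorant on $\hat g_\x$, and establish uniformity on $A_t(t^\gamma)$ via the same gap-factor estimate $\log\Psi(s;\beta,x_2,x_1)=O\bigl(|s|\,t^{\gamma+1/2-\beta}\bigr)\to 0$ (the paper packages this as convergence of the phase $\Theta(\x,t,s)$ in Appendix~D). The one substantive organizational difference is in how the independent exponentials $\xi_{1x_1},\xi_{2x_2}$ are integrated out. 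The paper does this \emph{in Fourier space}, using $\E e^{is\xi_{kx_k}}=(1-is/x_k^\beta)^{-1}$ to obtain the exact closed form
\[
\P[\X(t)=\x]=\frac{x_1^{-\beta}+x_2^{-\beta}}{2\pi}\int_{-\infty}^\infty \psi(s;x_1{+}1,x_2{+}1)\,ds,
\]
i.e.\ the two extra factors are absorbed into the characteristic function, after which a single dominated-convergence argument suffices. You instead keep the Laplace integrals $\int e^{\mp x_k^\beta w}g_\x(w)\,dw$ of the density and add a uniform modulus-of-continuity step at $0$ to replace $g_\x(\pm s/x_k^\beta)$ by $g_\x(0)$. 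Both routes arrive at the same limit $2^{\beta+1}K(\beta,1,\x_0)$; the paper's identity is a bit slicker (it avoids your step~(b) entirely), while your approach makes the role of the density at the origin more transparent.
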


\begin{proof}
Note that $\X(t)=\x$ if and only if color 1 gets its $x_1$-th ball before color 2 gets its $(x_2+1)$-st ball and at the same time color 2 gets its $x_2$-th before color 1 gets its $(x_1+1)$-st ball. Using the exponential embedding, this probability is given by
\[
 \P[\X(t) = \x] = \P[-\xi_{1 x_1} < \Delta(\x_0,\x) < \xi_{2 x_2}],
\]
where $\Delta$ is defined in \prettyref{eq:Delta}.
Let $\psi(s;\x)$ denote the characteristic function of $\Delta(\x_0,\x)$, i.e.
\[
\psi(s;\x)  = \Psi(s;\beta,x_{01},x_1) \Psi^*(s;\beta,x_{02},x_2),
\]
where $\Psi$ is given by \prettyref{eq:Psi}.
By the inversion formula, 
\begin{align*}
&\quad\;\P[-\xi_{1x_1} < \Delta(\x_0,\x)  < \xi_{2x_2}\mid \xi_{1x_1},\xi_{2x_2}] \\
&= \frac{1}{2\pi} \int_{-\infty}^\infty  \psi(s;\x) \frac{e^{is  \xi_{1x_1}} - e^{-is  \xi_{2x_2}}}{is} ds.
\end{align*}
Deconditioning and interchanging the order of integrations by Fubini's theorem, we obtain
\begin{align*}
&\quad\;\P[-\xi_{1x_1} < \Delta(\x_0,\x)  < \xi_{2x_2}]\\
&=  \frac{1}{2\pi} \int_{-\infty}^\infty  \psi(s;\x) \frac{\E\left[e^{is  \xi_{1 x_1}}\right] - \E\left[e^{-is \xi_{1 x_2}}\right]}{is} ds\\
&=  \frac{1}{2\pi} \int_{-\infty}^\infty  \psi(s;\x) \frac{1}{is}\left[\left(1-\frac{is}{x_1^\beta}\right)^{-1} - \left(1+\frac{is}{x_2^\beta}\right)^{-1}\right]ds\\
&=  \frac{x_1^{-\beta}+x_2^{-\beta}}{2\pi} \int_{-\infty}^\infty  \psi(s ;x_1+1,x_2+1) ds.
\end{align*}

The rest of the proof is relegated to \prettyref{app:uniform}. We only sketch it here. For $\x\in A_t(t^\gamma)$, we have $x_1 + x_2 = \|\x\|_1=\|\x_0\|_1+t \sim t$, and $|x_2-x_1| = |\rho(\x)|\sqrt{\|\x\|_1} = O(t^{\gamma+1/2}) = o(t^{1\wedge \beta})$. It follows that as $t\to\infty$,
\[
\frac{x_1}{t}\to \frac{1}{2}, \frac{x_2}{t} \to \frac{1}{2},
\]
and 
\[
\psi(s;x_1+1,x_2+1) \to \tilde \Psi(s;\beta,1,\x_0),
\]
uniformly for $\x\in A_t(t^\gamma)$, where $\tilde \Psi$ is defined in \prettyref{eq:Psi-tilde}.  The proof is then completed by letting $t\to\infty$ and applying the Dominated Convergence Theorem. 
\myfill \end{proof}

Now we prove \prettyref{thm:duration-equal}.
\begin{proof}[of \prettyref{thm:duration-equal}]
Let $\X$ denote a $(\beta,1,\x_0)$-urn process and $A_t$ the set of reachable states at time $t$ as defined above. Let $\lambda > 0$ and $\gamma \in (0, \beta\wedge 1-1/2)$. Let 
\[
A_t^1 = A_t(\lambda), \quad A_t^2 = A_t \setminus A_t(t^\gamma), \quad A_t^3 = A_t(t^\gamma) \setminus A_t(\lambda),
\]
where $A_t(\delta) = \{\x\in A_t: |\rho(\x)|\leq \delta\}$ as in \prettyref{lemma:difference} and $|A_t(\delta)|\sim \delta t^{1/2}$.
Note that $\P[T(\X)\geq t] = \sum_{j=1}^3 P_j$, where
\[
P_j = \sum_{\x\in A_t^j}  \P[\X(t) = \x] \cdot \P[T_1(\beta,1,\x)<\infty].
\]

We first bound $P_1$.  Since $|A_t^1| \sim \lambda t^{1/2}$, by \prettyref{lemma:difference},
\[
P_1 \leq \sum_{\x\in A_t^1}  \P[\X(t) = \x] \sim  \lambda 2^{\beta+1} K(\beta,1,\x_0)   t^{1/2-\beta}.
\]

To bound $P_2$, let $\y_t = \arg\min_{\x\in A_t^2} \rho(\x)$.
By \prettyref{thm:dominance-diff}, $\P[T_1(\beta,1,\x)<\infty] \leq \P[T_1(\beta,1,\y_t)<\infty]$ for $\x\in A_t^2$. Thus
\begin{align*}
P_2 &\leq \P[T_1(\beta,1,\y_t)<\infty]\ \P[\X(t) \in A_t^2 ]\leq \P[T_1(\beta,1,\y_t)<\infty] \\
&\leq 2\bar\Phi(c_1 t^{\gamma}) + O(t^{-\beta}) = o(t^{1/2-\beta}),
\end{align*}
where the last inequality follows from \prettyref{lemma:first-visit}.

Now we bound $P_3$. By Lemmas \ref{lemma:first-visit} and \ref{lemma:difference},
\begin{align*}
P_3 &= \sum_{\x\in A_t^3}  \P[\X(t) = \x] \cdot \P[T_1(\beta,1,\x)<\infty]\\
&\sim  2^{\beta} K(\beta,1,\x_0)  t^{-\beta} \sum_{\x\in A_t^3} \P[T_1(\beta,1,\x)<\infty]\\
& \leq  2^{\beta} K(\beta,1,\x_0)  t^{-\beta}\sum_{\x\in A_t^3}  \bar\Phi(c_1 |\rho(\x)|) + O(t^{\gamma+1/2-2\beta})\\
& \sim  2^{\beta} K(\beta,1,\x_0)  t^{1/2-\beta} \int_{\lambda}^{t^\gamma} \bar\Phi(c_1 u)du + o(t^{-\beta})\\
& \leq  2^{\beta} K(\beta,1,\x_0)  t^{1/2-\beta} \int_{0}^{\infty} \bar\Phi(c_1 u)du + o(t^{-\beta})\\
& =  \frac{2^{\beta-1/2}K(\beta,1,\x_0) }{(1-\epsilon)\sqrt{(2\beta-1)\pi}}   t^{1/2-\beta}  + o(t^{-\beta}),
\end{align*}
where we have used  $\int_0^\infty \bar \Phi(c_1 u)du = c_1^{-1}(2\pi)^{-1/2}$ in the last step.
Letting $t\to \infty$, we obtain from the bounds on the $P_j$'s,
\[
\limsup_{t\to\infty} \frac{\P[T(\X)\geq t]}{t^{1/2-\beta}} \leq  \lambda 2^{\beta+1} K(\beta,1,\x_0) + \frac{2^{\beta-1/2}K(\beta,1,\x_0)}{(1-\epsilon)\sqrt{(2\beta-1)\pi}} .
\]
Letting $\lambda,\epsilon \to 0$,
\begin{equation}\label{eq:duration-equal-upper}
\limsup_{t\to\infty} \frac{\P[T(\X)\geq t]}{t^{1/2-\beta}} \leq  \frac{2^{\beta-1/2}}{\sqrt{(2\beta-1)\pi}} K(\beta,1,\x_0).
\end{equation}
On the other hand,
\begin{align*}
P_3 &\sim  2^{\beta} K(\beta,1,\x_0)  t^{-\beta} \sum_{\x\in A_t^3} \P[T_1(\beta,1,\x)<\infty]\\
& \geq  2^{\beta} K(\beta,1,\x_0)  t^{-\beta}\sum_{\x\in A_t^3}  \bar\Phi(c_2 |\rho(\x)|) - O(t^{\gamma+1/2-2\beta})\\
& \sim  2^{\beta} K(\beta,1,\x_0) t^{1/2-\beta} \int_{\lambda}^{t^\gamma} \bar\Phi(c_2 u)du - o(t^{-\beta}).
\end{align*}
Since $\P[T(\X)\geq t] \geq P_3$, letting $t\to \infty$, we obtain,
\[
\liminf_{t\to\infty} \frac{\P[T(\X)\geq t]}{t^{1/2-\beta}} \geq  2^{\beta}   K(\beta,1,\x_0)   \int_{\lambda}^{\infty} \bar\Phi(c_2 u)du.
\]
Letting $\lambda\to 0$ and using $\int_0^\infty \bar \Phi(c_2 u)du = c_2^{-1} (2\pi)^{-1/2}$,
\begin{equation}\label{eq:duration-equal-lower}
\liminf_{t\to\infty} \frac{\P[T(\X)\geq t]}{t^{1/2-\beta}} \geq  \frac{2^{\beta-1/2}}{\sqrt{(2\beta-1)\pi}} K(\beta,1,\x_0).
\end{equation}
Combining \prettyref{eq:duration-equal-upper} and \prettyref{eq:duration-equal-lower} yields \prettyref{eq:duration-equal}.
\myfill \end{proof}

\subsection{Different Fitnesses}\label{subsec:duration-diff}

We consider in this section the case of different fitnesses. When the feedback is linear ($\beta=1$), it has been shown in \cite{jiang2015competition} that the duration has a power-law tail with exponent between $(r-1)x_{01}$ and $(r-1)(x_{01}-r^{-1})$. We focus on the superlinear ($\beta>1$) and sublinear ($\beta<1$) regimes in this section. The main results are presented in \prettyref{subsec:duration-diff-result}. The proof for the superlinear linear regime is given in \prettyref{subsubsec:duration-diff-superlinear-proof}, and that for the sublinear regime is given in \prettyref{subsubsec:duration-diff-sublinear-proof}.

\subsubsection{Main Results}\label{subsec:duration-diff-result}

The following theorem shows that when the feedback is superlinear, the duration $T(\beta,r,\x_0)$ has a power-law tail with exponent $\beta-1$. Compared to the duration $T(\beta,1,\x_0)$ with the same $\beta$ and $\x_0$ in the equal fitness case, the duration $T(\beta,r,\x_0)$ with $r>1$ has a significantly heavier tail, which means that in the superlinear regime competitions may become much longer when agents have different fitnesses, similar to the observation in \cite{jiang2015competition} for the linear regime. In contrast to the linear regime, however, the exponent in the superlinear regime does not depend on either the fitness ratio $r$ or the initial condition $\x_0$. 

\begin{theorem}\label{thm:duration-diff-superlinear}
For $r>1$ and $\beta>1$,
\begin{equation}\label{eq:duration-diff-superlinear}
\P[T(\beta,r,\x_0)\geq t] \sim t^{1-\beta}\frac{(r-1)2^{\beta-1}}{\beta-1} K(\beta, r,\x_0).
\end{equation}
\end{theorem}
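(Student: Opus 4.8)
The plan is to combine Rubin's exponential embedding (\prettyref{thm:rubin}) with a \emph{last-tie decomposition}. Since $\beta>1$ the series $\sum_j\E\xi_{kj}=f_k^{-1}\sum_j j^{-\beta}$ converges, so one color wins (monopoly) and $T(\beta,r,\x_0)<\infty$ almost surely; alternatively this is immediate from \prettyref{corollary:intensity-diff} because $r>1$. A tie at the state $(m,m)$ can occur only at the deterministic time $2m-\|\x_0\|_1$, so, writing $M_t=\lceil (t+\|\x_0\|_1)/2\rceil$ and applying the Markov property at these times, the event $\{T\geq t\}$ splits into the \emph{disjoint} events ``the last tie is at $(m,m)$'' for $m\geq M_t$, and each of these has probability $\P[\X(2m-\|\x_0\|_1)=(m,m)]\cdot\bigl(1-\P[T_2(\beta,r,m,m)<\infty]\bigr)$. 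Hence
\begin{align*}
\P[T(\beta,r,\x_0)\geq t] &=\sum_{m\geq M_t}\P\bigl[\X(2m-\|\x_0\|_1)=(m,m)\bigr]\\
&\qquad\times\bigl(1-\P[T_2(\beta,r,m,m)<\infty]\bigr),
\end{align*}
and it suffices to find the $m\to\infty$ asymptotics of the two factors.

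For the first factor I would re-run the computation in the proof of \prettyref{lemma:difference} at the balanced state $\x=(m,m)$. Writing $\psi_m$ for the characteristic function of $\Delta(\x_0,(m,m))$, the inversion formula together with deconditioning on $\xi_{1m},\xi_{2m}$ (exactly as there) gives
\[
\P\bigl[\X(2m-\|\x_0\|_1)=(m,m)\bigr]=\frac{1}{2\pi}\int_{-\infty}^\infty\psi_m(s)\,\frac{(f_1^{-1}+f_2^{-1})m^{-\beta}}{(1-isf_1^{-1}m^{-\beta})(1+isf_2^{-1}m^{-\beta})}\,ds.
\]
Since $\psi_m(s)\to\tilde\Psi(s;\beta,r,\x_0)$ pointwise, the rational factor is $\sim(f_1^{-1}+f_2^{-1})m^{-\beta}$, and the integrand enjoys the uniform-in-$m$ integrable domination established in the proof of \prettyref{lemma:cf}, the Dominated Convergence Theorem and \prettyref{eq:K} yield $\P[\X(2m-\|\x_0\|_1)=(m,m)]\sim (1+r)\,K(\beta,r,\x_0)\,m^{-\beta}$ under the normalization $f_1=1$, $f_2=1/r$ used in \prettyref{lemma:cf}.

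For the second factor I need $\P[T_2(\beta,r,m,m)<\infty]\to 2/(r+1)$ as $m\to\infty$. The upper bound $\P[T_2(\beta,r,m,m)<\infty]\leq 2/(r+1)$ is \prettyref{corollary:first-return}. For the matching lower bound, condition on the first step out of $(m,m)$: it goes to $(m+1,m)$ with probability \emph{exactly} $r/(r+1)$ and to $(m,m+1)$ with probability exactly $1/(r+1)$. From $(m,m+1)$ the probability of ever tying is at least $\P[S_1(m,\infty)<S_2(m+1,\infty)]$ (color $1$ eventually leading forces a tie), and this probability tends to $1$ because $S_2(m+1,\infty)-S_1(m,\infty)$ has mean of order $(r-1)m^{1-\beta}>0$ and standard deviation of order $m^{1/2-\beta}$, so $\beta>1/2$ makes it positive with probability $\to 1$. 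From $(m+1,m)$ the probability of ever tying tends to $1/r$: coupling the gap process $X_1-X_2$ with biased random walks of up-probability $r/(r+1)$ and $r/(r+1)+\epsilon$ — the latter valid while the excursion stays within $O(\epsilon m)$ of the diagonal, where the true one-step probabilities lie within $\epsilon$ of $r/(r+1)$ — sandwiches the hitting probability between quantities that both converge to $1/r$ as $\epsilon\to0$. Combining, $\P[T_2(\beta,r,m,m)<\infty]\to\tfrac{r}{r+1}\cdot\tfrac1r+\tfrac1{r+1}\cdot1=\tfrac{2}{r+1}$. I expect this to be the main obstacle, as it must handle simultaneously the near-diagonal random-walk behavior and the global possibility that the lagging color escapes to monopoly; the local-limit factor, by contrast, is a routine variant of \prettyref{lemma:difference}.

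Finally I would assemble the pieces: the decomposition gives $\P[\text{last tie at }(m,m)]\sim (1+r)K(\beta,r,\x_0)\,m^{-\beta}\cdot\frac{r-1}{r+1}=(r-1)K(\beta,r,\x_0)\,m^{-\beta}$, and since this is a genuine asymptotic equivalence of nonnegative terms with $\sum_m m^{-\beta}<\infty$, the tails are asymptotically equivalent, so $\sum_{m\geq M_t}\P[\text{last tie at }(m,m)]\sim (r-1)K(\beta,r,\x_0)\sum_{m\geq M_t}m^{-\beta}\sim(r-1)K(\beta,r,\x_0)\,\frac{M_t^{1-\beta}}{\beta-1}$. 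As $M_t=t/2+O(1)$, we have $M_t^{1-\beta}\sim 2^{\beta-1}t^{1-\beta}$, and the right-hand side becomes $\frac{(r-1)2^{\beta-1}}{\beta-1}K(\beta,r,\x_0)\,t^{1-\beta}$, which is \prettyref{eq:duration-diff-superlinear}.
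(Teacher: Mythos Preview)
Your proposal is correct and follows essentially the same route as the paper: a last-tie decomposition $\P[T=t_{2m}]=\P[\X(t_{2m})=(m,m)]\cdot\P[T_2(\beta,r,m,m)=\infty]$, the Fourier-inversion local limit $\P[\X(t_{2m})=(m,m)]\sim(r+1)K(\beta,r,\x_0)m^{-\beta}$ obtained exactly as in \prettyref{lemma:difference}, the limit $\P[T_2(\beta,r,m,m)=\infty]\to(r-1)/(r+1)$, and then a Riemann-sum tail summation.

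The one place where you diverge from the paper is precisely the step you flagged as the main obstacle, and there the paper's argument is considerably simpler than your coupling sketch. The paper observes that for each \emph{fixed} $k$, the return probability $\P[T_2(\beta,r,m,m)=2k]$ is a finite product of transition probabilities $Q(\x,\x';\beta,r)$ at states $\x$ with both coordinates $\sim m$, and each such factor converges to the corresponding random-walk probability $Q(0,\Delta\x;0,r)$. Hence $\P[T_2(\beta,r,m,m)=2k]\to\P[T_2(0,r,0,0)=2k]$ for every $k$, and Fatou's lemma applied to the sum over $k$ gives $\liminf_m\P[T_2(\beta,r,m,m)<\infty]\geq\P[T_2(0,r,0,0)<\infty]=2/(r+1)$; together with the upper bound from \prettyref{corollary:first-return} this yields the limit. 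This avoids entirely the delicate issue you anticipated of controlling excursions that leave an $O(\epsilon m)$ window of the diagonal.
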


If feedback is sublinear, however, $T(\beta,r,\x_0)$ no longer has  a power-law tail. As the following theorem shows, the tail distribution of $T(\beta,r,\x_0)$ is upper bounded by a Weibull distribution with shape parameter $1-\beta$. Thus in the sublinear regime, $T(\beta,r,\x_0)$ with $r>1$ always has a lighter tail than the corresponding $T(\beta,1,\x_0)$. In particular, when $\beta=0$, we recover the known exponential tail of $T(0,r,\x_0)$.  

\begin{theorem}\label{thm:duration-sublinear-diff}
For $r>1$ and $\beta < 1$,
\begin{equation}\label{eq:duration-sublinear-diff-log}
\limsup_{t\to\infty} \frac{\log \P[T(\beta,r,\x_0)\geq t]}{t^{1-\beta}} \leq \frac{1-r}{1-\beta} 2^{\beta-1} x_{01}^\beta.
\end{equation}
\end{theorem}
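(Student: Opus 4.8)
The plan is to use Rubin's exponential embedding (\prettyref{thm:rubin}) together with a union bound over the level at which the last tie can occur, estimating each term by a Chernoff bound whose exponential tilt is pushed to the boundary of its admissible range. In the embedding, color $k$ holds exactly $m$ balls during the time interval $[S_k(x_{0k},m),\,S_k(x_{0k},m+1))$, so both colors are simultaneously at level $m$ at some instant if and only if these intervals overlap, i.e.\ if and only if $-\xi_{1m}<\Delta_m<\xi_{2m}$, where $\Delta_m:=\Delta(\x_0,(m,m))$ is as in \eqref{eq:Delta} (this is the identity used in the proof of \prettyref{lemma:difference} with $\x=(m,m)$). Such a tie occurs at discrete time $2m-\|\x_0\|_1$, and since $T$ is finite almost surely when $r>1$ (by \prettyref{corollary:intensity-diff} and the fact that $T<\infty$ iff $N<\infty$), the event $\{T(\beta,r,\x_0)\ge t\}$ is contained in the union of the tie-at-level-$m$ events over $m\ge m_t:=\lceil(t+\|\x_0\|_1)/2\rceil$, with $m_t\sim t/2$. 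Hence
\[
\P[T(\beta,r,\x_0)\ge t]\le\sum_{m\ge m_t}\P[-\xi_{1m}<\Delta_m<\xi_{2m}].
\]

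For each summand I keep only the left inequality. Since $\xi_{1m}$ does not enter $\Delta_m$, it is independent of $\Delta_m$, so
\[
\P[-\xi_{1m}<\Delta_m<\xi_{2m}]\le\P[\xi_{1m}>-\Delta_m]=\E\bigl[e^{f_1m^\beta\min(0,\Delta_m)}\bigr].
\]
As $e^{f_1m^\beta\min(0,x)}\le e^{\lambda x}$ for every real $x$ whenever $0<\lambda\le f_1m^\beta$, the right-hand side is at most $\E[e^{\lambda\Delta_m}]$. Fix $\lambda\in(0,f_1x_{01}^\beta)$, which lies in the admissible range for every $m\ge x_{01}$; writing $\Delta_m=S_1(x_{01},m)-S_2(x_{02},m)$ as a difference of independent exponential sums gives
\[
\log\E[e^{\lambda\Delta_m}]=-\sum_{j=x_{01}}^{m-1}\log\Bigl(1-\frac{\lambda}{f_1j^\beta}\Bigr)-\sum_{j=x_{02}}^{m-1}\log\Bigl(1+\frac{\lambda}{f_2j^\beta}\Bigr).
\]
Subtracting the first-order terms $\pm\lambda f_k^{-1}j^{-\beta}$, the remainder is $O\bigl(\sum_{j\le m}j^{-2\beta}\bigr)=o(m^{1-\beta})$ for every $\beta\in(0,1)$; combined with $\sum_{j\le m}j^{-\beta}\sim m^{1-\beta}/(1-\beta)$ and $f_1^{-1}-f_2^{-1}=(1-r)/f_1$ this yields
\[
\log\E[e^{\lambda\Delta_m}]\le\frac{\lambda(1-r)}{f_1(1-\beta)}\,m^{1-\beta}+o(m^{1-\beta}),\qquad m\to\infty.
\]

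To finish, set $c:=\lambda(r-1)/(f_1(1-\beta))>0$, so the last bound is $-c\,m^{1-\beta}(1+o(1))$. A routine integral comparison gives $\sum_{m\ge M}e^{-c'm^{1-\beta}}=O\bigl(M^{\beta}e^{-c'M^{1-\beta}}\bigr)$ for any $c'>0$, so the series above is at most a polynomial factor in $t$ times $e^{-c\,m_t^{1-\beta}(1+o(1))}$; using $m_t^{1-\beta}\sim(t/2)^{1-\beta}=2^{\beta-1}t^{1-\beta}$ we obtain
\[
\limsup_{t\to\infty}\frac{\log\P[T(\beta,r,\x_0)\ge t]}{t^{1-\beta}}\le-\,2^{\beta-1}\,\frac{\lambda(r-1)}{f_1(1-\beta)},
\]
and letting $\lambda\uparrow f_1x_{01}^\beta$ produces exactly $\frac{1-r}{1-\beta}2^{\beta-1}x_{01}^\beta$. (At $\beta=0$ the tilt cannot be taken all the way to the boundary, and the estimate reduces to the classical exponential tail of the biased random walk $T(0,r,\x_0)$.)

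The main obstacle is the choice and control of the tilt $\lambda$. The moment generating function $\E[e^{\lambda\Delta_m}]$ is finite only for $\lambda<f_1x_{01}^\beta$, since $f_1x_{01}^\beta$ is the smallest of the exponential rates $f_1j^\beta$ appearing in $S_1(x_{01},m)$; in other words the binding constraint is the slowness of the \emph{first} new ball of the fit color, which is precisely why the exponent carries the factor $x_{01}^\beta$ rather than an average over $j$. The remaining technical care goes into (i) verifying that the correction terms in the expansion of $\log\E[e^{\lambda\Delta_m}]$ are $o(m^{1-\beta})$ throughout $0<\beta<1$ (this is where $\beta<1$, and for the exact constant $\beta>0$, is used), and (ii) checking that the Weibull-type series over $m\ge m_t$ is dominated, up to polynomial factors, by its leading term.
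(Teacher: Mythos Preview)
Your argument is correct and rests on the same core idea as the paper's proof: use Rubin's embedding to express the tie-at-level-$m$ probability, then a Chernoff/MGF bound with tilt $\lambda$ pushed to the boundary $f_1 x_{01}^\beta$, and sum the resulting Weibull-type terms. The expansion of $\log\E[e^{\lambda\Delta_m}]$, the identification of $f_1^{-1}-f_2^{-1}=(1-r)/f_1$, and the tail-summation estimate are all sound for $0<\beta<1$.

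Where you differ from the paper is in how the pieces are assembled. The paper decomposes $\{T\ge t\}$ into the disjoint events $\{T=t_{2x}\}$ and uses the factorization $\P[T=t_{2x}]=\P[\X(t_{2x})=(x,x)]\cdot\P[T_2(\beta,r,x,x)=\infty]$; this forces it to invoke \prettyref{lemma:escape} to show that the escape probability is bounded away from zero (so that its logarithm is $o(x^{1-\beta})$). You instead bound $\{T\ge t\}$ by the union of the tie-at-level-$m$ events and apply a plain union bound, which sidesteps the escape-probability lemma entirely. This is a genuine simplification: the paper's route gives a slightly sharper pointwise identity but gains nothing for the $\limsup$ statement, whereas your route is shorter and more self-contained. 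Your intermediate step of writing $\P[\xi_{1m}>-\Delta_m]=\E[e^{f_1m^\beta\min(0,\Delta_m)}]$ and then dominating by $\E[e^{\lambda\Delta_m}]$ is also a clean way to absorb the extra exponential $\xi_{1m}$; the paper simply includes it in the product $M(s;\beta,x_{01},x{+}1)$, which has the same asymptotic effect.

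Your parenthetical about $\beta=0$ is apt: both your argument and the paper's rely on $\sum_{j\le m}j^{-2\beta}=o(m^{1-\beta})$, which needs $\beta>0$, so the stated constant is only justified on $(0,1)$.
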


\begin{figure}[t]
\centering
\begin{subfigure}{1\columnwidth}
\centering
\includegraphics[width= \figwidth]{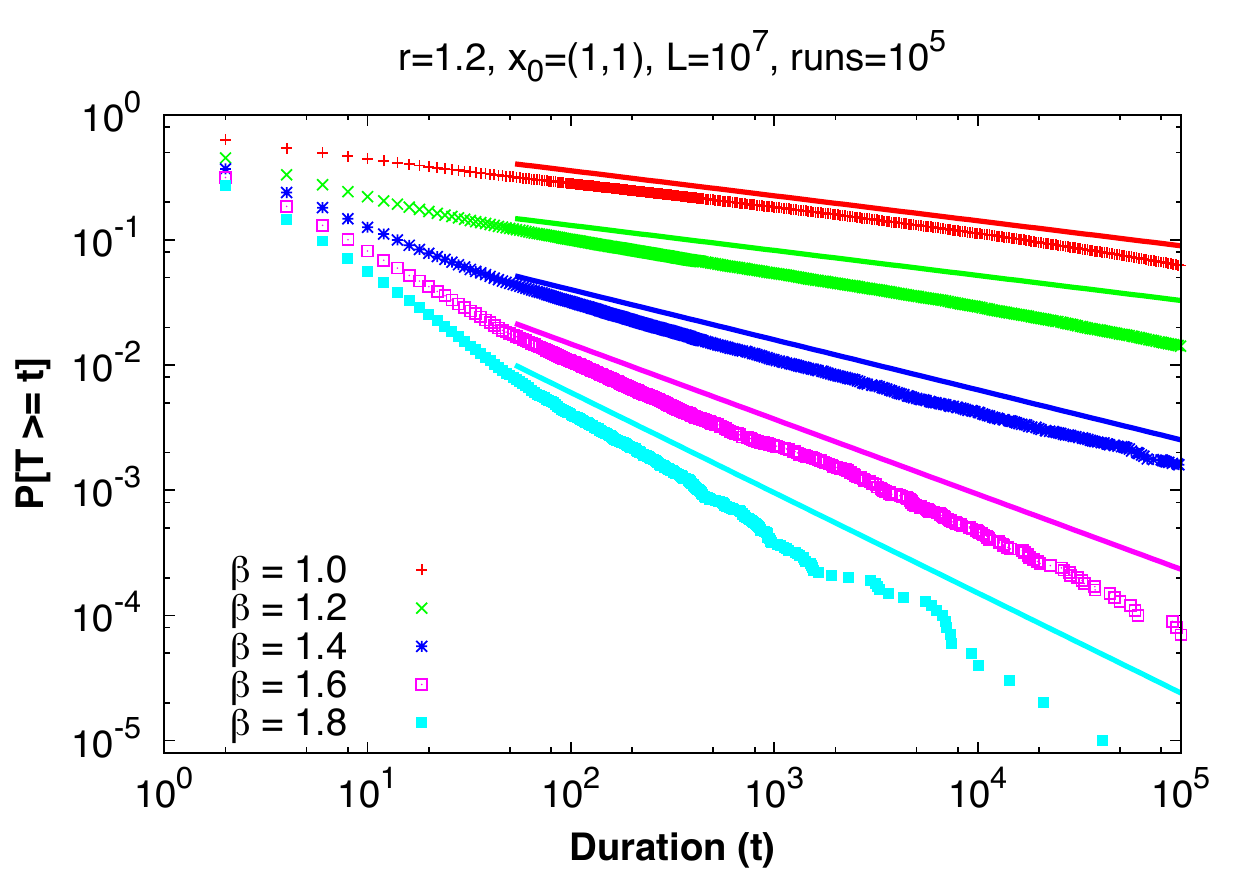}
\caption{$\beta\geq1$}
\label{fig:T-diff-superlinear} 
\end{subfigure}
\begin{subfigure}{1\columnwidth}
\centering
\includegraphics[width= \figwidth]{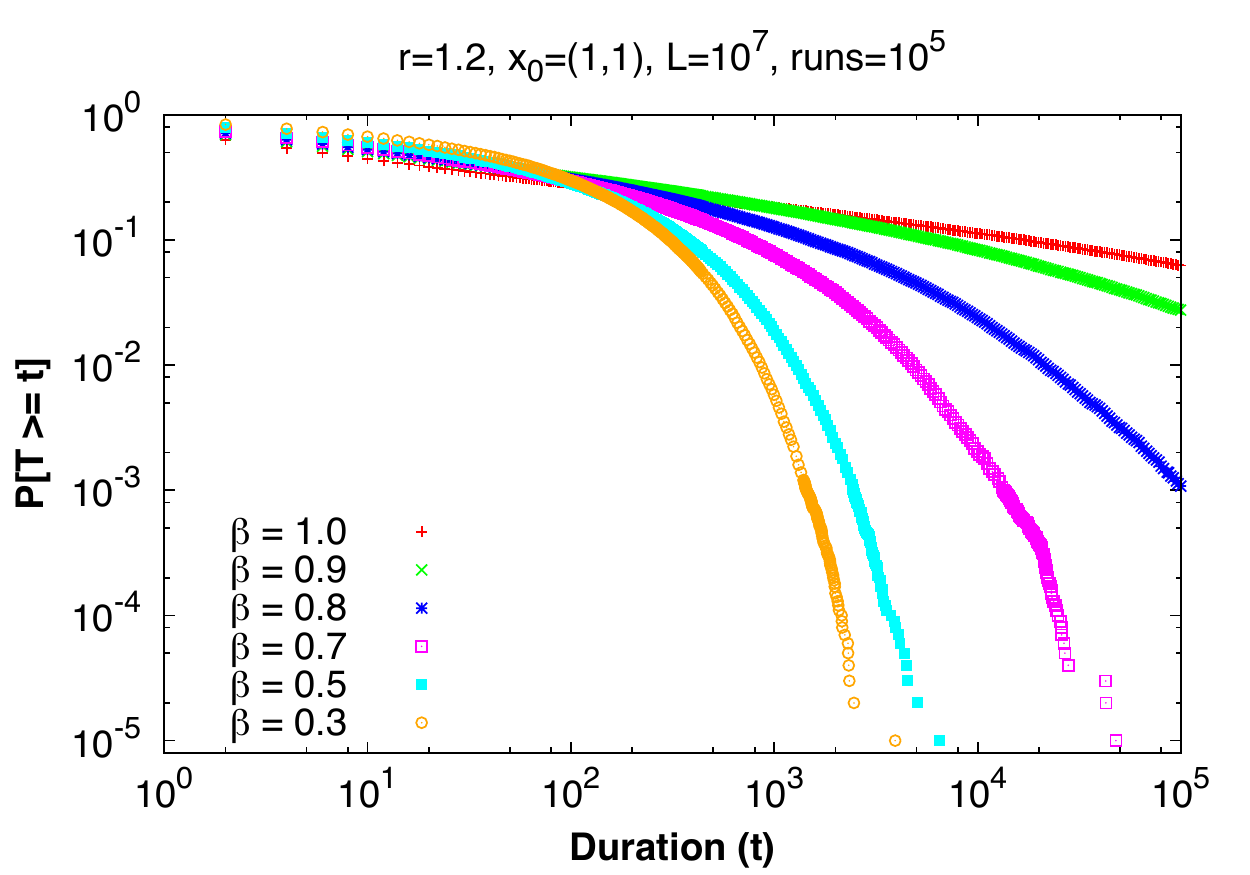}
\caption{$\beta\leq 1$}
\label{fig:T-diff-sublinear} 
\end{subfigure}
\caption{
Tail distribution of duration for $r=1.2$ and various values of $\beta$. Dots (marks) are simulation results. The solid lines in \subref{fig:T-diff-superlinear} have slopes $1-\beta$.}
\label{fig:T-diff}
\end{figure}

The results are illustrated in \prettyref{fig:T-diff}, which shows the simulated tail distributions of duration for $r=1.2$ and various $\beta$ values.  \prettyref{fig:T-diff-superlinear} shows the superlinear regime ($\beta \geq 1$). The power-law exponents from simulations are close to the theoretical values, though the agreement is not as good as in the equal fitness case, as the finite cutoff in simulation time has a greater impact here. Note that the curves for $\beta=1$ and $\beta=1.2$ are approximately parallel.  This is not a coincidence. For $\beta=1.2$, \prettyref{thm:duration-diff-superlinear} shows that the tail exponent is $\beta-1=0.2$. For $\beta=1$, \cite{jiang2015competition} shows that the tail exponent is roughly $(r-1)x_{01} = 0.2$. More generally, $T(\beta,r,\x_0)$ with $\beta>1$ may have a heavier or lighter tail than $T(1,r,\x_0)$, depending on $r$ and $\x_0$.

\prettyref{fig:T-diff-sublinear} shows the sublinear regime ($\beta \leq 1$). As mentioned in \prettyref{subsec:dominance-diff}, the crossover between the curves indicates that there is no simple stochastic ordering between $T(\beta,r,\x_0)$ of different $\beta$. However, the tails are still nicely ordered. Note that a larger $\beta$ results in a heavier tail, which is opposite to what we observe in the superlinear regime. When $\beta$ is small, the tail drops very fast.
Thus in the sublinear regime having the advantage of a larger fitness clearly manifests itself in shorter competition durations. 

\subsubsection{Proof of \prettyref{thm:duration-diff-superlinear}}\label{subsubsec:duration-diff-superlinear-proof}

Before we prove \prettyref{thm:duration-diff-superlinear}, we first prove the following result on the probability of never tying again when starting from a tie with a large number of balls. As a consequence of this result, for $r>1$ and large $t$, the probability of the duration being $t$ has the same order as the probability of having a tie at time $t$. 

\begin{lemma}\label{lemma:escape}
For $\beta\geq 0$ and $r\geq 1$, the probability of never tying again satisfies
\begin{equation}\label{eq:escape}
\lim_{x\to \infty} \P[T_2(\beta,r,x,x)=\infty] = \P[T_2(0,r,0,0)=\infty]=\frac{r-1}{r+1}.
\end{equation}
\end{lemma}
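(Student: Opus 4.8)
The plan is to reduce the claim to the behaviour of the urn near the diagonal and then compare the difference process with a simple biased random walk.

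\emph{Reduction.} In state $(x,x)$ both colours have the same count, so the first step goes to $(x+1,x)$ with probability $p_r:=r/(r+1)$ and to $(x,x+1)$ with probability $q_r:=1/(r+1)$; by the Markov property,
\[
\P[T_2(\beta,r,x,x)<\infty]=p_r\,\P[T_1(\beta,r,x+1,x)<\infty]+q_r\,\P[T_1(\beta,r,x,x+1)<\infty].
\]
\prettyref{corollary:first-return} already gives $\P[T_2(\beta,r,x,x)<\infty]\le\P[T_2(0,r,0,0)<\infty]=2/(r+1)$ for every $x$ (and identifies $\P[T_2(0,r,0,0)=\infty]=(r-1)/(r+1)$ as the asserted limiting value), so it suffices to show
\[
\liminf_{x\to\infty}\P[T_1(\beta,r,x+1,x)<\infty]\ge\tfrac1r,\qquad \lim_{x\to\infty}\P[T_1(\beta,r,x,x+1)<\infty]=1;
\]
substituting these into the displayed identity gives $\liminf_{x}\P[T_2(\beta,r,x,x)<\infty]\ge p_r/r+q_r=2/(r+1)$, which matches the upper bound.

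\emph{Coupling with a biased walk.} Fix $\epsilon>0$ and pick $\delta>0$ small enough that $\frac{r}{r+(1+\delta)^{\beta}}\ge p_r-\epsilon$ and $\frac{r(1+\delta)^{\beta}}{r(1+\delta)^{\beta}+1}\le p_r+\epsilon$; this is possible because $(1+\delta)^{\beta}\to1$ as $\delta\to0$. Run the urn from $(x+1,x)$ (resp.\ $(x,x+1)$) for its first $M:=\lfloor\delta x\rfloor$ steps. Throughout these steps both coordinates stay in $[x,(1+\delta)x]$, hence the one-step probability of a colour-$1$ increment, $rX_1^{\beta}/(rX_1^{\beta}+X_2^{\beta})$, lies in $[p_r-\epsilon,\,p_r+\epsilon]$. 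Driving the urn by i.i.d.\ uniforms $U_1,U_2,\dots$ on $[0,1]$ (colour $1$ at step $m$ iff $U_m\le rX_1^{\beta}/(rX_1^{\beta}+X_2^{\beta})$) and coupling, for $m\le M$, a walk $S'$ that goes up iff $U_m\le p_r+\epsilon$ and a walk $S$ that goes up iff $U_m\le p_r-\epsilon$, the difference $D_m:=X_1(m)-X_2(m)$ satisfies $S_m\le D_m\le S'_m$ for all $m\le M$. Since $D$ moves in unit steps, $D$ started at $-1$ hits $0$ within $M$ steps whenever $S$ (also started at $-1$) does, and $D$ started at $+1$ hits $0$ within $M$ steps whenever $S'$ (started at $+1$) does; consequently
\[
\P[T_1(\beta,r,x,x+1)<\infty]\ge\P[\,S\text{ hits }0\text{ within }M\text{ steps}\,],\qquad \P[T_1(\beta,r,x+1,x)<\infty]\ge\P[\,S'\text{ hits }0\text{ within }M\text{ steps}\,].
\]

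\emph{Taking limits, and the main difficulty.} As $x\to\infty$ we have $M\to\infty$, so the right-hand sides increase to the probabilities that $S$, resp.\ $S'$, ever hit $0$. By elementary gambler's ruin (using $p_r\ge\tfrac12$), a walk with up-probability $p_r-\epsilon$ from $-1$ hits $0$ with probability $\min\{1,(p_r-\epsilon)/(q_r+\epsilon)\}$, and a walk with up-probability $p_r+\epsilon$ from $+1$ hits $0$ with probability $(q_r-\epsilon)/(p_r+\epsilon)$. Letting $\epsilon\downarrow0$ yields $\liminf_x\P[T_1(\beta,r,x,x+1)<\infty]\ge1$ and $\liminf_x\P[T_1(\beta,r,x+1,x)<\infty]\ge q_r/p_r=1/r$, which are the two bounds needed to close the argument. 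The only delicate point is keeping the comparison self-contained: truncating at $M=\Theta(x)$ steps is precisely what guarantees, with no circular dependence on the walk's range, that the urn's increment probability stays in $[p_r-\epsilon,p_r+\epsilon]$, and one must observe that discarding the event ``the walk has not yet hit $0$ by step $M$'' is harmless since its probability tends to $0$. No control on how far $S$ or $S'$ wander is needed, because the unit-step property alone forces $D$ to hit $0$ no later than the comparison walk does.
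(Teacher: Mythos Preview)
Your argument is correct. Both you and the paper obtain the upper bound $\P[T_2(\beta,r,x,x)<\infty]\le 2/(r+1)$ from \prettyref{corollary:first-return} and then establish a matching lower bound by comparing the urn near the diagonal with a biased simple random walk; the difference lies in how that comparison is formalized. The paper notes that for each fixed $k$ the $2k$-step first-return probability $\P[T_2(\beta,r,x,x)=2k]$ is a finite sum of products of transition probabilities, each of which converges to its $(0,r)$-walk counterpart as $x\to\infty$, so $\P[T_2(\beta,r,x,x)=2k]\to\P[T_2(0,r,0,0)=2k]$; Fatou's lemma then gives $\liminf_x\P[T_2<\infty]\ge\sum_k\P[T_2(0,r,0,0)=2k]=2/(r+1)$ in one line. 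Your route replaces Fatou by an explicit sandwich coupling over a time window $M=\lfloor\delta x\rfloor$ and invokes gambler's ruin. The paper's version is shorter because Fatou silently handles the truncation you do by hand; your version is more constructive and would adapt more readily if one wanted a quantitative rate in $x$, since the Fatou step yields no such information.
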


\begin{proof}
Note that for $\x\sim (x,x)$, the transition probability in \prettyref{eq:transition-prob} satisfies
\[
\lim_{x\to\infty} Q(\x,\x+\Delta\x;\beta,r) = Q(0,\Delta \x;0,r),
\]
which is the transition probability of a biased random walk. Thus for fixed $2k$, 
\[
\lim_{x\to\infty} \P[T_2(\beta,r,x,x)=2k] = \P[T_2(0,r,0,0)=2k].
\]
Since $\P[T_2(\beta,r,x,x) < \infty] =  \sum_{k=1}^\infty \P[T_2(\beta,r,x,x)=2k] $, Fatou's Lemma yields
\begin{align*}
&\liminf_{x\to\infty} \P[T_2(\beta,r,x,x) < \infty] \geq \sum_{k=1}^\infty \lim_{x\to\infty} \P[T_2(\beta,r,x,x) =2k]  \\
& \ = \sum_{k=1}^\infty  \P[T_2(0,r,0,0)=2k] = \P[T_2(0,r,0,0)<\infty].
\end{align*}
Now using \prettyref{corollary:first-return}, we obtain
\[
\lim_{x\to\infty} \P[T_2(\beta,r,x,x) < \infty]=\P[T_2(0,r,0,0)<\infty]=\frac{2}{r+1},
\]
which immediately implies \prettyref{eq:escape}.
\end{proof}

Now we prove \prettyref{thm:duration-diff-superlinear}.


%
%
%
%

\begin{proof}[of \prettyref{thm:duration-diff-superlinear}]
Let $\X$ be a $(\beta, r, \x_0)$-urn process. Note that a tie occurs only at time epochs of the form $t_{2x} = 2x - \|\x_0\|_1$ for some integer $x$. At such a $t_{2x}$, both colors have $x$ balls. Note that
\begin{align*}
\P[\X(t_{2x})=(x,x)] &= \P[-\xi_{1x} < \Delta(\x_0,x,x) < \xi_{2x}].
\end{align*}
Repeating the argument in the proof of \prettyref{lemma:difference}, we obtain
\[
\P[\X(t_{2x})=(x,x)]  \sim (r+1)2^{\beta}K(\beta,r,\x_0) t_{2x}^{-\beta}.
\]
Since 
\[
\P[T(\X) = t_{2x}] = \P[\X(t_{2x})=(x,x)]\cdot \P[T_2(\beta,r,x,x)=\infty],
\]
\prettyref{lemma:escape} then yields
\[
\P[T(\beta,r,\x_0) = t_{2x}] \sim  (r-1)2^{\beta}  K(\beta, r,\x_0) t_{2x}^{-\beta}.
\]
Summing over $x$ such that $t_{2x} \geq t$ and using the following Riemann sum approximation,
\[
\sum_{x:t_{2x}\geq t} t_{2x}^{-\beta}\sim \int_t^\infty \frac{1}{2} z^{-\beta} dz = \frac{1}{2(\beta-1)} t^{1-\beta},
\]
we obtain \prettyref{eq:duration-diff-superlinear}.
\myfill \end{proof}

\subsubsection{Proof of \prettyref{thm:duration-sublinear-diff}}
\label{subsubsec:duration-diff-sublinear-proof}

Let $\X$ be a $(\beta, r, \x_0)$-urn process and $t_{2x} = 2x - \|\x_0\|_1$ as in the proof of \prettyref{thm:duration-diff-superlinear}. Note that
\[
 \P[\X(t_{2x})=(x,x)]  \leq \P[\Delta(\x_0,x+1,x)>0].
\]
Using the standard argument of exponentiation followed by the application of the Markov inequality as in the proof of Chernoff bound, we obtain, for $s< x_{01}^\beta$,
\[
 \P[\X(t_{2x})=(x,x)]   \leq M(s;\beta,x_{01},x+1) M(-rs;\beta, x_{02}, x),
\]
where
\begin{equation}\label{eq:M}
M(s;\beta,y_1,y_2)  = \prod_{j=y_1}^{y_2-1} \left(1-\frac{s}{j^{\beta}}\right)^{-1}, \quad \text{for } s < y_1^{\beta},
\end{equation}
Note that
\[
\log M(s;\beta,x_{01},x+1) = -\sum_{j=x_{01}}^x \log\left(1-\frac{s}{j^\beta}\right)\sim \frac{sx^{1-\beta}}{1-\beta}. 
\]
and
\[
\log M(-rs;\beta,x_{02},x) = -\sum_{j=x_{02}}^{x-1} \log\left(1+\frac{rs}{j^\beta}\right)\sim -\frac{rsx^{1-\beta}}{1-\beta}. 
\]
Thus
\[
\limsup_{x\to\infty} \frac{\log  \P[\X(t_{2x})=(x,x)]}{x^{1-\beta}}  \leq \frac{(1-r)s}{1-\beta}.
\]
Letting $s\to x_{01}^\beta$, we obtain 
\[
\limsup_{x\to\infty} \frac{\log  \P[\X(t_{2x})=(x,x)]}{x^{1-\beta}}  \leq \frac{1-r}{1-\beta}x_{01}^\beta.
\]
By \prettyref{lemma:escape}, 
\[
\log \P[T_2(\beta,r,x,x)=\infty] \sim \log\frac{r-1}{r+1}= o(x^{1-\beta}).
\] 
Since 
\[
\P[T(\X) = t_{2x}] = \P[\X(t_{2x})=(x,x)] \cdot \P[T_2(\beta,r,x,x)=\infty],
\]
using \prettyref{lemma:escape} and the fact $t_{2x}\sim 2x$, we obtain 
\[
\limsup_{x\to\infty} \frac{\log \P[T(\beta,r,\x_0)=t_{2x}]}{t_{2x}^{1-\beta}} \leq \frac{1-r}{1-\beta} 2^{\beta-1} x_{01}^\beta.
\]
Let $0>C>\frac{1-r}{1-\beta} 2^{\beta-1} x_{01}^\beta$. For all large enough $x$,
\[
\P[T(\X)=t_{2x}] \leq e^{C t_{2x}^{1-\beta}}.
\]
Summing over $x$ such that $t_{2x}\geq t$, we obtain
\[
\P[T(\X)\geq t] = \sum_{x:t_{2x} \geq t} \P[T(\X)= t_{2x}] \leq \frac{1}{2} \int_{t-2}^\infty e^{C s^{1-\beta}} ds.
\]
By repeated application of l'H\^opital's rule, 
\begin{align*}
\limsup_{t\to\infty} \frac{\log \P[T(\X)\geq t]}{ t^{1-\beta}} & \leq \lim_{t\to\infty} \frac{\log \int_{t}^\infty e^{C s^{1-\beta}}ds}{ t^{1-\beta}}\\
& = \lim_{t\to\infty} \frac{-t^\beta e^{C t^{1-\beta}} }{(1-\beta) \int_{t}^\infty e^{C s^{1-\beta}}ds}=C.
\end{align*}
Letting $C \to \frac{1-r}{1-\beta} 2^{\beta-1} x_{01}^\beta$ complets the proof. \myfill \qed

\begin{remark}
A modification of the above proof shows that color 1 always wins when $\beta<1$. Indeed, the above proof shows that $\sum_{x} \P[\Delta(\x_0,x+1,x)>0] < \infty$. The Borel-Cantelli Lemma then implies that $\Delta(\x_0, x+1,x)\leq 0$ for all large enough $x$ almost surely, from which it follows that $X_1(t)> X_2(t)$ for large enough $t$. 
\end{remark}

%


\section{Tail Distribution of Intensity}\label{sec:intensity}

In this section, we characterize the tail distribution of intensity $N$.  The equal fitness case ($r=1$) is considered in \prettyref{subsec:intensity-equal}, and the case of different fitnesses ($r>1$) is considered  in \prettyref{subsec:intensity-diff}.

\subsection{Equal Fitness}\label{subsec:intensity-equal}

We consider the equal fitness case in this section. The main results are presented in \prettyref{subsec:intensity-equal-result}, and the proofs are given in \prettyref{subsec:intensity-equal-proof}.

\subsubsection{Main Results}
\label{subsec:intensity-equal-result}

Since $T$ is finite if and only if $N$ is finite, it follows from \prettyref{thm:threshold} that $\P[N(\beta,1,\infty) \geq n] = 1$ for all finite $n$, if $\beta \in [0, 1/2]$. Thus, as in \prettyref{subsec:duration-equal}, our focus in the present section is the regime $\beta > 1/2$.

The following theorem bounds the tail distribution of intensity. For the sublinear regime $\beta\in (1/2,1]$, the tail distribution of $N(\beta,1,\x_0)$ is bounded between two power laws with exponents $\beta$ and $\beta-1/2$, respectively. For the superlinear regime $\beta>1$, we only have an upper bound, but simulations suggest that $N(\beta,1,\x_0)$ also has a power-law tail in this regime.

\begin{theorem}\label{thm:intensity-equal}
\begin{enumerate}
\item[(i)]
For $\beta\in (1/2,1]$,
\begin{equation}\label{eq:intensity-equal-sublinear-upper}
\P[N(\beta,1,\x_0)\geq n]  = O(n^{1/2-\beta}),
\end{equation}
and
\begin{equation}\label{eq:intensity-equal-sublinear-lower}
\P[N(\beta,1,\x_0)\geq n]  = \Omega(n^{-\beta}).
\end{equation}

\item[(ii)]
For $\beta\geq 1$,
\begin{equation}\label{eq:intensity-equal-superlinear-upper}
\P[N(\beta,1,\x_0)\geq n]  = O(n^{-\beta}).
\end{equation}
\end{enumerate}
\end{theorem}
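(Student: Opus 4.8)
The plan is to reduce everything to the identity $\{N(\beta,1,\x_0)\geq n\}=\{T_n(\beta,1,\x_0)<\infty\}$ together with the strong Markov property at the tie times $T_j$. Writing $g(z):=\P[T_2(\beta,1,z,z)=\infty]$ and $Z_j$ for the common ball count of the two colors at the $j$-th tie, applying the strong Markov property at each $T_j$ gives
\[
\P[N\geq n]=\P[T_1<\infty]\prod_{j=1}^{n-1}\bigl(1-\E[\,g(Z_j)\mid T_j<\infty\,]\bigr),
\]
so the problem becomes one of controlling $g$ and the conditional law of $Z_j$. The key auxiliary estimate is that $g(z)=\Theta(z^{-1/2})$ uniformly in $z$; I would derive it by applying \prettyref{thm:duration-equal} from the initial state $(z,z)$ and expanding $\P[T(\beta,1,z,z)\geq t]=\sum_{x\geq z+t/2}\P[(\beta,1,z,z)\text{-urn has a tie at value }x]\,g(x)$. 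Since \prettyref{lemma:difference} applied from $(z,z)$ (whose constant $K(\beta,1,(z,z))$ is finite and positive by \prettyref{lemma:cf}) gives $\P[(\beta,1,z,z)\text{-urn has a tie at value }x]\sim 2K(\beta,1,(z,z))\,x^{-\beta}$, this constant cancels and one is left with $\sum_{x\geq z+t/2}x^{-\beta}g(x)=\Theta(t^{1/2-\beta})$, from which $g(x)=\Theta(x^{-1/2})$ follows once a mild regularity property of $g$ (such as monotonicity in $z$) is established. The same application of \prettyref{lemma:difference} records $\P[\X(t_{2x})=(x,x)]\sim 2K(\beta,1,\x_0)\,x^{-\beta}$, i.e.\ the probability that a tie ever occurs at value $x$, which is used below as well.

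For part~(i): the upper bound \eqref{eq:intensity-equal-sublinear-upper} follows because each tie strictly increases the ball count, so the $n$-th tie, when it occurs, is at value $Z_n\geq n-O(1)$ and hence at time $T_n\geq 2n-O(1)$; thus $\{N\geq n\}\subseteq\{T(\beta,1,\x_0)\geq 2n-O(1)\}$, and \prettyref{thm:duration-equal} gives $\P[N\geq n]=O(n^{1/2-\beta})$. For the lower bound \eqref{eq:intensity-equal-sublinear-lower} I would run a conditional second-moment argument on the number $M_v$ of values in a window $[v,2v]$ at which a tie occurs. The tie-probability estimate gives $\E M_v=\Theta(v^{1-\beta})$; \prettyref{thm:duration-equal} gives $\P[M_v\geq 1]=\Theta(v^{1/2-\beta})$; and the strong Markov property together with a local limit estimate for the near-critical difference walk (obtainable from the invariance-principle machinery reviewed in \prettyref{subsec:invariance}) gives $\E M_v^2=\Theta(v^{3/2-\beta})$. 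Hence $\E[M_v\mid M_v\geq 1]=\Theta(v^{1/2})$ and $\E[M_v^2\mid M_v\geq 1]=\Theta(v)$, so a conditional Paley--Zygmund inequality yields $\P[M_v\geq c\sqrt v]\geq c'\,\P[M_v\geq 1]=\Omega(v^{1/2-\beta})$; taking $v\asymp n^2$ gives $\P[N\geq n]\geq\P[M_v\geq n]=\Omega(n^{1-2\beta})$, which in particular is $\Omega(n^{-\beta})$ because $1-2\beta\geq-\beta$ for $\beta\leq 1$. (The endpoint $\beta=1$, where $\E M_v$ degenerates to $\Theta(1)$, is in any case covered by \cite{jiang2015competition}.)

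For part~(ii), the refined upper bound \eqref{eq:intensity-equal-superlinear-upper} for $\beta\geq 1$, the containment argument above only delivers $O(n^{1/2-\beta})$, so a bootstrap is needed. Write $a_j=\P[N\geq j]$. Since $\{N\geq j,\,Z_j>M\}\subseteq\{T(\beta,1,\x_0)\geq 2M-O(1)\}$, \prettyref{thm:duration-equal} gives $\P[Z_j>M\mid N\geq j]=O\bigl(M^{1/2-\beta}/a_j\bigr)$; choosing $M$ of order $a_j^{-1/(\beta-1/2)}$ makes this at most $\tfrac12$, and then, using $g(z)\geq c\,z^{-1/2}$, $\E[g(Z_j)\mid N\geq j]\geq c\,M^{-1/2}\,\P[Z_j\leq M\mid N\geq j]\geq c_0\,a_j^{1/(2\beta-1)}$ for some $c_0>0$. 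Substituting into the product formula yields the recursion $a_{j+1}\leq a_j\bigl(1-c_0\,a_j^{1/(2\beta-1)}\bigr)$; since $a_j\to 0$ (as $T<\infty$ a.s.\ for $\beta>1/2$, by \prettyref{thm:threshold}), the sequence $a_j^{-1/(2\beta-1)}$ increases by at least a fixed positive amount at each step, so $a_n=O\bigl(n^{-(2\beta-1)}\bigr)$, which in particular is $O(n^{-\beta})$ because $2\beta-1\geq\beta$ for $\beta\geq 1$. I expect the main obstacles to be the uniform two-sided bound $g(z)=\Theta(z^{-1/2})$ --- in particular the regularity of $g$, or else a diagonal refinement of \prettyref{lemma:first-visit} valid when $|\rho(\x)|\to 0$ --- and making the local limit input to the second-moment argument rigorous.
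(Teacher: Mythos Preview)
Your upper bound \eqref{eq:intensity-equal-sublinear-upper} is exactly the paper's argument ($N\le T/2$ and \prettyref{thm:duration-equal}). For the remaining two bounds, however, the paper takes a completely different and much more direct route. Rather than work with the escape probability $g(z)$ or with second moments of tie counts, the paper exploits the concavity/convexity of $x\mapsto x^\beta$: Jensen's inequality gives the pathwise bound
\[
\frac{x_i^\beta}{x_1^\beta+x_2^\beta}\ \gtrless\ \frac{1}{2}\left(\frac{2x_i}{x_1+x_2}\right)^\beta,
\]
with $\ge$ for $\beta\le 1$ and $\le$ for $\beta\ge 1$. Multiplying along a sample path that reaches its $n$-th tie at time $d_0+2\ell$ yields a one-sided comparison with a simple random walk weighted by a beta-function factor,
\[
\P[T_n(\beta,1,\x_0)=d_0+2\ell]\ \gtrless\ \left[\frac{B(x_{01}+\ell,x_{01}+\ell)}{B(x_{01},x_{02})}\,2^{d_0+2\ell}\right]^\beta \P[T_n(0,1,d_0,0)=d_0+2\ell],
\]
and summing in $\ell$ reduces both \eqref{eq:intensity-equal-sublinear-lower} and \eqref{eq:intensity-equal-superlinear-upper} to a single analytic computation (\prettyref{lemma:N}), evaluated via the Riemann--Liouville fractional integral of the generating function of random-walk return times. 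No control on $g(z)$ or on local limits of the difference walk is needed.

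Your scheme is more ambitious --- both the Paley--Zygmund and the bootstrap arguments, if completed, would deliver $\Theta(n^{1-2\beta})$, which is precisely what the paper leaves as a conjecture --- but the two inputs you flag as obstacles are genuine gaps that the paper does not fill either. The estimate $g(z)=\Theta(z^{-1/2})$ would require extending \prettyref{lemma:first-visit} to the regime $|\rho(\x)|\to 0$, where its hypothesis $|\rho(\x)|=\Omega(1)$ fails and its error term $O(\|\x\|_1^{-\beta})$ swamps the main term; and the claimed second-moment bound $\E M_v^2=\Theta(v^{3/2-\beta})$ needs a uniform local limit theorem that is not available from the invariance-principle machinery in \prettyref{subsec:invariance} (which only gives functional convergence, not pointwise density control). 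So as written the proposal proves \eqref{eq:intensity-equal-sublinear-upper} and outlines a plausible but incomplete programme for the other two bounds, whereas the paper's Jensen-plus-generating-function approach settles them with no probabilistic input beyond the random-walk return-time formula.
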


\begin{figure}[thb]
\includegraphics[width=\figwidth]{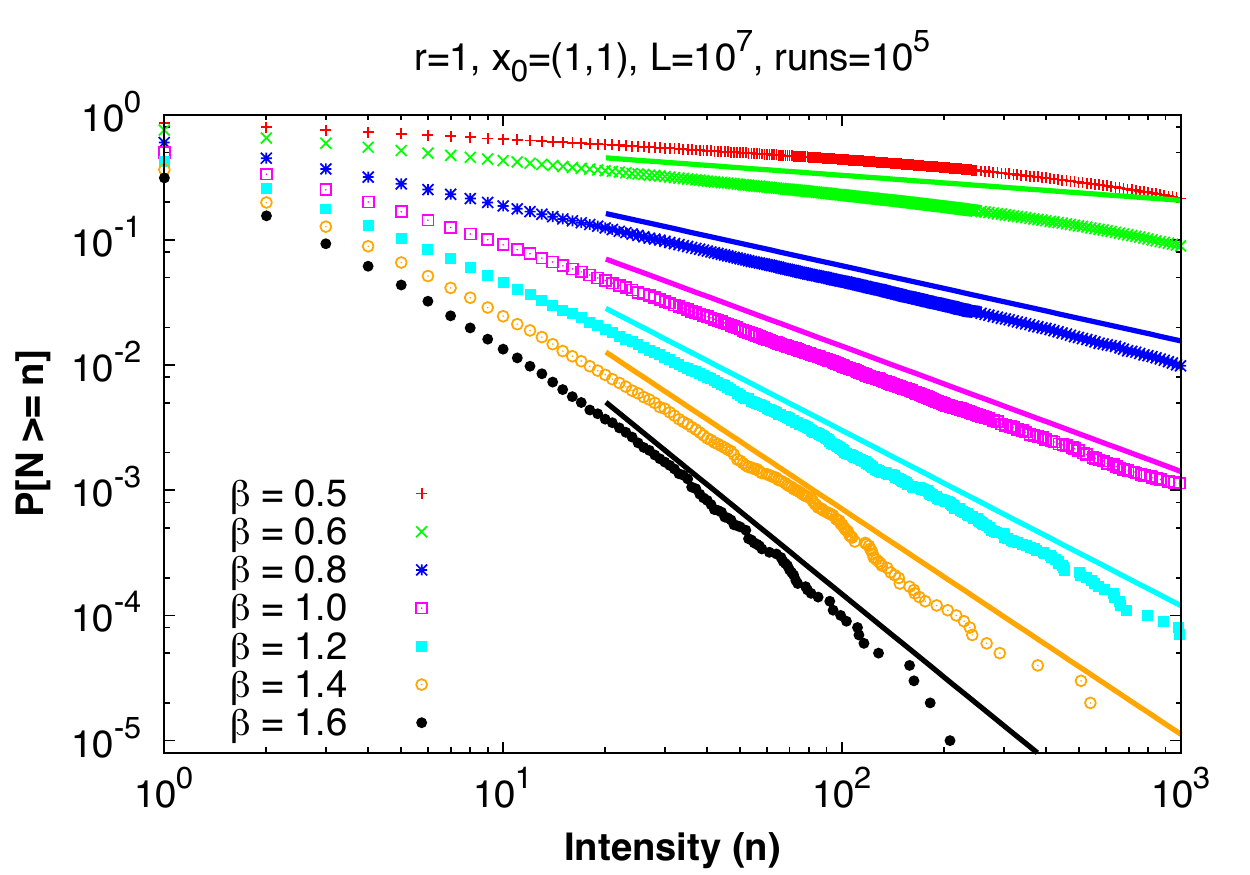}
\caption{
Tail distribution of intensity for $r=1$ and various values of $\beta$. Dots (marks) are simulation results. The solid lines have slopes $1-2\beta$.}
\label{fig:N-equal}
\end{figure}

\prettyref{fig:N-equal} shows the empirical tail distribution of intensity from simulation. Also superimposed are straight lines with slopes $1-2\beta$. Note the good agreement of the simulated slopes with those of the straight lines, which strongly suggests that the intensity $N(\beta,1,\x_0)$ has a power-law tail with exponent $1-2\beta$. We  have the following conjecture,

\begin{conjecture}
For $\beta>1/2$, and some $C(\beta,\x_0)$,
\[
\P[N(\beta,1,\x_0)\geq n]\sim C(\beta,\x_0) n^{1-2\beta}.
\]

\end{conjecture}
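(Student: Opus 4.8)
The plan is to describe the ties of a $(\beta,1,\x_0)$-urn process as a \emph{Markov chain on levels}. By the strong Markov property, once a tie occurs with both colors holding $y$ balls the subsequent evolution depends only on $y$; thus the successive levels $Y_1<Y_2<\cdots$ at which ties occur form a Markov chain killed at level $y$ with probability $q_y:=\P[T_2(\beta,1,y,y)=\infty]$, started from the law $g_1$ of the first tie level, whose total mass is $p_1:=\P[T_1(\beta,1,\x_0)<\infty]$. Writing $\{Y_j\}$ for the corresponding survival-conditioned chain and using $\{N\geq n\}=\{T_n<\infty\}$,
\[
\P[N\geq n]=p_1\,\E\Bigl[\,\prod_{j=1}^{n-1}(1-q_{Y_j})\Bigr],
\]
so everything reduces to (i) the decay rate of the escape probability $q_y$ and (ii) the growth rate of the level chain $\{Y_j\}$.

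Step 1: the escape probability. I would show $q_y\sim c_q\,y^{-1/2}$ with $c_q=\sqrt{2\beta-1}\,/\,(4\sqrt\pi)$. Under Rubin's embedding, a tie at level $z>y$ amounts to the fresh walk $\Delta(y,y,z,z)=\sum_{j=y}^{z-1}(\xi_{1j}-\xi_{2j})$ falling in the window $(-\xi_{1z},\xi_{2z})$. Restricted to $z\in[y,2y]$ this walk is, up to bounded factors, an ordinary random walk of $\Theta(y)$ steps of size $\Theta(y^{-\beta})$, and the tie window at each such level is also $\Theta(y^{-\beta})$; hence never returning to the window is, up to constants, the event that a length-$\Theta(y)$ random walk never revisits the origin, whose probability is $\Theta(y^{-1/2})$. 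The tail $z>2y$ is negligible by Sakhanenko's invariance principle (\prettyref{thm:Sakhanenko}, as used in \prettyref{lemma:normal-approx}). I would then fix the constant $c_q$ by consistency: since $\P[T(\beta,1,\x_0)\geq t_{2x}]=\sum_{y\geq x}\P[\X\text{ ties at level }y]\,q_y$ with $\P[\X\text{ ties at level }y]\sim 2^{\beta+1}K(\beta,1,\x_0)\,t_{2y}^{-\beta}$ by \prettyref{lemma:difference}, matching against \prettyref{thm:duration-equal} forces the stated value.

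Step 2: the level chain and assembly. \prettyref{lemma:escape} with $r=1$ gives $q_y\to0$, and the same embedding estimate shows the chain's increments are heavy-tailed, $\P[Y_{j+1}-Y_j\geq m\mid Y_j=y]\sim c_0\,m^{-1/2}$ for $1\ll m\ll y$. Hence $Y_{\lfloor nt\rfloor}/n^2$ converges to a $1/2$-stable subordinator $\Lambda(t)$, so $Y_j$ is of order $j^2$ but with genuine fluctuations, $Y_j/j^2\Rightarrow\Lambda(1)$. Then $\sum_{j\leq n}q_{Y_j}\sim c_q\sum_{j\leq n}Y_j^{-1/2}=c_q\sum_{j\leq n}j^{-1}(Y_j/j^2)^{-1/2}$, and using the self-similarity $\Lambda(cs)\overset{d}{=}c^2\Lambda(s)$ --- so that $u\mapsto\Lambda(e^u)/e^{2u}$ is stationary and ergodic and $\Lambda(1)$ has all negative moments --- an ergodic-averaging argument in $\log j$ should give $\sum_{j\leq n}q_{Y_j}\sim(2\beta-1)\log n$. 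Feeding this into the product, $\prod_{j=1}^{n-1}(1-q_{Y_j})=n^{-(2\beta-1)+o(1)}$; tracking the remaining lower-order factors --- in particular the law $g_1$ of the first tie level, through which the $\x_0$-dependence of $C$ enters via $K(\beta,1,\x_0)$ --- would yield $\P[N\geq n]\sim C(\beta,\x_0)\,n^{1-2\beta}$.

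The hardest part will be twofold. First, in Step 1 the tie window shrinks at exactly the same rate as the walk's increments, so a plain Brownian approximation is inadequate (it assigns probability zero to ``staying on one side of the origin''); extracting the precise constant $c_q$, rather than merely the order $\Theta(y^{-1/2})$, requires a local-limit/renewal refinement. Second, Step 2 needs a genuine ergodic theorem for $Y_j^{-1/2}$ along the self-similar subordinator, and the exact exponent $2\beta-1$ surfaces only after $c_q$, the normalization of the limiting subordinator, and $\E[\Lambda(1)^{-1/2}]$ are reconciled --- this bookkeeping of constants is the real crux, and is presumably why the statement remains conjectural. A less ambitious intermediate target, reachable from the orders $q_y=\Theta(y^{-1/2})$ and $Y_j=\Theta(j^2)$ alone, would be $\P[N\geq n]=\Theta(n^{1-2\beta})$, which already sharpens \prettyref{thm:intensity-equal} in every regime $\beta>1/2$.
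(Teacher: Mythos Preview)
The paper does not prove this statement: it is presented as a conjecture, supported only by the simulation evidence in \prettyref{fig:N-equal}, and no proof or proof sketch is offered. There is therefore no paper argument to compare your proposal against; the rigorous bounds the paper actually establishes are those of \prettyref{thm:intensity-equal}, which for $\beta\in(1/2,1]$ leave a gap between $O(n^{1/2-\beta})$ and $\Omega(n^{-\beta})$, and for $\beta>1$ give only $O(n^{-\beta})$ with no matching lower bound.

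Your level-chain decomposition is a natural line of attack, and your own diagnosis of the difficulties is largely accurate. Two additional points are worth flagging. First, the chain $\{Y_j\}$ is genuinely inhomogeneous: you correctly restrict the tail estimate $\P[Y_{j+1}-Y_j\geq m\mid Y_j=y]\sim c_0 m^{-1/2}$ to $m\ll y$, but then pass directly to a $1/2$-stable subordinator limit as if the increments were i.i.d.\ in a domain of attraction. For $m$ comparable to $y$ the increment law depends on $\beta$ in a nontrivial way, so the functional limit, if it exists, requires either a triangular-array argument or a self-similar Markov framework rather than the classical i.i.d.\ theory. Second, even your ``less ambitious'' target $\P[N\geq n]=\Theta(n^{1-2\beta})$ would already go strictly beyond what the paper proves in every regime, and it still requires both $q_y=\Theta(y^{-1/2})$ and $Y_j=\Theta(j^2)$ with enough uniformity to control the product $\prod_j(1-q_{Y_j})$; neither ingredient is established in the paper (\prettyref{lemma:escape} with $r=1$ gives only $q_y\to 0$, not the rate). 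Your consistency computation for $c_q$ via \prettyref{thm:duration-equal} and \prettyref{lemma:difference} is a nice check, but it presupposes the form $q_y\sim c_q y^{-1/2}$ rather than deriving it.
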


\subsubsection{Proof of \prettyref{thm:intensity-equal}}
\label{subsec:intensity-equal-proof}

By symmetry, we assume $x_{01} \geq x_{02}$ throughout the proof. Note that $N\leq T/2$. Thus for $\beta>1/2$, \prettyref{thm:duration-equal} yields,
\begin{align*}
\P[N(\beta,1,\x_0)\geq n] &\leq \P[T(\beta,1,\x_0)\geq 2n] \sim \frac{K(\beta,1,\x_0)}{\sqrt{(2\beta-1)\pi}}  n^{\frac{1}{2}-\beta},
\end{align*}
which implies \prettyref{eq:intensity-equal-sublinear-upper}.

Now we prove  \prettyref{eq:intensity-equal-sublinear-lower}.
Since $x^\beta$ is concave for $\beta\in (1/2,1]$, by Jensen's inequality, for any $x_1, x_2>0$,
\[
\frac{x_1^\beta + x_2^\beta}{2} \leq \left(\frac{x_1 + x_2}{2}\right)^\beta,
\]
and hence
\[
\frac{x_i^\beta}{x_1^\beta + x_2^\beta} \geq \left(\frac{2x_i}{x_1+x_2}\right)^\beta 2^{-1}.
\]
By considering each sample path in the set $\{T_n(\beta,1,\x_0) = d_0 + 2\ell\}$, where $d_0 = x_{01}-x_{02}$, we obtain
\begin{align}\label{eq:N_sub_lower}
&\quad\ \P[T_n(\beta,1,\x_0) = d_0 + 2\ell]\nonumber \\
& \geq \left[\frac{B(x_{01} + \ell, x_{01} + \ell)}{B(x_{01},x_{02})} 2^{d_0+2\ell} \right]^\beta  \P[T_n(0,1,d_0,0) = d_0 + 2\ell],
\end{align}
where $B(\cdot,\cdot)$ is the beta function.

Note that
\begin{align*}
&\quad\ \P[N(\beta,1,\x_0)\geq n] = \P[T_n(\beta,1,\x_0) <\infty]\\
 &= \sum_{\ell=n-1}^\infty \P[T_n(\beta,1,\x_0)= d_0 +2\ell].
\end{align*}
Then \prettyref{eq:intensity-equal-sublinear-lower}  follows from \prettyref{eq:N_sub_lower} and the following lemma.

\begin{lemma}\label{lemma:N}
Let $\tilde f^{(n,d_0)}_{d_0+2\ell} \triangleq \P[T_n(0,1,d_0,0) = d_0 + 2\ell]$ be the probability that the $n$-th visit to the origin occurs at time $d_0+2\ell$ in a simple random walk starting from $d_0\geq 0$. Then
\begin{equation}\label{eq:N}
\sum_{\ell=n-1}^\infty \left[\frac{B(x_{01} + \ell, x_{01}+ \ell)}{B(x_{01},x_{02})} 2^{d_0+2\ell} \right]^\beta \tilde f^{(n,d_0)}_{d_0+2\ell} =\Theta(n^{-\beta}).
\end{equation}
\end{lemma}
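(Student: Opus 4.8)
The plan is to first peel off the deterministic prefactor by Stirling's formula, and then to prove that the remaining sum has order $n^{-\beta}$ because it is concentrated on indices $\ell$ of order $n^{2}$. Since $B(a,a)=\Gamma(a)^{2}/\Gamma(2a)\sim 2\sqrt{\pi}\,a^{-1/2}4^{-a}$, taking $a=x_{01}+\ell$ and writing $d_{0}=x_{01}-x_{02}$ gives
\[
\frac{B(x_{01}+\ell,x_{01}+\ell)}{B(x_{01},x_{02})}\,2^{d_{0}+2\ell}=C(\x_{0})\,\ell^{-1/2}\bigl(1+O(\ell^{-1})\bigr),\qquad C(\x_{0})=\frac{2\sqrt{\pi}\,2^{d_{0}}4^{-x_{01}}}{B(x_{01},x_{02})},
\]
and because $\ell\ge n-1\to\infty$ on the entire range of summation, the $O(\ell^{-1})$ error is uniformly $o(1)$ and can be pulled out of the sum. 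Hence it suffices to show
\[
\sum_{\ell=n-1}^{\infty}\ell^{-\beta/2}\,\tilde f^{(n,d_{0})}_{d_{0}+2\ell}=\Theta(n^{-\beta}).
\]
By the strong Markov property the $n$-th visit to the origin occurs at $T_{n}=\tau_{0}+\tau_{1}+\dots+\tau_{n-1}$, where $\tau_{0}\ge d_{0}$ is the first passage time from $d_{0}$ to $0$ and $\tau_{1},\dots,\tau_{n-1}$ are i.i.d.\ first-return times to $0$ with $\P[\tau_{i}>m]=\Theta(m^{-1/2})$; thus $T_{n}$ is a sum of $n$ summands of tail index $1/2$, so typically $T_{n}\asymp n^{2}$, and $(n^{2})^{-\beta/2}=n^{-\beta}$ is exactly the order claimed.

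To make this rigorous I would use summation by parts. With $G_{n}(\ell)=\P[T_{n}\le d_{0}+2\ell]$ (so $G_{n}(\ell)=0$ for $\ell<n-1$ and $G_{n}(\ell)\uparrow 1$ by recurrence of the simple random walk), the exact identity
\[
\sum_{\ell=n-1}^{\infty}\ell^{-\beta/2}\,\tilde f^{(n,d_{0})}_{d_{0}+2\ell}=\sum_{\ell=n-1}^{\infty}\bigl(\ell^{-\beta/2}-(\ell+1)^{-\beta/2}\bigr)G_{n}(\ell)
\]
holds, and since $\ell^{-\beta/2}-(\ell+1)^{-\beta/2}=\Theta(\ell^{-\beta/2-1})$, the upper bound reduces to estimating $\sum_{\ell}\ell^{-\beta/2-1}G_{n}(\ell)$. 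Split at $\ell=n^{2}$: on $\ell>n^{2}$ use only $G_{n}(\ell)\le 1$, so this part is $\le\sum_{\ell>n^{2}}\ell^{-\beta/2-1}=O(n^{-\beta})$; on $\ell\le n^{2}$ use the Chernoff bound $G_{n}(\ell)\le\P[\tau_{1}+\dots+\tau_{n-1}\le 2\ell]\le z^{-2\ell}\bigl(1-\sqrt{1-z^{2}}\bigr)^{n-1}$, coming from the classical return-time generating function $\E[z^{\tau_{1}}]=1-\sqrt{1-z^{2}}$; optimizing over $z\in(0,1)$ yields $G_{n}(\ell)\le e^{-c(n-1)^{2}/\ell}$ for a universal $c>0$, and the substitution $u=(n-1)^{2}/\ell$ turns $\sum_{\ell\le n^{2}}\ell^{-\beta/2-1}e^{-c(n-1)^{2}/\ell}$ into something $\asymp n^{-\beta}\int_{0}^{\infty}u^{\beta/2-1}e^{-cu}\,du=O(n^{-\beta})$. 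For the lower bound, keep only $\ell\ge M:=\lceil Cn^{2}\rceil$ with a large constant $C$: then $G_{n}(\ell)\ge\P[T_{n}\le 2Cn^{2}]\ge c_{0}>0$ for all large $n$ (either from the stable limit theorem, $T_{n}/n^{2}$ converging to a one-sided $1/2$-stable law, or from an elementary truncation-plus-Markov estimate on $\tau_{1}+\dots+\tau_{n-1}$ together with recurrence for $\tau_{0}$), and the telescoping sum $\sum_{\ell\ge M}\bigl(\ell^{-\beta/2}-(\ell+1)^{-\beta/2}\bigr)=M^{-\beta/2}\ge(2C)^{-\beta/2}n^{-\beta}$ finishes it.

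The main obstacle is the upper bound for indices $\ell$ well below $n^{2}$, where the prefactor $\ell^{-\beta/2}$ can be as large as $(n-1)^{-\beta/2}\gg n^{-\beta}$: one really needs a quantitative large-deviation estimate showing that the $n$-th return occurring so early is exponentially unlikely, with the right exponent $(n-1)^{2}/\ell$ to beat $\ell^{-\beta/2}$ after summation, and it is the explicit form of the return-time generating function that makes the Chernoff optimization produce that exponent. Everything else---Stirling, the summation by parts, and the Gamma-integral bookkeeping---is routine, and the same argument would also yield the sharper $\sim C(\beta,\x_{0})n^{-\beta}$ if one used the local limit theorem for $T_{n}$ in place of the two-sided bounds.
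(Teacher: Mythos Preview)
Your proof is correct, and it takes a genuinely different route from the paper.

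After the common first step of using Stirling to replace the bracketed weight by a constant times $\ell^{-\beta/2}$, the paper proceeds analytically: it rewrites $\ell^{-\beta/2}$ (up to a gamma-ratio) as the Riemann--Liouville fractional integral ${}_0D_1^{-\beta/2}[z^{2x_{01}+2\ell}]$, swaps the sum into the integral to obtain ${}_0D_1^{-\beta/2}\bigl[z^{x_{01}+x_{02}}G_n(z;d_0)\bigr]$ with the explicit generating function $G_n(z;d_0)=z^{-d_0}(1-\sqrt{1-z^2})^{\,n+d_0-1}$, and after a change of variable $u=\sqrt{1-z^2}$ lands on the Beta integral $B(\beta,n+x_{01})=\Theta(n^{-\beta})$. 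Your argument is instead probabilistic: Abel summation converts the sum to $\sum_\ell\bigl(\ell^{-\beta/2}-(\ell+1)^{-\beta/2}\bigr)\P[T_n\le d_0+2\ell]$, and you then split at $\ell\asymp n^2$, controlling the small-$\ell$ part by a Chernoff bound (derived from the same generating function $1-\sqrt{1-z^2}$, but used only inside Markov's inequality) and the large-$\ell$ part by the trivial bound $G_n(\ell)\le 1$; the lower bound comes from $\P[T_n\le 2Cn^2]\ge c_0$ via the $\tfrac12$-stable limit for sums of return times. Both arguments ultimately exploit the same classical formula for the return-time generating function, but in different capacities: the paper as the integrand of an exact representation, you as input to a large-deviation estimate. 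The paper's computation is shorter and yields the result in one stroke via a Beta integral, at the cost of invoking the fractional-integral machinery; your approach is more elementary, makes the heuristic ``$T_n\asymp n^2$ so $\ell^{-\beta/2}\asymp n^{-\beta}$'' fully transparent, and would transfer more easily to settings where the generating function is not known in closed form but the return-time tail is.
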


The proof of \prettyref{eq:intensity-equal-superlinear-upper} follows from the same argument as the proof of \prettyref{eq:intensity-equal-sublinear-lower}, except that  the directions of all the inequalities get reversed, since $x^\beta$ is convex for $\beta \geq 1$.

Now we complete the proof of \prettyref{thm:intensity-equal} by proving \prettyref{lemma:N}.

\begin{proof}[of \prettyref{lemma:N}]
Note that for large $\ell$,
\begin{align}\label{eq:beta-asymp}
\left[\frac{B(x_{01} + \ell, x_{01} + \ell)}{B(x_{01},x_{02})} 2^{d_0+2\ell}\right]^\beta &\sim  \frac{C\ \Gamma(2\ell+2x_{01}+1)}{\Gamma(2\ell+2x_{01}+1+\beta/2)},
\end{align}
where $\Gamma(\cdot)$ is the gamma function, and
\[
C = \left[\frac{\sqrt{\pi}}{2^{\|\x_0\|_1-3/2} B(x_{01},x_{02})}\right]^\beta.
\]
By Eq.~(4.4.2) of \cite{oldham2006fractional},
\begin{equation}\label{eq:RL-integral}
\frac{\Gamma(2\ell+2x_{01}+1)}{\Gamma(2\ell+2x_{01}+1+\beta/2)}  = {}_0D_1^{-\beta/2} [z^{2x_{01} + 2\ell}],
\end{equation}
where ${}_aD_x^{-\alpha}$ is the Riemann-Liouville fractional integral operator defined by
\[
{}_aD_x^{-\alpha} f = \frac{1}{\Gamma(\alpha)}\int_a^x f(z) (x-z)^{\alpha-1} dz.
\]
Denote the sum in \prettyref{eq:N} by $\Lambda_n$. Combining \prettyref{eq:beta-asymp} and \prettyref{eq:RL-integral} yields
\[
\Lambda_n \sim C \sum_{\ell=n-1}^\infty    {}_0D_1^{-\beta/2} [z^{2x_{01} + 2\ell}]  \tilde  f^{(n,d_0)}_{d_0+2\ell}.
\]
By the linearity of Riemann-Liouville integral for power series (see Section 5.2 of \cite{oldham2006fractional}), 
\begin{align}
\Lambda_n 
&\sim C\ {}_0D_1^{-\beta/2} \left\{ \sum_{\ell=n-1}^\infty    \tilde f^{(n,d_0)}_{d_0+2\ell} z^{2x_{01} + 2\ell}\right\}\nonumber\\
&=C\ {}_0D_1^{-\beta/2} \left[z^{x_{01}+x_{02}} G_n(z;d_0)\right], \label{eq:Lambda_n}
\end{align}
where $G_n(z;d_0) = \sum_{\ell=n-1}^\infty \tilde  f^{(n,d_0)}_{d_0+2\ell}z^{d_0 + 2\ell}$ is the generating function of $\tilde f^{(n,d_0)}_{d_0+2\ell}$, the expression of which is given by the following (see Eq.~(A.15) of \cite{jiang2015competition}),
\begin{equation}\label{eq:mgf}
G_n(z;d_0) = z^{-d_0} \left(1-\sqrt{1-z^2}\right)^{n+d_0-1}.
\end{equation}
Substituting \prettyref{eq:mgf} into \prettyref{eq:Lambda_n} yields
\[
\Lambda_n \sim \frac{C}{\Gamma(\frac{\beta}{2})}\int_0^1 (1-z)^{\frac{\beta}{2}-1}z^{2x_{02}} \left(1-\sqrt{1-z^2}\right)^{n+d_0-1} dz,
\]
where we have used $x_{01}+x_{02}-d_0 = 2x_{02}$.
Note that the integrand can be rewritten as
\[
(1-z^2)^{\frac{\beta}{2}-1} (1+z)^{1-\frac{\beta}{2}} (1+\sqrt{1-z^2})^{x_{02}} (1-\sqrt{1-z^2})^{n+x_{01}-1},
\]
which on $(0,1)$ is bounded between constant multiples of
\[
(1-z^2)^{\frac{\beta}{2}-1}  (1-\sqrt{1-z^2})^{n+x_{01}-1}.
\]
Thus
\[
\Lambda_n =\Theta\left(\int_0^1 (1-z^2)^{\frac{\beta}{2}-1} \left(1-\sqrt{1-z^2}\right)^{n+x_{01}-1} dz\right).
\]
A change of variable $u = \sqrt{1-z^2}$ yields
\begin{align*}
\Lambda_n &=\Theta\left(\int_0^1 u^{\beta-1} (1-u)^{n+x_{01}-1} du\right)\\
&=\Theta(B(\beta, n+x_{01}))=\Theta(n^{-\beta}),
\end{align*}
which completes the proof.
\myfill \end{proof}

\subsection{Different Fitnesses}
\label{subsec:intensity-diff}

We consider the case of different fitnesses in this section. The following theorem shows that the distribution of the intensity $N(\beta,r,\x_0)$ for $r>1$ always has an exponential tail. Thus competitions are never intense when agents have different fitnesses, irrespective of the feedback strength $\beta$ and the initial condition $\x_0$. 

\begin{theorem}\label{thm:intensity-diff}\label{THM:INTENSITY-DIFF}
For $r>1$,
\begin{equation}\label{eq:N-diff}
\P[N(\beta, r, \x_0) \geq n] \leq  r^{-(x_{01}-x_{02})^+}\left(\frac{2}{r+1}\right)^{n-1},
\end{equation}
where $(x)^+=\max\{x,0\}$. In addition,
\begin{equation}\label{eq:N-diff-log}
\lim_{n\to\infty} \frac{\log \P[N(\beta,r,\x_0)\geq n]}{n} = \log\left(\frac{2}{r+1}\right).
\end{equation}
\end{theorem}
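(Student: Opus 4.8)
The plan is to get the upper bound \eqref{eq:N-diff} by iterating a one-step return estimate, sharpened to keep track of the ``head start'' $x_{01}-x_{02}$, and then to get the matching logarithmic lower bound \eqref{eq:N-diff-log} by exhibiting an explicit path that produces $n$ ties and whose probability decays like $(2/(r+1))^n$ up to subexponential factors.

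For the upper bound, I would start exactly as in the proof of \prettyref{corollary:intensity-diff}: condition on the tie times $T_1, T_2, \dots$, which are stopping times, and use the strong Markov property together with \prettyref{corollary:first-return} to write
\[
\P[N(\beta,r,\x_0)\geq n] = \P[T_1(\X)<\infty]\prod_{j=1}^{n-1}\P[T_{j+1}(\X)<\infty \mid T_j(\X)<\infty],
\]
bounding each conditional factor for $j\geq 1$ by $\P[T_2(0,r,0,0)<\infty]=2/(r+1)$. This already gives $\P[N\geq n]\leq \P[T_1(\X)<\infty]\,(2/(r+1))^{n-1}$, so the only extra work is the sharper bound $\P[T_1(\beta,r,\x_0)<\infty]\leq r^{-(x_{01}-x_{02})^+}$. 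By symmetry assume $x_{01}\geq x_{02}$ and set $d=x_{01}-x_{02}$; I would dominate the gap process by a biased random walk. Concretely, using the exponential embedding (or directly the transition probabilities \eqref{eq:transition-prob}), the probability that the gap $X_1-X_2$ ever reaches $0$ starting from $d>0$ is at most the corresponding ruin probability for the simple biased walk with up-probability $r/(r+1)$, which is $(1/r)^d$ — this is the standard gambler's-ruin formula and does not depend on $\beta$ because at every state with $x_1>x_2$ the probability of the gap decreasing is at most $1/(r+1)$. (One can phrase this cleanly via a coupling to a random walk that at each step moves toward a tie with probability exactly $1/(r+1)\geq x_2^\beta/(rx_1^\beta+x_2^\beta)$, or via the supermartingale $r^{X_1(t)-X_2(t)}$ stopped at the first tie.) Combining the two estimates yields \eqref{eq:N-diff}.

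For the logarithmic asymptotics \eqref{eq:N-diff-log}, the upper bound direction follows immediately from \eqref{eq:N-diff} after taking $\log$, dividing by $n$, and letting $n\to\infty$. For the lower bound I need $\liminf_n n^{-1}\log\P[N\geq n]\geq \log(2/(r+1))$. Here I would construct a cheap event on which there are at least $n$ ties: starting from a tie at some state $(x,x)$ with $x$ large (we reach such a state with positive probability, or we may assume $\x_0=(x_0,x_0)$ by monotonicity/stopping at $T_1$), force the walk to return to a tie $n-1$ more times by following, each time, the shortest excursion ``$+1$ then $+1$ in the other color'' (or its mirror), i.e. the length-$2$ loop back to equality. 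On the $j$-th such excursion starting from $(x+j-1,x+j-1)$, the probability of the specific two-step path is
\[
\frac{r(x+j-1)^\beta}{r(x+j-1)^\beta+(x+j-1)^\beta}\cdot\frac{(x+j)^\beta}{r(x+j-1)^\beta+(x+j)^\beta}
\;\geq\; \frac{r}{r+1}\cdot\frac{1}{r+1}\cdot(1-\delta_j),
\]
where $\delta_j\to 0$ as $j,x\to\infty$ because the two colors differ by at most one ball so the ratios converge to those of the biased random walk. Multiplying these $n-1$ probabilities, the product of the $(1-\delta_j)$ factors is $e^{o(n)}$ (since $\delta_j\to 0$ and in fact $\sum_j\delta_j<\infty$ by the geometric decay of $|(x+j)^\beta/(x+j-1)^\beta - 1|=O(1/(x+j))$... well, $O(1/j)$, whose partial sums are $O(\log n)=o(n)$, which suffices). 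Hence $\P[N\geq n]\geq e^{o(n)}(r/(r+1)^2)^{n-1}$; but $r/(r+1)^2$ is not $2/(r+1)$, so a single-path bound is too lossy. The fix is to not prescribe which color advances first on each excursion but only that the excursion returns to a tie in exactly two steps: the probability of returning to equality after two steps from $(m,m)$ is $\tfrac{r m^\beta}{(r+1)m^\beta}\cdot\tfrac{(m+1)^\beta}{r m^\beta+(m+1)^\beta} + \tfrac{m^\beta}{(r+1)m^\beta}\cdot\tfrac{r(m+1)^\beta}{r(m+1)^\beta+m^\beta}$, which converges to $\tfrac{r}{r+1}\cdot\tfrac{1}{r+1}+\tfrac{1}{r+1}\cdot\tfrac{r}{r+1}=\tfrac{2r}{(r+1)^2}$ — still not $2/(r+1)$. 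The correct and genuinely tight construction is instead to allow \emph{any} excursion that returns to a tie (of any length), i.e. to lower bound $\P[T_2(\beta,r,m,m)<\infty]$ for large $m$: by \prettyref{lemma:escape} this return probability tends to $2/(r+1)$ as $m\to\infty$. So: run to the first tie, then inductively, given we are at a tie at state $(m,m)$ with $m$ large, the conditional probability of tying again is $\P[T_2(\beta,r,m,m)<\infty]\geq 2/(r+1)-\epsilon$ for all $m\geq M_\epsilon$; iterating $n-1$ times gives $\P[N\geq n]\geq c_\epsilon\,(2/(r+1)-\epsilon)^{n-1}$, whence $\liminf_n n^{-1}\log\P[N\geq n]\geq \log(2/(r+1)-\epsilon)$, and letting $\epsilon\to0$ finishes the proof.

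The main obstacle is the lower bound: getting the constant in the exponential base \emph{exactly} $2/(r+1)$ rather than something smaller forces one to use the limiting return probability from \prettyref{lemma:escape} (which already packages the delicate comparison of the $\beta$-urn near a tie with the biased random walk) instead of any fixed finite-length path, and to be careful that each successive tie occurs at a state with a large ball count so that \prettyref{lemma:escape}'s limit is applicable — this is ensured since the $j$-th tie occurs at a state $(m,m)$ with $m\geq x_0+j-1\to\infty$.
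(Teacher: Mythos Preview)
Your proposal is correct and follows essentially the same route as the paper. For \eqref{eq:N-diff} the paper simply invokes \prettyref{corollary:intensity-diff} ($N(\beta,r,\x_0)\preceq N(0,r,\x_0)$) together with the classical random-walk formula from Feller, which packages your product decomposition plus gambler's-ruin bound into one step; for \eqref{eq:N-diff-log} the paper uses the same conditional-return product and \prettyref{lemma:escape}, concluding via Ces\`aro means rather than your equivalent $\epsilon$-argument. One small slip: the supermartingale you want is $r^{-(X_1(t)-X_2(t))}$, not $r^{X_1(t)-X_2(t)}$.
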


\begin{figure}[t]
\centering
\includegraphics[width= \figwidth]{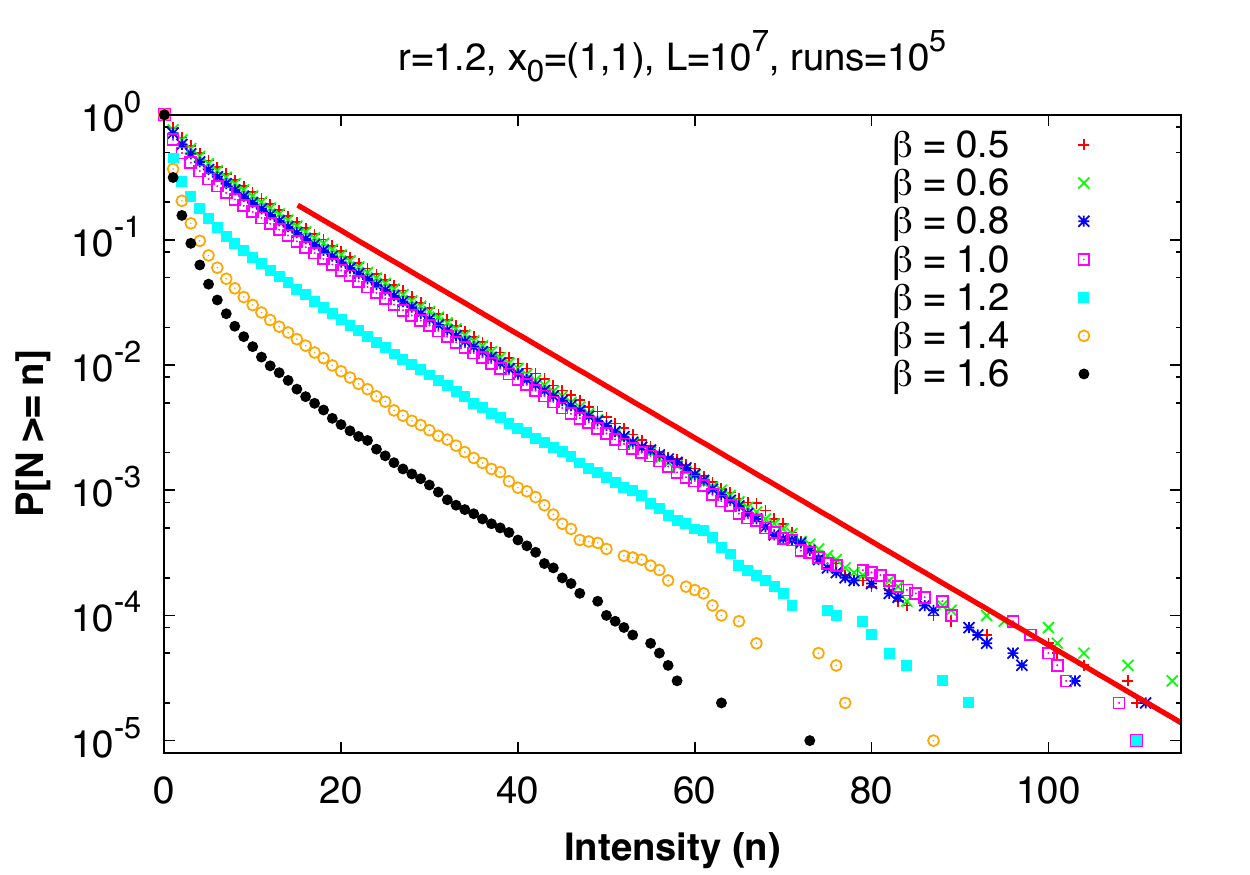}
\caption{
Tail distribution of intensity for $r=1.2$ and various values of $\beta$. The dots (marks) are from simulation. The straight line has slope $\log\frac{2}{r+1}$.}
\label{fig:N-diff}
\end{figure}

The result is illustrated in \prettyref{fig:N-diff}, which shows the simulated tail distributions of intensity. Note that the plot uses semi-log scale. The superimposed straight line has the slope $\log\frac{2}{r+1}$ given in \prettyref{eq:N-diff-log}. Note that the simulated curves all become parallel to the straight line, in good agreement with the theory. Of course, specific $\beta$ values do affect the leading constants, as reflected by the parallel shifts of the curves. 

\begin{proof}[of \prettyref{thm:intensity-diff}]
Eq.~\prettyref{eq:N-diff} follows from \prettyref{corollary:intensity-diff} and the well-known formula for $\P[N(0,r,\x_0)\geq n]$ (see e.g. XI.3.d of \cite{fellerOne}).

Now we prove \prettyref{eq:N-diff-log}. Let $\X$ be a $(\beta, r, \x_0)$-urn process and $F_n(z) = \P[X_1(T_n(\X))=z\mid T_n(\X)<\infty]$. By the strong Markov property and the fact that $F_n(z)=0$ for $z<n$,
\[
\P[T_{n+1}(\X) < \infty\mid T_{n}(\X)<\infty]=\sum_{z\geq n} F_n(z) \P[T_2(\beta,r,z,z)].
\]
\prettyref{lemma:escape} then implies
\begin{equation}\label{eq:eventual-return}
\lim_{n\to\infty} \P[T_{n+1}(\X)<\infty\mid T_{n}(\X)<\infty] = \frac{2}{r+1}.
\end{equation}
Since
\[
\P[N(\beta,r,\x_0)\geq n] =\prod_{j=0}^{n-1} \P[T_{j+1}(\X)<\infty\mid T_{j}(\X)<\infty],
\]
\prettyref{eq:N-diff-log} follows from \prettyref{eq:eventual-return} and the fact that the Ces\`aro mean of a convergent sequence converges to the limit of the sequence.
\myfill \end{proof}


\section{Discussion and conclusion}\label{sec:conclusion}

\begin{figure}[t]
\centering
\includegraphics[width=1\columnwidth]{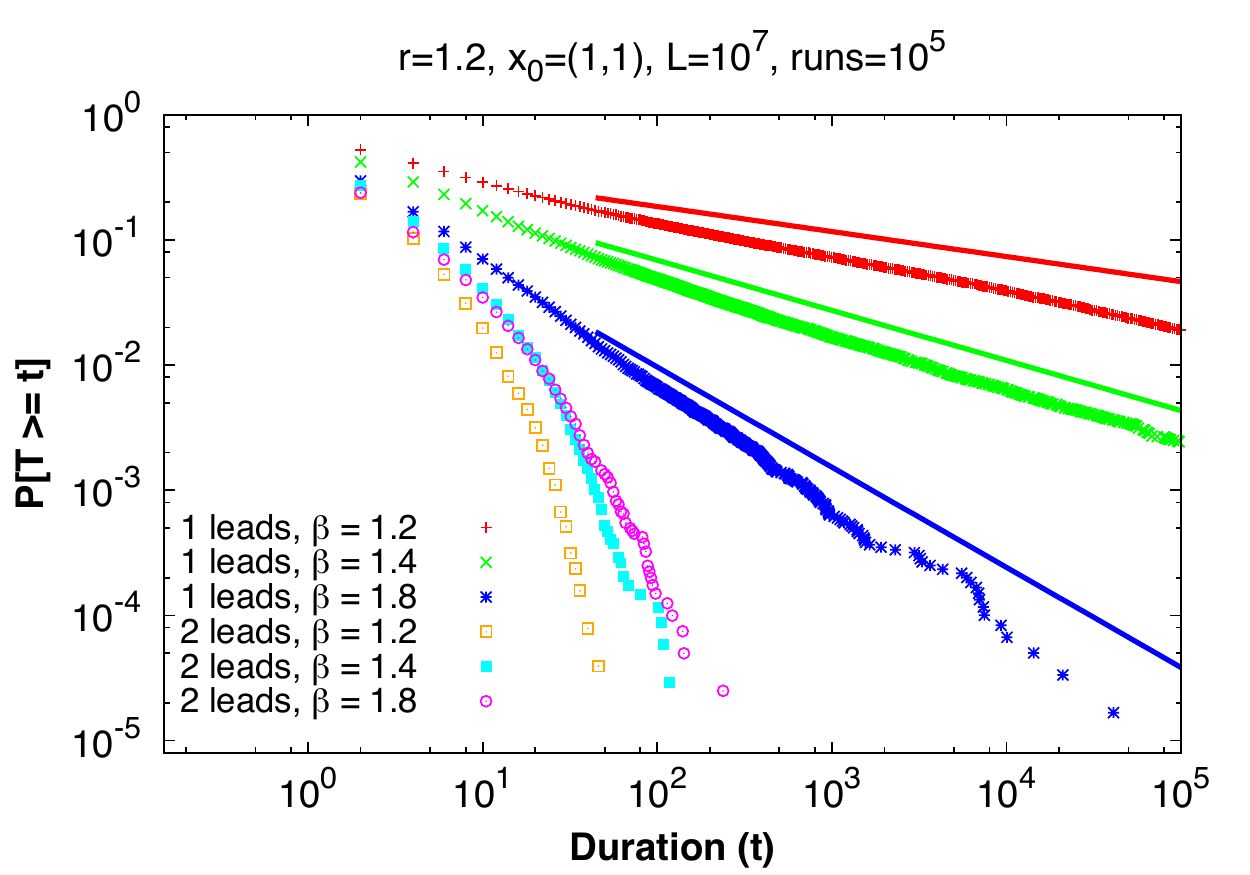}
\caption{Empirical distribution of duration conditioned on either $1$ or $2$ leading the competition at the end of simulation time.}
\label{fig:T_XY_f1d2}
\end{figure}

Apart from the insights provided by the simulations on our theoretical findings, we illustrate another interesting aspect of the different fitness case. Recall that in the superlinear regime the fittest agent can lose the competition. Does the competition duration depend on the winner? \prettyref{fig:T_XY_f1d2} strongly suggests that the answer is yes, which shows the empirical duration distribution conditioned on either $1$ or $2$ leading the competition at the end of the simulation. For competitions that $1$ leads, we observe a power law distribution, consistent with our theoretical findings (the same slopes $1-\beta$ are shown in the plot). However, for competitions that $2$ leads, duration seems to be dominated by an exponential tail. Thus, if $2$ is to win the competition it has to do so early on: $2$ has very little chance of winning if it is trailing behind when a long time has elapsed. However, if $1$ is to win, competitions may last very long with $2$ putting up a good battle for the lead but losing eventually.

This work presented a rigorous mathematical treatment of a nonlinear P{\'o}lya urn process which embodies the fitness of agents ($f_{1,2}$) and the feedback strength of CA effect ($\beta > 0$). In particular, we considered sublinear ($\beta < 1$) and superlinear ($\beta > 1$) regimes as well as equal ($f_{1} = f_{2}$) and non-equal ($f_{1} \neq f_{2}$) fitness scenarios and characterized the tail distribution of two important statistics of competitions: duration (i.e., time of the last tie) and intensity (i.e., number of ties). We characterized the complex interactions between fitness superiority and feedback strength, revealing various interesting properties of such competitions, such as the serious struggle of the fittest in the superlinear regime. 
We believe that our theoretical findings contribute to various applications of the generalized P{\'o}lya urn processes that incorporate both fitness and nonlinearity. 



\section{Acknowledgements}
This work was supported in part by Army Research Office Contract W911NF-12-1-0385, and ARL Cooperative Agreement W911NF-09-2-0053.
D. Figueiredo received financial support through grants from CAPES, FAPERJ and CNPq (Brazil).

%
\bibliographystyle{abbrv}
\bibliography{refs,auxrefs}  
%
%
\appendix

\section{Proof of \prettyref{thm:dominance-diff}}
\label{app:dominance-diff}

The proof uses a coupling argument similar to the one used in the proof of \prettyref{thm:dominance-equal}. Let $\{\eta_j\}_{j\in \N}$ be a sequence of independent random variables uniformly distributed on $[0,1]$. Define a $(\beta,r,\x_0)$-urn process $\{ \Y(t)\}_{t\in \N}$ recursively by setting $\Y(0) = \x_0$ and
\begin{align*}
 Y_1(t+1) &= Y_1(t) + \ind{\eta_t\leq \frac{r  Y_1(t)^\beta}{r  Y_1(t)^\beta +  Y_2(t)^\beta} },\\
 Y_2(t+1) &=  Y_1(t)+Y_2(t) + 1 - Y_1(t+1).
\end{align*}
Similarly define a $(\beta',r',\x_0')$-urn process $\{ \Y'(t)\}_{t\in\N}$ using the same sequence $\{\eta_j\}_{j\in \N}$.  We now show that $T_1( \Y)\geq T_1( \Y')$ if either $(i)$ or $(ii)$ holds.

 If $x_{01}=x_{02}$, this is trivial. Now assume $(i)$ holds with $x_{01}>x_{02}$.
 We will show by induction that $ Y_1(t) \geq  Y'_1(t)$, $Y'_2(t) \geq  Y_2(t)$ and $Y_1(t)\geq Y_2(t)$ for $t \leq T_1(\Y)$. The base case $t=0$ holds trivially. Assume it holds for $t<T_1(\Y)$ and consider $t+1$. Since $x_{01} > x_{02}$, by the definition of $T_1(\Y)$,  we have $ Y_1(t) \geq  Y_2(t)$ for $t <  T_1(\Y)$. The induction hypothesis then implies that
 \[
 \frac{r  Y_1(t)^\beta Y'_2(t)^{\beta'}}{r'  Y_2(t)^\beta Y'_1(t)^{\beta'}} = \frac{r}{r'} \left(\frac{Y_1(t)}{Y_2(t)}\right)^{\beta-\beta'} \left(\frac{Y_1(t) }{Y'_1(t) }\cdot \frac{ Y'_2(t)}{ Y_2(t)}\right)^{\beta'}\geq 1.
 \]
It follows that
\[
\frac{r  Y_1(t)^\beta}{r  Y_1(t)^\beta +  Y_2(t)^\beta} \geq \frac{r'  Y'_1(t)^{\beta'}}{r'  Y'_1(t)^{\beta'} +  Y'_2(t)^{\beta'}} ,
\]
and hence
\begin{align*}
 &\quad\ Y_1(t+1) = Y_1(t) + \ind{\eta_t\leq \frac{r  Y_1(t)^\beta}{r  Y_1(t)^\beta +  Y_2(t)^\beta} }\\
&\geq   Y'_1(t) + \ind{\eta_t\leq \frac{r'  Y'_1(t)^{\beta'}}{r'  Y'_1(t)^{\beta'} +  Y'_2(t)^{\beta'}}} =  Y'_1(t+1).
\end{align*}
Similarly, $Y_2(t+1)\leq Y'_2(t+1)$, which completes the induction. In particular,
\[
Y_1(t) -  Y_2(t) \geq Y'_1(t) -  Y'_2(t)
\]
for $t \leq T_1(\Y)$. Since $Y_1(0)-Y_2(0)\geq Y'_1(0) -  Y'_2(0) > 0$, it follows that $T_1(\Y) \geq T_1(\Y')$.

Now assume $(ii)$ holds. The same argument as above shows that $ Y_1(t) \leq  Y'_1(t)$,  $Y'_2(t) \leq  Y_2(t)$ and $Y_1(t)\leq Y_2(t)$ for $t\leq T_1(\Y)$, which implies $T_1(\Y) \geq T_1(\Y')$.

\section{Proof of \prettyref{lemma:cf}}
\label{app:cf}

First consider the case $\beta>1$. In this case, it is known (see e.g.~Section 3.2 of \cite{oliveira2008balls}) that  $S_k(x_{0k},\infty)<\infty$ almost surely. The characteristic function of $S_k(x_{0k},\infty)$ is given by $\Psi(s f_k^{-1};\beta,x_{0k},\infty)$, which is absolutely integrable. Thus $S_k(x_{0k},\infty)$ has an absolutely continuous distribution $H_k$ with continuous density $h_k$. Let $f_1=1$ and $f_2=r^{-1}$. Note that $K(\beta,r,\x_0)$ is the probability density of $S_1(x_{01},\infty) - S_2(x_{02},\infty)$ at the origin. By the Convolution Theorem,
\[
K(\beta,r,\x_0) = \int_0^\infty h_1(z) h_2(z) dz\in\R.
\]
Since $h_1$ is not identically zero, $h_1(z_0) > 0$ for some $z_0>0$. By continuity, there exists some $\epsilon>0$ such that $h_1(z)> h_1(z_0)/2$ for $z\in (z_0-\epsilon, z_0+\epsilon)\subset (0,\infty)$.  Thus 
\[
K(\beta,r,\x_0) \geq \frac{h_1(z_0)}{2} \int_{z_0-\epsilon}^{z_0+\epsilon} h_2(z) dz > 0,
\]
where the last inequality holds because every $z\in (0,\infty)$ is a point of increase of $H_2$ by Theorem 3.7.5 of \cite{lukacs1970characteristics}.

Now consider the case $\beta>1/2$ and $r=1$. The proof is similar to that of Theorem 4 in \cite{oliveira2009onset}. By symmetry, assume $x_{01}\leq x_{02}$ without loss of generality. In this case, 
\[
\tilde \Psi(s;\beta,1,\x_0) = \Psi(s;\beta,x_{01},x_{02}) \hat H_3(s;\beta,x_{02}),
\]
where
\[
\hat H_3(s;\beta,x_{02})=\lim_{x\to\infty} |\Psi(s;\beta,x_{02},x)|^2 = \prod_{j=x_{02}}^\infty \left(1+\frac{s^2}{j^{2\beta}}\right)^{-1},
\]
the characteristic function of $\sum_{j=x_{02}}^\infty (\xi_{1j} - \xi_{2j})$, which is finite almost surely (see e.g.~Section 3.2 of \cite{oliveira2008balls}).

If $x_{01} = x_{02}$, then $\Psi(s;\beta,x_{01},x_{02}) = 1$, and $K(\beta,1,\x_0) > 0$ follows from the fact $\hat H_3(s;\beta,x_{02})>0$. 

Suppose $x_{01} < x_{02}$.  Since $\hat H_3$ is absolutely integrable, the corresponding distribution $H_3$ is absolutely continuous with continuous density $h_3$.  Let $H_4$ and $h_4$ be the distribution function and density of $S_1(x_{01}, x_{02})$, both continuous on $(0,\infty)$. By the Convolution Theorem,
\[
K(\beta,1,\x_0) = \int_0^\infty h_3(-z) h_4(z) dz \in \R.
\]
Again by Theorem 3.7.5 of \cite{lukacs1970characteristics}, every $z\in \R$ is a point of increase of $H_3$. Since $h_4$ is continuous and not identically zero, the same argument as for the $\beta>1$ case shows that the above integral is strictly positive.

\section{Proof of \prettyref{LEMMA:FIRST-VISIT}}
\label{app:first-visit}

We will need the next two lemmas that give some large deviation results. Their proofs are deferred to \prettyref{app:Chernoff} and \prettyref{app:large-deviation}, respectively.

\begin{lemma} \label{LEMMA:CHERNOFF}\label{lemma:Chernoff}
Let  $y_m\sim z_m \sim m$, and $q_m =y_m - z_m \geq 9$. For $\epsilon\in(0,1)$ and large enough $m$,
\begin{equation}\label{eq:Chernoff-lower}
\P\left\{S_k(z_m,y_m) < (1-\epsilon) q_m m^{-\beta}\right\} 
\leq e^{-\frac{1}{2}\epsilon\sqrt{q_m}+1},
\end{equation}
and
\begin{equation}\label{eq:Chernoff-upper}
\P\left\{S_k(z_m,y_m) > (1+\epsilon) q_m m^{-\beta}\right\} 
\leq e^{-\frac{1}{2}\epsilon\sqrt{q_m}+\frac{9}{2}}.
\end{equation}
\end{lemma}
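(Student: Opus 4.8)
The plan is to derive both tail bounds from the standard Chernoff (exponential--Markov) inequality applied to the sum of independent exponentials $S_k(z_m,y_m)=\sum_{j=z_m}^{y_m-1}\xi_{kj}$, where each $\xi_{kj}$ has mean $j^{-\beta}$ (we are in the equal--fitness setting $f_1=f_2=1$). The only nonstandard ingredient is the calibration of the exponential parameter: I take it of order $m^\beta/\sqrt{q_m}$ rather than of constant order, which is exactly what converts the naive $e^{-\Theta(\epsilon^2 q_m)}$ estimate into the $\sqrt{q_m}$ rate required by the statement. The single preliminary fact I need is that $z_m\sim y_m\sim m$ forces $q_m=o(m)$, so $j/m\to1$ uniformly over $j\in\{z_m,\dots,y_m-1\}$; hence for every fixed $\delta>0$ and all large enough $m$ one has $1-\delta\le(m/j)^\beta\le1+\delta$ throughout the summation range. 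I will use this with $\delta=\epsilon/2$.

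For \eqref{eq:Chernoff-lower} I would use $\E[e^{-\lambda\xi_{kj}}]=(1+\lambda j^{-\beta})^{-1}$ to get, for any $\lambda>0$,
\[
\P\bigl\{S_k(z_m,y_m)<(1-\epsilon)q_m m^{-\beta}\bigr\}\le \exp\!\bigl(\lambda(1-\epsilon)q_m m^{-\beta}\bigr)\prod_{j=z_m}^{y_m-1}\bigl(1+\lambda j^{-\beta}\bigr)^{-1}.
\]
Taking $\lambda=m^\beta/\sqrt{q_m}$ and using $\lambda j^{-\beta}=(m/j)^\beta/\sqrt{q_m}\ge(1-\delta)/\sqrt{q_m}$, the logarithm of the right--hand side is at most $\sqrt{q_m}\,(1-\epsilon)-q_m\log\bigl(1+(1-\delta)/\sqrt{q_m}\bigr)$; applying $\log(1+u)\ge u-u^2/2$, valid for all $u\ge0$, this is in turn at most $\sqrt{q_m}\,(\delta-\epsilon)+\tfrac12(1-\delta)^2$. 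With $\delta=\epsilon/2$ it equals $-\tfrac12\epsilon\sqrt{q_m}+\tfrac12(1-\epsilon/2)^2\le-\tfrac12\epsilon\sqrt{q_m}+1$, which is \eqref{eq:Chernoff-lower}.

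For \eqref{eq:Chernoff-upper} I would argue symmetrically, using $\E[e^{\lambda\xi_{kj}}]=(1-\lambda j^{-\beta})^{-1}$, valid for $\lambda<\min_j j^\beta=z_m^\beta$:
\[
\P\bigl\{S_k(z_m,y_m)>(1+\epsilon)q_m m^{-\beta}\bigr\}\le \exp\!\bigl(-\lambda(1+\epsilon)q_m m^{-\beta}\bigr)\prod_{j=z_m}^{y_m-1}\bigl(1-\lambda j^{-\beta}\bigr)^{-1}.
\]
The choice $\lambda=m^\beta/\sqrt{q_m}$ is admissible for large $m$: since $q_m\ge9$ and $(m/j)^\beta\le1+\delta\le\tfrac32$ we get $\lambda j^{-\beta}\le(1+\delta)/3\le\tfrac12<1$, so each factor is finite. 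Using $\lambda j^{-\beta}\le(1+\delta)/\sqrt{q_m}$ and $-\log(1-u)\le u+u^2$ on $[0,\tfrac12]$, the logarithm of the right--hand side is at most $-\sqrt{q_m}\,(1+\epsilon)+\sqrt{q_m}\,(1+\delta)+(1+\delta)^2=\sqrt{q_m}\,(\delta-\epsilon)+(1+\delta)^2$. With $\delta=\epsilon/2$ this equals $-\tfrac12\epsilon\sqrt{q_m}+(1+\epsilon/2)^2\le-\tfrac12\epsilon\sqrt{q_m}+\tfrac94$, yielding \eqref{eq:Chernoff-upper}.

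I do not expect a genuine obstacle here; the argument is just a carefully calibrated Chernoff bound. The points that need attention are: (a) the threshold ``large enough $m$'' is uniform in $q_m$ because it enters only through the uniform convergence $(m/j)^\beta\to1$, which is legitimate since $q_m=o(m)$; (b) for the upper tail one must verify that $\lambda=m^\beta/\sqrt{q_m}$ stays below the radius of convergence $z_m^\beta$ of the moment generating function, which holds once $q_m\ge9$ and $z_m\sim m$; and (c) the additive constants must be tracked so that they land below $1$ and $9/2$ respectively --- this is precisely why the statement is phrased with slack constants rather than the sharp Chernoff exponents such as $\epsilon+\log(1-\epsilon)$.
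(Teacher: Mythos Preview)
Your proposal is correct and follows essentially the same route as the paper: both apply the exponential Markov inequality to $S_k(z_m,y_m)$, replace $(m/j)^\beta$ by $1\mp\epsilon/2$ using $z_m\sim y_m\sim m$, and then set the Chernoff parameter to $\lambda=m^\beta q_m^{-1/2}$ (the paper writes this as $s=q_m^{-1/2}$ after the substitution $\lambda=sm^\beta$). The only cosmetic differences are that you use the slightly sharper inequalities $\log(1+u)\ge u-u^2/2$ and $-\log(1-u)\le u+u^2$ on $[0,\tfrac12]$ in place of the paper's $(1+u)^{-1}\le e^{-u+u^2}$ and $(1-u)^{-1}\le e^{u+2u^2}$, which is why your additive constants come out as $\tfrac12$ and $\tfrac94$ rather than $1$ and $\tfrac92$.
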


\begin{lemma} \label{LEMMA:LARGE-DEVIATION}\label{lemma:large-deviation}
For $\beta>0, c>0$, $z_m\geq m+1$ and $q_m\geq 1$,
\begin{align}\label{eq:large-deviation}
\P\left\{\sup_{y\geq z_m} \xi_{ky}  > c q_m m^{-\beta} \right\} = O(m e^{-c q_m}).
\end{align}
\end{lemma}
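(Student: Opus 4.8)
The plan is to dominate the supremum by a union bound over the countably many exponentials $\{\xi_{ky}: y\geq z_m\}$ and then control the resulting series by comparison with a one–dimensional integral. Recall $\xi_{ky}$ is exponential with $\E\xi_{ky}=f_k^{-1}y^{-\beta}$, so $\P[\xi_{ky}>u]=e^{-f_k y^\beta u}$ for $u\geq 0$; in the equal–fitness setting in which this lemma is applied, $f_k=1$, so $\P[\xi_{ky}>u]=e^{-y^\beta u}$. Taking $u=cq_m m^{-\beta}$ and summing, I would first write
\[
\P\left[\sup_{y\geq z_m}\xi_{ky}>\frac{cq_m}{m^{\beta}}\right]\leq \sum_{y\geq z_m}e^{-cq_m (y/m)^{\beta}}.
\]
Since $y\mapsto e^{-cq_m(y/m)^{\beta}}$ is positive and decreasing for $\beta>0$, this series is at most $\int_{z_m-1}^{\infty}e^{-cq_m(y/m)^{\beta}}\,dy$, and the substitution $u=y/m$ turns the bound into $m\int_{(z_m-1)/m}^{\infty}e^{-cq_m u^{\beta}}\,du$. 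The hypothesis $z_m\geq m+1$ is exactly what forces the lower limit $(z_m-1)/m$ to be at least $1$, so it remains to prove $\int_{1}^{\infty}e^{-cq_m u^{\beta}}\,du=O(e^{-cq_m})$ uniformly over $q_m\geq 1$.

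For that estimate I would substitute $v=u^{\beta}-1$, i.e.\ $u=(1+v)^{1/\beta}$ and $du=\beta^{-1}(1+v)^{1/\beta-1}\,dv$, which isolates the desired exponential factor:
\[
\int_{1}^{\infty}e^{-cq_m u^{\beta}}\,du=\frac{e^{-cq_m}}{\beta}\int_{0}^{\infty}e^{-cq_m v}(1+v)^{1/\beta-1}\,dv.
\]
When $\beta\geq 1$ the factor $(1+v)^{1/\beta-1}$ is at most $1$ and the remaining integral equals $1/(cq_m)\leq 1/c$. When $0<\beta<1$, using $q_m\geq 1$ I bound $e^{-cq_m v}\leq e^{-cv}$, so the integral is at most $\int_0^\infty e^{-cv}(1+v)^{1/\beta-1}\,dv$, a finite constant depending only on $\beta$ and $c$. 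In either case $\int_1^\infty e^{-cq_m u^{\beta}}\,du\leq C_{\beta,c}\,e^{-cq_m}$, and chaining the inequalities gives $\P[\sup_{y\geq z_m}\xi_{ky}>cq_m m^{-\beta}]\leq C_{\beta,c}\,m\,e^{-cq_m}=O(me^{-cq_m})$, as claimed.

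The argument is essentially mechanical, and no genuine obstacle arises; the only step demanding care is the final scalar integral bound $\int_1^\infty e^{-cq_m u^{\beta}}\,du=O(e^{-cq_m})$, which must be kept uniform over \emph{all} $q_m\geq 1$ (not merely for large $q_m$) and which splits naturally into the superlinear case $\beta\geq 1$ and the sublinear case $0<\beta<1$ — this is precisely where the hypothesis $q_m\geq 1$ is used. The hypothesis $z_m\geq m+1$ enters once, to keep the rescaled integration range inside $[1,\infty)$, so that the left endpoint contributes the factor $e^{-cq_m}$; without it one would recover only a polynomially small bound rather than an exponential one.
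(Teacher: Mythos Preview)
Your proof is correct and follows essentially the same route as the paper: union bound, integral comparison, the rescaling $u=y/m$, and factoring out $e^{-cq_m}$ before bounding the remaining integral by a constant using $q_m\geq 1$. The only cosmetic difference is that the paper avoids your case split on $\beta$ by observing directly that $e^{-cq_m(u^\beta-1)}\leq e^{-c(u^\beta-1)}$ for $u\geq 1$ and $q_m\geq 1$, which already gives a finite $\beta$-dependent constant for all $\beta>0$.
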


Now we prove \prettyref{lemma:first-visit}. Let $E\triangleq \{T_1(\beta,1,\x)<\infty\}$. By symmetry, assume $q\triangleq x_1- x_2>0$. The event $E$ occurs if and only if $S_2(x_2,y) < S_1(x_1, y+1)$ for some $y\geq x_1$, i.e.
\begin{align*}
E &=\left\{\sup_{y \geq x_1}\Delta(\x, y+1,y)>0 \right\}\\
&= \left\{\sup_{y\geq x_1}\left[\xi_{1y} + \Delta(x_1,x_1,y,y)\right] > S_2(x_2,x_1)\right\}.
\end{align*}
Let $m = \|\x\|_1/2$. Note that $E \subset E_1 \cup E_2 \cup E_3$, where
\begin{align*}
E_1 &= \left\{\sup_{y\geq x_1} \Delta(x_1,x_1,y,y)  > \left(1-\frac{2}{3}\epsilon\right) q m^{-\beta}\right\},\\
E_2 &= \left\{\sup_{y\geq x_1} \xi_{1y}  >\frac{1}{3}\epsilon  q m^{-\beta} \right\},\\
E_3 & =  \left\{S_2(x_2,x_1)< \left(1-\frac{1}{3}\epsilon\right) q m^{-\beta}\right\}.
\end{align*}
Note that $q=\rho(\x)\sqrt{\|\x\|_1} = \Omega(\sqrt{m})$.  
 By \prettyref{eq:normal-approx-upper}, \prettyref{eq:Chernoff-lower} and \prettyref{eq:large-deviation}, we obtain
\[
\P[E] \leq \P[E_1]+\P[E_2] + \P[E_3] \leq 2\bar\Phi(c_1 \rho(\x)) + O(\|\x\|_1^{-\beta}).
\]
On the other hand, $E_4\subset E_5 \cup E$, where
\begin{align*}
E_4 &= \left\{\sup_{y\geq x_1} \Delta(x_1,x_1,y,y) > (1+\epsilon) q m^{-\beta}\right\},\\
E_5 & =  \left\{S_2(x_2,x_1) > (1+\epsilon) q m^{-\beta}\right\}.
\end{align*}
By \prettyref{eq:normal-approx-lower} and \prettyref{eq:Chernoff-upper}, we obtain 
\[
\P[E] \geq \P[E_4] - \P[E_5] \geq 2\bar\Phi(c_2 \rho(\x)) - O(\|\x\|_1^{-\beta}).
\]

\subsection{Proof of \prettyref{lemma:Chernoff}}
\label{app:Chernoff}
We first prove \prettyref{eq:Chernoff-lower}.  Using the standard argument of exponentiation followed by the application of the Markov inequality as in the proof of Chernoff bound, we obtain for $s > 0$,
\begin{align*}
P_1 &\triangleq\P\left\{S_k(z_m,y_m)  < (1-\epsilon) q_m m^{-\beta}\right\}\\
&\leq e^{s(1-\epsilon) q_m} M(-s m^\beta;\beta, z_m, y_m)\\
&\leq e^{s(1-\epsilon) q_m} \left[1+s \left(\frac{m}{y_m}\right)^\beta\right]^{-(y_m-z_m)},
\end{align*}
where $M$ is given by \eqref{eq:M}.
Let $\kappa_1 =1-\epsilon/2$. Since $y_m\sim m$, for large enough $m$, we have $\left(\frac{m}{y_m}\right)^{\beta} > \kappa_1$, and hence
\[
P_1 \leq e^{s(1-\epsilon) q_m} (1+s \kappa_1)^{-q_m}.
\]
Applying the following inequality to the last term,
\[
(1+u)^{-1} \leq 1-u+u^2 \leq e^{-u+u^2}, \quad \text{for } u\geq 0,
\]
we obtain
\[
P_1\leq e^{s(1-\epsilon) q_m} e^{q_m(-s \kappa_1 + s^2 \kappa_1^2)}=e^{-\frac{1}{2} \epsilon s q_m + \kappa_1^2 q_m s^2}.
\]
Since $\kappa_1\in(0,1)$, setting  $s=q_m^{-1/2}$ in the above inequality yields \prettyref{eq:Chernoff-lower}.

Now we prove \prettyref{eq:Chernoff-upper}. For large $m$ and $s\in (0, z_m^\beta/m^\beta)$, the standard argument of exponentiation followed by the application of the Markov inequality yields
\begin{align*}
P_2 &\triangleq \P\left\{S_k(z_m,y_m) > (1+\epsilon) q_m m^{-\beta}\right\}\\
&\leq e^{-s(1+\epsilon) q_m} M(sm^\beta;\beta, z_m,y_m)\\
&\leq e^{-s(1+\epsilon) q_m} \left[1-s \left(\frac{m}{z_m}\right)^\beta\right]^{-(y_m-z_m)},
\end{align*}
Let $\kappa_2 =1+\epsilon/2$. Since $z_m\sim m$, for large enough $m$, we have $\left(\frac{m}{z_m}\right)^{\beta} < \kappa_2$,
\[
P_2 \leq e^{-s(1+\epsilon) q_m} (1-s \kappa_2)^{-q_m}.
\]
Applying the following inequality to the last term,
\[
(1-u)^{-1} \leq 1+ u + 2u^2 \leq e^{u+2u^2}, \quad \text{for } u\in \left[0, \frac{1}{2}\right],
\]
we obtain
\[
P_2\leq e^{-s(1+\epsilon) q_m} e^{q_m (s \kappa_2 + 2s^2 \kappa_2^2)}=e^{-\frac{1}{2} \epsilon s q_m + 2 \kappa_2^2 q_m s^2}.
\]
Since $\kappa_2\in(1,3/2)$, setting $s=q_m^{-1/2}$ yields \prettyref{eq:Chernoff-upper}. Note that the conditions that $s\in (0, z_m^\beta/m^\beta)$ and $s \kappa_2\in [0,1/2]$ are satisfied by this particular choice of $s$ when $m$ is large enough.

\subsection{Proof of \prettyref{LEMMA:LARGE-DEVIATION}}
\label{app:large-deviation}

By the union bound, 
\begin{align*}
P&\triangleq \P\left\{\sup_{y\geq z_m} \xi_{ky}  > c  q_m m^{-\beta} \right\} \\
& \leq \sum_{y\geq z_m} \P[\xi_{ky} > c  q_m m^{-\beta}]\\
&= \sum_{y\geq z_m} e^{-c  q_m m^{-\beta}  y^{\beta} }.
\end{align*}
Since the summand is decreasing in $y$ and $z_m \geq m+1$, bounding the sum by the corresponding integral yields
\begin{align*}
P&\leq \int_{m}^\infty e^{-c q_m m^{-\beta} z^\beta} dz\\
&= \int_{1}^\infty m e^{-c q_m z^\beta} dz \\
&= m e^{-c q_m} \int_1^\infty e^{-cq_m  (z^\beta-1)} dz\\
&\leq m e^{-c q_m} \int_1^\infty e^{-c (z^\beta-1)} dz.
\end{align*}
Since the last integral is finite, $P= O(m e^{-c q_m})$.

\section{Proof of Uniform Convergence in \prettyref{LEMMA:DIFFERENCE}}
\label{app:uniform}

Throughout this section, the limiting process is understood to be $t\to\infty$. For a function $G(\x,\dots)$ of $\x$ and other variables, we will use the following notation,
\[
\|G(\x,\dots)\|_\x \triangleq \sup_{\x\in A_t(t^\gamma)} |G(\x,\dots)|.
\]

Recall that we have shown in \prettyref{subsec:duration-equal-proof}
that
\[
 \P[\X(t)=\x] = \frac{x_1^{-\beta} + x_2^{-\beta}}{2\pi}\int_{-\infty}^\infty  \psi(s;x_1+1,x_2+1) ds,
\]
which can be rewritten as
\[
\P[\X(t)=\x] = \frac{2^{\beta}}{\pi t^{\beta}}\int_{-\infty}^\infty \tilde \Psi(s;\beta,r,\x_0) Z(\x,t,s)ds,
\]
where $\tilde \Psi(s;\beta,r,\x_0)$ is defined in \eqref{eq:Psi-tilde}, and
\[
Z(\x,t,s) =\frac{1}{2}\left[\left(\frac{t}{2x_1}\right)^\beta + \left(\frac{t}{2x_2}\right)^\beta \right]
 \frac{\psi(s;x_1+1,x_2+1)}{\tilde \Psi(s;\beta,r,\x_0)}.
\]
Recalling the definition \eqref{eq:K} of $K(\beta,r,\x_0)$, we obtain
\begin{align}
&\quad\; \left\| t^\beta \P[\X(t)=\x] - 2^{\beta+1} K(\beta,r,\x_0)\right\|_\x \nonumber\\
&=\frac{2^{\beta}}{\pi} \left\| \int_{-\infty}^\infty \tilde \Psi(s;\beta,r,\x_0)[Z(\x,t,s) - 1]ds \right\|_\x \nonumber\\
&\leq \frac{2^{\beta}}{\pi} \int_{-\infty}^\infty  |\tilde \Psi(s;\beta,r,\x_0)|\cdot \|Z(\x,t,s)-1\|_\x\, ds.\label{eq:bound}
\end{align}
Since for all large $t$, the last integrand is upper bounded by
\[
2|\psi(s;x_{01}+1,x_{02}+1)|\leq 2\left(1+\frac{s^2}{\|x_0\|_1^{2\beta}}\right)^{-1},
\]
if we can show
\begin{equation}\label{eq:uniform}
\|Z(\x,t,s) - 1\|_\x \to 0, 
\end{equation}
then the uniform convergence claimed in \prettyref{lemma:difference} will follow from \eqref{eq:bound} and the Dominated Convergence Theorem.

Now we prove \eqref{eq:uniform}. Rewrite $Z$ in polar form as $Z(\x,t,s) = R(\x,t,s) e^{i\Theta(\x,t,s)}$, i.e.~$R(\x,t,s) = |Z(\x,t,s)|$ and $\Theta(\x,t,s) = \arg Z(\x,t,s)$. Note that
\[
 \frac{\psi(s;x_1+1,x_2+1)}{\tilde \Psi(s;\beta,r,\x_0)} = \prod_{j=x_1+1}^\infty \left(1-\frac{is}{j^{\beta}}\right)  \prod_{j=x_2+1}^\infty \left(1+\frac{is}{j^{\beta}}\right).
\]
Thus
\[
R(\x,t,s) \geq \frac{1}{2}\left[\left(\frac{t}{2x_1}\right)^\beta + \left(\frac{t}{2x_2}\right)^\beta \right],
\]
and
\[
R(\x,t,s) \leq \frac{1}{2}\left[\left(\frac{t}{2x_1}\right)^\beta + \left(\frac{t}{2x_2}\right)^\beta \right] \prod_{j=x_1\wedge x_2}^{\infty} \left(1+\frac{s^2}{j^{2\beta}}\right).
\]
For $\x\in A_t(t^\gamma)$ and large enough $t$, we have $|2x_{1,2}-t|\leq 2 t^{\gamma+1/2} < t$ and $x_1\wedge x_2 \geq \lfloor t/4\rfloor\geq 1$. It follows that
\begin{equation}\label{eq:R-lower}
R(\x,t,s) \geq \left(1+2 t^{\gamma-1/2}\right)^{-\beta},
\end{equation}
and
\begin{equation}\label{eq:R-upper}
R(\x,t,s) \leq \left(1-2 t^{\gamma-1/2}\right)^{-\beta}\prod_{j=\lfloor t/4\rfloor}^{\infty} \left(1+\frac{s^2}{j^{2\beta}}\right).
\end{equation}
Since for $\beta>1/2$,
\[
\prod_{j=\lfloor t/4\rfloor}^{\infty} \left(1+\frac{s^2}{j^{2\beta}}\right) \leq \exp\left(s^2 \sum_{j=\lfloor t/4\rfloor}^{\infty} j^{-2\beta}\right)\to 1,
\]
\eqref{eq:R-lower} and \eqref{eq:R-upper} imply that
\[
\|R(\x,t,s)-1\|_\x \to 0. 
\]

For the phase $\Theta(\x,t,s)$, note that
\[
\Theta(\x,t,s) = (-1)^{\ind{x_1>x_2}} \sum_{j=x_1\wedge x_2+1}^{x_1\vee x_2} \arctan\left(\frac{s}{j^{\beta}}\right),
\]
where $x_1\vee x_2 = \max\{x_1,x_2\}$.
For $\x\in A_t(t^\gamma)$ and large $t$, $|x_1-x_2| \leq 2 t^{\gamma+1/2}$ and $x_1\wedge x_2\geq t/4$. Thus
\begin{align*}
|\Theta(\x,t,s)| &\leq \sum_{j=x_1\wedge x_2+1}^{x_1\vee x_2} \frac{s}{j^{\beta}}\leq s 2^{2\beta+1} t^{\gamma+1/2-\beta}.
\end{align*}
Since $\gamma<\beta-1/2$, it follows that 
\[
\|\Theta(\x,t,s)\|_\x \to 0.
\]

Note that for $z\in \mathbb{C}$,
\begin{align*}
|z - 1|^2 & = (|z|-1)^2 + 4|z| \sin^2\left(\frac{\arg z}{2}\right)\\
& \leq (|z|-1)^2 + |z|\cdot |\arg z|^2.
\end{align*}
It follows that
\begin{align*}
\|Z(\x,t,s)-1\|_\x^2 & \leq \|R(\x,t,s)-1\|_\x^2 \\
&\quad + \|R(\x,t,s)\|_\x \cdot \|\Theta(\x,t,s)\|_\x^2\to 0,
\end{align*}
which completes the proof.

\end{document}